\def\dd{{\rm d}}
\def\ham{{\cal {H}}_\Lambda}
\def\tX{ {X}_1}
\def\ff{g} 
\def\fff{g}
\def\mm{\mu}
\def\hh{\gamma} 
\theoremstyle{plain}
\newtheorem{theorem}{Theorem}
\newtheorem{corollary}[theorem]{Corollary}
\newtheorem{proposition}[theorem]{Proposition}
\newtheorem{lemma}[theorem]{Lemma}
\theoremstyle{definition}
\newtheorem{definition}[theorem]{Definition}
\newtheorem{example}[theorem]{Example}
\numberwithin{equation}{section}
\numberwithin{theorem}{section}
\begin{document}

\centerline{\Large {\bf    Lie--Hamilton systems on the plane:}}
\vskip 0.25cm
 \centerline{\Large {\bf properties, classification and  applications }}
\vskip 1cm
\centerline{ \sc{A. Ballesteros$^{{a}}$, A. Blasco$^{{a}}$, F.J. Herranz$^{{a}}$, J. de Lucas$^{{b}}$ and C. Sard\'on$^{c}$}}
\vskip 0.8cm
\centerline{$^{{a)}}$Department of Physics,  University of Burgos}
\smallskip
\centerline{09001, Burgos, Spain}

\medskip

\centerline{$^{{b)}}$Department of Mathematical Methods in Physics, University of Warsaw}
\smallskip
\centerline{ul. Ho\.za 74, 00-682, Warszawa, Poland}
\smallskip
\medskip

\centerline{$^{c)}$Fundamental Physics Department, University of Salamanca}
\smallskip
\centerline{ Plza. de la Merced s/n, 37008, Salamanca, Spain}
\smallskip
\medskip

\vskip 1cm

\begin{abstract}
We study Lie--Hamilton systems on the plane, i.e. systems of first-order differential equations describing the integral curves of a $t$-dependent vector field taking values in a finite-dimensional real Lie algebra of planar Hamiltonian vector fields with respect to a Poisson structure. We start with the 
  local classification of finite-dimensional real Lie algebras of vector fields on the plane obtained in [{A.~Gonz{\'a}lez-L{\'o}pez, N.~Kamran and P.J.~Olver}, {\em Proc. London Math. Soc.} {\bf 64}, 339 {(1992)}]
and we interpret their results as a  local classification of Lie systems.
 Moreover, by determining which of these real Lie algebras consist of Hamiltonian vector fields with respect to a Poisson structure, we provide the complete local classification of Lie--Hamilton systems on the plane. We present and study through our results new Lie--Hamilton systems of interest which are used to investigate relevant non-autonomous differential equations, e.g. we get explicit local diffeomorphisms between such systems. In particular, the  Milne--Pinney, second-order  Kummer--Schwarz, complex Riccati   and  Buchdahl equations as well as some  Lotka--Volterra and nonlinear biomathematical models are analysed from this Lie--Hamilton approach.
\end{abstract}

\bigskip

\bigskip\noindent
 {{\bf PACS:} 02.20.Sv, 45.10.Na, 02.40.Yy, 02.40.Hw}

\medskip\noindent
 {{\bf MSC 2000:} 34A26 (Primary), 17B81, 34A34, 53Z05
(Secondary)}

\medskip\noindent
 {{\bf Keywords:} Buchdahl equation, complex Riccati equation, Hamiltonian vector field, Lie--Hamilton algebra, Lie system, Lotka--Volterra model, Milne--Pinney equation, non-autonomous planar vector field, Poisson structure, second-order Kummer--Schwarz equation, superposition rule, symplectic structure, Vessiot--Guldberg Lie algebra.}

\newpage


\section{Introduction}
 The relevance of time--dependent differential equations is undoubtable both from the mathematical viewpoint and also from their overwhelming applications. In this work we will get a deeper insight into a particular class of systems of differential equations, the so-called Lie systems,  which have drawn some attention during the past recent years since, for instance, the general solution for a Lie system can be obtained in terms of a superposition rule (see \cite{Dissertationes} and references therein). 

More explicitly, a {\it Lie system} is a system of first-order
differential equations describing the integral curves of
a $t$-dependent vector field taking values in a finite-dimensional real Lie algebra of vector fields, the {\it Vessiot--Guldberg Lie algebra} \cite{Dissertationes,LS}. This Lie algebra determines the main properties of Lie systems, e.g. Lie systems related to a solvable Vessiot--Guldberg Lie algebra of right-invariant vector fields on a Lie group are integrable \cite{CarRamGra}. Although Lie systems are a quite restricted class of differential equations \cite{Dissertationes,In72}, very recurrent systems appearing in the literature, e.g. most types of Riccati and Kummer--Schwarz equations, can {be studied through these systems }\cite{PW,CGL11}. In this paper, we aim to study {\it Lie--Hamilton  systems}~\cite{Dissertationes,CLS12Ham,BCHLS13Ham,ADR12}, which form a relevant  subclass of Lie systems. Our concern in them relies on their frequent appearance in classical mechanics and their special characteristics: integrability, symmetries and superposition rules \cite{BCHLS13Ham,ADR12,LS12,CGM00}.

A natural problem in the theory of Lie systems is the classification of Lie systems on a fixed manifold, which amounts to classifying finite-dimensional Lie algebras of vector fields on it.
Lie accomplished the local classification of finite-dimensional real Lie algebras of  vector fields on the real line. More precisely, he showed that each such a Lie algebra is locally diffeomorphic to a Lie subalgebra of $\langle \partial_x,x\partial_x,x^2\partial_x\rangle \simeq \mathfrak{sl}(2)$ on a neighborhood of each {\it generic point} $x_0$ of the Lie algebra~\cite{GKP92}. He also performed the local classification of finite-dimensional real Lie algebras of planar vector fields and started the study of the analogous problem on $\mathbb{C}^3$~\cite{1880,HA75}.

Lie's local classification on the plane presented some unclear points which were misunderstood by several authors during the following decades.
Later on, {A.~Gonz{\'a}lez-L{\'o}pez, N.~Kamran and P.J.~Olver} retook the problem and provided a clearer insight in \cite{GKP92}. Precisely, they proved that every non-zero Lie algebra of vector fields on the plane is locally diffeomorphic around each generic point to one of the finite-dimensional real Lie algebras given in Table \ref{table1} at the end of the work. For simplicity, we refer to this result as the {\it GKO classification}.

As every Vessiot--Guldberg Lie algebra on the plane is locally diffeomorphic to a Lie algebra of the GKO classification,  every Lie system on the plane is locally diffeomorphic to a Lie system taking values in a Vessiot--Guldberg Lie algebra within the GKO classification. So, the local properties of all Lie systems on the plane can be studied through the Lie systems related to the GKO classification. As a consequence, we say that the GKO classification gives the local classification of Lie systems on the plane.

The {\it minimal Lie algebra} of a Lie system is its smallest Vessiot--Guldberg Lie algebra \cite{Dissertationes}.
In this paper we analyse the general properties of minimal Lie algebras of Lie--Hamilton systems on the plane. We demonstrate that they are, around generic points, Lie algebras of Hamiltonian vector fields with respect to a symplectic structure. We also provide several results allowing us to determine their algebraic structure.

It is known that each Lie--Hamilton system on a manifold $N$ gives rise to a $t$-dependent Hamiltonian $h:(t,x)\in \mathbb{R}\times N\mapsto h_t(x)\in N$ whose functions $\{h_t\}_{t\in\mathbb{R}}$ generate a finite-dimensional Lie algebra of functions with a Lie bracket induced by certain Poisson structure: a {\it Lie--Hamilton algebra} \cite{CLS12Ham}. We obtain some findings concerning
the structure of the different Lie--Hamilton algebras of a Lie--Hamilton system. 

Based on the GKO classification and our previous achievements, we prove that a Lie algebra of Hamiltonian vector fields on the plane (with respect to a certain Poisson bivector) is locally diffeomorphic around a generic point to one of the twelve Lie algebras of Table \ref{table3}.
In this manner, we obtain the local classification of finite-dimensional Lie algebras of Hamiltonian vector fields on the plane. Subsequently, we provide the local classification of Lie--Hamilton algebras on the plane, namely we prove that the restriction of such a Lie algebra around a generic point (of the associated Lie algebra of Hamiltonian vector fields) is isomorphic to one of the Lie algebras indicated in Table \ref{table3}.  This is relevant to the theory of Lie--Hamilton systems because, for instance, the superposition rules and constants of motion of such systems can be obtained by applying the Poisson coalgebra approach  to such Lie algebras~\cite{CLS12Ham,BCHLS13Ham}.

Next, we detail some applications of our findings. By means of the GKO classification, we explain  that 
Milne--Pinney equations~\cite{PW, SIGMA} actually comprise {\em three} different systems (one of them  is  the harmonic oscillator with a $t$-dependent frequency). Likewise, we show   that second-order Kummer--Schwarz equations~\cite{CGL11}
 also cover {\em three} different systems and each of them is related to one of the Milne--Pinney equations  through a local diffeomorphism. 
Moreover, certain complex Riccati and Bernoulli equations \cite{Ca97,FMR10, Eg07}  are shown to be locally diffeomorphic   to only  one of the above three systems. This retrieves known results about second-order Kummer--Schwarz and Milne--Pinney equations and describes new relations between these systems and complex Riccati equations. Furthermore, we show how Buchdahl equations~\cite{Bu64, CSL05, CN10}, certain Lokta--Volterra systems~\cite{Tr96, JHL05, LlibreValls2} as well as some biological models  \cite{EK05} can be analysed through Lie--Hamilton systems. Indeed, we think that our techniques could be useful in different contexts. 

The structure of this paper goes as follows. Section 2 is devoted to introducing the fundamental definitions employed throughout the paper. In Section 3, we survey some basic facts about  the GKO classification. In Sections 4 and 5 we describe some new results on minimal Lie algebras and Lie--Hamilton algebras of functions on the plane. These two sections contain the necessary  theory to provide the local classification of Lie--Hamilton systems and their Lie--Hamilton algebras in Section 6.     Our main achievements are listed in Table \ref{table3}. To illustrate our results, we investigate in Section 7 some Lie--Hamilton $\mathfrak{sl}(2)$-systems on the plane, meanwhile applications to biological models are addressed in Section 8.
We conclude in Section 9 with a brief summary of the results here presented, together with some comments on possible future research work on the subject.


\section{Preliminaries}\label{BLHS}

Let us detail the notation and the most basic results to be used in the paper (see~\cite{CGL11,CLS12Ham, BCHLS13Ham,CGM00,CGM07,CGL10}
for details). We mostly assume mathematical objects to be smooth, real, and globally defined. This simplifies our presentation and is helpful in order to highlight its key points.

A Lie algebra is a  pair $(V,[\cdot,\cdot])$, where $V$ stands for a
real linear space equipped with a Lie
 bracket $[\cdot\,,\cdot]:V\times V\rightarrow V$. We define ${\rm
Lie}(\mathcal{B},V,[\cdot,\cdot])$ to be the smallest Lie subalgebra of $(V,[\cdot,\cdot])$ containing $\mathcal{B}$. When its meaning is clear, we write $V$ and  ${\rm Lie}(\mathcal{B})$ instead of $(V,[\cdot,\cdot])$ and ${\rm
Lie}(\mathcal{B},V,[\cdot,\cdot])$, respectively. Given two subsets $\mathcal{A},\mathcal{B}\subset V$, we write $[\mathcal{A},\mathcal{B}]$ for the linear space spanned by the Lie brackets between elements of $\mathcal{A}$ and $\mathcal{B}$. Given a Lie algebra $V$ of vector fields on a manifold $N$ and an open subset $U\subset N$, we define $V|_U$ to be the space of restrictions of the elements of $V$ to $U$. Note that $V|_U$ is still a Lie algebra of vector fields.

\begin{definition}
A {\it $t$-dependent vector field} on a manifold
$N$ is a mapping $X:\mathbb{R}\times N\rightarrow
TN$ such that $\tau\circ X=\pi$ for $\pi:(t,x)\in \mathbb{R}\times N\mapsto
x\in N$ and $\tau:
TN\rightarrow N$ being the tangent bundle projection related to $N$.
\end{definition}

Observe that every
$t$-dependent vector field $X$ gives rise to a family $\{X_t\}_{t\in\mathbb{R}}$ of
standard vector fields $X_t:x\in N\mapsto X(t,x)\in TN$ and vice
versa \cite{Dissertationes}.

\begin{definition}  The  {\it minimal Lie algebra} of a $t$-dependent vector field $X$ on $N$ is the smallest real Lie algebra, let us say
$V^X$, containing $\{X_t\}_{t\in\mathbb{R}}$, i.e. $V^X={\rm Lie}(\{X_t\}_{t\in\mathbb{R}},[\cdot,\cdot])$.
\end{definition}

\begin{definition}
An {\it integral curve} of a $t$-dependent vector field $X$ is an integral curve
$\gamma:\mathbb{R}\rightarrow\mathbb{R}\times N$ of its {\it suspension}, namely the vector field $\bar{ X}=\partial_t+X(t,x)$ on $\mathbb{R}\times N$.
\end{definition}

The integral curves of $X$ of the form $\gamma:t\in\mathbb{R}\rightarrow (t,x(t))\in\mathbb{R}\times N$ are such that $x(t)$ is a particular solution of the system of first-order differential equations in normal form
\begin{equation}\label{Sys}\nonumber
\frac{{\rm d} x}{{\rm d}  t}=X(t,x),
\end{equation}
the referred to as {\it associated system} of $X$. Conversely, given a system of first-order differential equations in normal form, we can define a $t$-dependent vector field $X$ whose integral curves of the form $t\mapsto (t,x(t))$ are such that $x(t)$ is a particular solution of such a system. This justifies to write $X$ for both a
$t$-dependent vector field and its associated system \cite{Dissertationes}.

\begin{definition}
A {\it Lie system} is a system $X$ whose $V^X$ is finite-dimensional.
\end{definition}

\begin{example} Consider the system of differential equations
\begin{equation}\label{Riccati2}
\frac{{\rm d} x}{{\rm d} t}=a_0(t)+a_1(t)x+a_2(t)(x^2-y^2),\qquad \frac{{\rm d} y}{{\rm d} t}=a_1(t)y+a_2(t)2xy,
\end{equation}
with $a_0(t),a_1(t),a_2(t)$ being arbitrary $t$-dependent real functions. This system is a particular type of planar Riccati equation briefly studied in \cite{Eg07}. By writing $z=x+{\rm i}y$, we find that (\ref{Riccati2}) is equivalent to
$$
\frac{{\rm d} z}{{\rm d} t}=a_0(t)+a_1(t)z+a_2(t)z^2,\qquad z\in\mathbb{C},
$$
which is a  particular type of complex Riccati equations, whose study has attracted some attention.  Particular solutions of periodic equations of this type have been investigated in \cite{Or12,Ca97} and other special types of complex Riccati equations appear in \cite{FMR10}.

Every particular solution $(x(t),y(t))$ of (\ref{Riccati2}) obeying that $y(t_0)=0$ at $t_0\in\mathbb{R}$ satisfies that $y(t)=0$ for every $t\in\mathbb{R}$. In such a case, $x(t)$ is a particular solution of a {\em real} Riccati equation \cite{PW}. This suggests us to restrict ourselves to studying (\ref{Riccati2}) on
$
\mathbb{R}^2_{y\neq 0}=\{(x,y)\,|\, y \neq 0\}\subset \mathbb{R}^2
$.

Let us show that (\ref{Riccati2}) on $\mathbb{R}^2_{y\neq 0}$ is a Lie system. This is related to the $t$-dependent vector field $X_t=a_0(t)X_1+a_1(t)X_2+a_2(t)X_3$, where
\begin{equation}
X_1= \frac{\partial}{\partial x},\qquad X_2= x\frac{\partial}{\partial x}+y\frac{\partial}{\partial y} ,\qquad X_3= (x^2-y^2)\frac{\partial}{\partial x}+2xy\frac{\partial}{\partial y}
\label{vectRiccati2}
\end{equation}
span a Vessiot--Guldberg real Lie algebra $V\simeq \mathfrak{sl}(2)$ (see ${\rm P}_2$ in Table \ref{table1}). Hence, $\{X_t\}_{t\in\mathbb{R}}\subset V^X\subset V$ and $V^X$ is finite-dimensional, which makes $X$ into a Lie system. It is worth noting that, to the best of our knowledge, this is the first time that it has been proved that complex Riccati equations with real coefficients and planar Riccati equations can be studied through Lie systems. Moreover, it can also  be demonstrated that complex Riccati equations with $t$-dependent complex coefficients can be investigated with a Lie system possessing a Vessiot--Guldberg Lie algebra isomorphic to P$_7\simeq \mathfrak{so}(3,1)$.
\end{example}

\begin{definition} A system $X$ is said to be a {\it Lie--Hamilton system} if
$V^X$ is a real finite-dimensional Lie algebra of Hamiltonian vector fields with respect to some Poisson bivector.
\end{definition}

\begin{definition} A {\it Lie--Hamiltonian structure} is a triple
$(N,\Lambda,h)$, where $\Lambda$ is a Poisson bivector and  $h:
(t,x)\in \mathbb{R}\times N\mapsto h_t(x)=h(t,x)\in  \mathbb{R}$ is such that
 $\ham\equiv{\rm Lie}(\{h_t\}_{t\in\mathbb{R}},\{\cdot,\cdot\}_\Lambda)$, with  $\{\cdot,\cdot\}_\Lambda$ being the Lie bracket induced by $\Lambda$ \cite{IV}, is
finite-dimensional.
\end{definition}

 \begin{theorem}\label{FunHam} A system $X$ on $N$ is a {\it Lie--Hamilton system} if and only
 if there exists a Lie--Hamiltonian structure $(N,\Lambda,h)$ such that each $X_t$, with $t\in\mathbb{R}$, is a Hamiltonian vector field for the function
$h_t$. In this case, we call $(\ham,\{\cdot,\cdot\}_\Lambda)$ a {\it Lie--Hamilton} algebra of $X$.
 \end{theorem}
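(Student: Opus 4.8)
The plan is to exploit the standard Lie-algebra morphism attached to a Poisson bivector. For a fixed Poisson bivector $\Lambda$ on $N$, each $f\in C^\infty(N)$ determines a Hamiltonian vector field $X_f=\widehat{\Lambda}(\mathrm{d} f)$, and the assignment $\varphi:f\mapsto X_f$ is a morphism of Lie algebras from $(C^\infty(N),\{\cdot,\cdot\}_\Lambda)$ into the vector fields on $N$, i.e. $[X_f,X_g]=X_{\{f,g\}_\Lambda}$ (up to a global sign fixed by conventions). By definition its image is precisely the space of Hamiltonian vector fields relative to $\Lambda$, while its kernel is the space of Casimir functions, those $\xi$ with $\{\xi,\cdot\}_\Lambda=0$. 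Both implications will be read off from properties of $\varphi$.

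The implication from a Lie--Hamiltonian structure to a Lie--Hamilton system is the easy one. Assuming $(N,\Lambda,h)$ with each $X_t=X_{h_t}$, I would note that $\ham=\mathrm{Lie}(\{h_t\}_{t\in\mathbb{R}},\{\cdot,\cdot\}_\Lambda)$ is finite-dimensional by hypothesis, so its image $\varphi(\ham)$ is a finite-dimensional Lie subalgebra of the Hamiltonian vector fields (the image of a finite-dimensional Lie algebra under the linear morphism $\varphi$). Since $X_t=\varphi(h_t)\in\varphi(\ham)$ for all $t$, the minimal Lie algebra $V^X=\mathrm{Lie}(\{X_t\}_{t\in\mathbb{R}})$ is contained in $\varphi(\ham)$; hence $V^X$ is finite-dimensional and consists of Hamiltonian vector fields, so $X$ is a Lie--Hamilton system.

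For the converse I would start from a Lie--Hamilton system $X$, whose $V^X$ is a finite-dimensional Lie algebra of Hamiltonian vector fields for some $\Lambda$. Fix a basis $Y_1,\dots,Y_r$ of $V^X$ with structure constants $[Y_i,Y_j]=\sum_k c_{ij}^k Y_k$, and choose Hamiltonian functions $f_i$ with $X_{f_i}=Y_i$, which exist since each $Y_i$ is Hamiltonian. Writing each $X_t=\sum_i b_i(t)Y_i$ in this basis, the natural candidate is $h_t=\sum_i b_i(t)f_i$, so that $X_{h_t}=\sum_i b_i(t)Y_i=X_t$ by linearity of $\varphi$. It then remains to check that $\ham=\mathrm{Lie}(\{h_t\})$ is finite-dimensional.

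The crux, and the point I expect to be the main obstacle, is exactly this finiteness, because Hamiltonian functions are determined only up to Casimirs, and a careless choice could \emph{a priori} generate an infinite-dimensional Lie algebra of functions. The resolution is to compute $\{f_i,f_j\}_\Lambda$: applying $\varphi$ gives $X_{\{f_i,f_j\}_\Lambda}=[Y_i,Y_j]=\sum_k c_{ij}^k Y_k=X_{\sum_k c_{ij}^k f_k}$, so $\xi_{ij}:=\{f_i,f_j\}_\Lambda-\sum_k c_{ij}^k f_k$ lies in $\ker\varphi$, i.e. is a Casimir. Since Casimirs bracket trivially with everything, the linear span $\langle f_1,\dots,f_r,\xi_{ij}\ (i<j)\rangle$ is already closed under $\{\cdot,\cdot\}_\Lambda$: brackets of the $f_i$ reproduce linear combinations of the $f_k$ and the $\xi_{ij}$, while $\{f_k,\xi_{ij}\}_\Lambda=0$ and $\{\xi_{ij},\xi_{kl}\}_\Lambda=0$. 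Hence $\mathrm{Lie}(\{f_i\})$ is finite-dimensional, and since $\{h_t\}\subset\langle f_1,\dots,f_r\rangle$ we obtain that $\ham\subseteq\mathrm{Lie}(\{f_i\})$ is finite-dimensional. Thus $(N,\Lambda,h)$ is a Lie--Hamiltonian structure with each $X_t$ Hamiltonian for $h_t$, and $(\ham,\{\cdot,\cdot\}_\Lambda)$ is a Lie--Hamilton algebra of $X$, completing the equivalence. The finitely many Casimir terms $\xi_{ij}$ merely realise $\mathrm{Lie}(\{f_i\})$ as a central extension of $V^X$, which is harmless for the finiteness we need.
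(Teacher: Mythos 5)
The paper itself contains no proof of this theorem: it is imported as a known result from the Lie--Hamilton literature (it is the central characterisation proved in \cite{CLS12Ham}), so there is no internal argument to compare against line by line. Your proof is correct, and it is essentially the standard argument for this equivalence. The forward implication is handled as one expects, via the (anti-)morphism property of $\varphi:f\mapsto X_f$, which gives $V^X\subset\varphi(\mathcal{H}_\Lambda)$ and hence finite-dimensionality together with the Hamiltonian character of every element of $V^X$. In the converse direction you correctly isolate the only genuine difficulty --- Hamiltonian functions are determined only up to Casimirs, so finite-dimensionality of $\mathcal{H}_\Lambda$ is not automatic from a naive choice of the $f_i$ --- and you resolve it in the right way: the defects $\xi_{ij}=\{f_i,f_j\}_\Lambda-\sum_k c_{ij}^k f_k$ lie in $\ker\varphi$, hence are Casimirs, so the span $\langle f_1,\ldots,f_r,\xi_{ij}\rangle$ is already a finite-dimensional Lie algebra (a central-type extension of $V^X$) containing every $h_t$, which bounds $\mathcal{H}_\Lambda$. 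Two minor points are glossed over, consistently with the paper's standing conventions that objects are smooth and globally defined: first, each basis element $Y_i$ must admit a globally defined Hamiltonian function, which is precisely what the paper's definition of Lie--Hamilton system grants; second, the coefficients $b_i(t)$ obtained by expanding $X_t$ in the basis should be regular enough in $t$ for $h(t,x)=\sum_i b_i(t)f_i(x)$ to serve as the $t$-dependent Hamiltonian, a technicality the definition of Lie--Hamiltonian structure does not police.
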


Lie--Hamilton algebras play a relevant r\^ole in studying Lie--Hamilton systems, e.g. they are employed to calculate superposition rules and constants of motion for these systems more easily than by standard methods \cite{BCHLS13Ham}.

\begin{example} Let us show that planar Riccati equations (\ref{Riccati2}) with $V^X\simeq \mathfrak{sl}(2)$ are Lie--Hamilton systems and derive a Lie--Hamiltonian structure and its associated Lie--Hamilton algebra. We start by searching a symplectic form, let us say $\omega=f(x,y){\rm d}x\wedge \dd y$, turning $V^X$ into a Lie algebra of Hamiltonian vector fields with respect to it. To ensure that $X_1,X_2$ and $X_3$ given by (\ref{vectRiccati2}) are locally Hamiltonian vector fields with respect to $\omega$, we impose $\mathcal{L}_{X_i}\omega=0$ ($i=1,2,3$), where $\mathcal{L}_{X_i}\omega$ stands for the Lie derivative of $\omega$ relative to $X_i$. In coordinates, these conditions read
$$
\frac{\partial f}{\partial x}=0,\qquad x\frac{\partial f}{\partial x}+y\frac{\partial f}{\partial y}+2 f=0,\qquad
(x^2-y^2)\frac{\partial f}{\partial x}+2xy\frac{\partial f}{\partial y}+4xf=0.
$$
From the first equation $f=f(y)$. Using this in the second equation, we obtain a particular solution $f=y^{-2}$  (the third one is therefore automatically fulfilled), which leads to a closed and non-degenerate two-form on $\mathbb{R}^2_{y\neq 0}$, namely
\begin{equation}
\omega=\frac{{\rm d} x \wedge {\rm d} y}{y^2}  .
\label{aa}
\end{equation}
Using the relation $\iota_{X}\omega={\rm d}h$ among a Hamiltonian vector field $X$ and one of its corresponding Hamiltonian functions $h$, we observe that $X_1,X_2$ and $X_3$ are Hamiltonian vector fields with Hamiltonian functions
 \begin{equation}
h_1=-\frac 1y,\qquad h_2=-\frac xy,\qquad h_3=-\frac{x^2+y^2}{y},
\label{ab}
\end{equation}
 respectively. Since $X_1$, $X_2$ and $X_3$ are a basis for $V^X$, every element of $V^X$ is Hamiltonian with respect to $\omega$. If $\{\cdot,\cdot\}_\omega:C^\infty(\mathbb{R}^2_{y\neq 0})\times C^\infty(\mathbb{R}^2_{y\neq 0})\rightarrow C^\infty(\mathbb{R}^2_{
y\neq 0})$ stands for the Poisson bracket induced by $\omega$ (see \cite{IV}), then
\begin{equation}
\label{sl2Rh}
\{h_1,h_2\}_\omega=-h_1,\qquad \{h_1,h_3\}_\omega=-2h_2,\qquad \{h_2,h_3\}_\omega=-h_3.
\end{equation}
Hence, $\left(\mathbb{R}^2_{y\neq 0},\omega,h=a_0(t)h_1+a_1(t)h_2+a_2(t)h_3\right)$ is a Lie--Hamiltonian structure for $X$
and, as $V^X\simeq \mathfrak{sl}(2)$, then $(\mathcal{H}_\Lambda,\{\cdot,\cdot\}_\omega)\equiv (\langle h_1,h_2,h_3\rangle,\{\cdot,\cdot\}_\omega)$ is a Lie--Hamilton algebra for $X$ isomorphic to $\mathfrak{sl}(2)$.
\end{example}


\section{The  GKO  classification of real Lie algebras of  vector fields on the plane}

Let us summarise the main aspects and notions related to the GKO classification.

\begin{definition}\label{GenPon} Given a finite-dimensional Lie algebra $V$ of vector fields on a manifold $N$, we say that $\xi_0\in N$ is a {\it generic point} of $V$ when the rank of the generalised distribution
$$
\mathcal{D}^V_\xi=\{X(\xi)\mid X\in V\}\subset T_\xi N,\qquad \xi\in N,
$$ i.e. the function $r^V(\xi)=\dim \mathcal{D}^V_\xi$, is locally constant around $\xi_0$. We call {\it generic domain} or simply {\it domain} of $V$ the set of generic points of $V$.
\end{definition}

\begin{example} Consider the Lie algebra ${\rm I}_4=\langle X_1,X_2,X_3\rangle$ of vector fields on $\mathbb{R}^2$ detailed in Table \ref{table1}. By using the  expressions of $X_1$, $X_2$ and $X_3$ in coordinates, we see that $r^{{\rm I}_4}(x,y)$ equals the rank of the matrix
$$
\left(\begin{array}{ccc}
1&x&x^2\\
1&y&y^2\\
\end{array}
\right),
$$
which is two for every $(x,y)\in\mathbb{R}^2$ except for points with $y-x=0$, where the rank is one. So, the domain of I$_4$ is $
\mathbb{R}^2_{x\neq y}=\{(x,y)\,|\, x \neq y\}\subset \mathbb{R}^2
$.
\end{example}

In order to prove some results of this work, we have derived the domains of all the Lie algebras of the GKO classification. Since this is a rather trivial calculation, we do not describe it here and we just detail our results in Table \ref{table1}.

\begin{definition}
A finite-dimensional real Lie algebra $V$  of vector fields on an open subset $U\subset \mathbb{R}^2$ is {\it imprimitive} when there exists a one-dimensional distribution $\mathcal{D}$ on $\mathbb{R}^2$ invariant under the action of $V$ by Lie brackets, i.e.  for every  $X\in V$ and every vector field $Y$ taking values in $\mathcal{D}$, we have that $[X,Y]$ takes values in $\mathcal{D}$. Otherwise, $V$ is called   {\em primitive}.
\end{definition}

\begin{example}
Recall that ${\rm I}_4$ is spanned by the vector fields $X_1,X_2$ and $X_3$ given in Table \ref{table1}. If we define $\mathcal{D}$ to be the distribution  on $\mathbb{R}^2$ generated
by $Y=\partial_x$, we see that
$$
[X_1,Y]=0,\qquad [X_2,Y]=-Y,\qquad [X_3,Y]=-2xY.
$$
We infer from this that $\mathcal{D}$ is a one-dimensional distribution invariant under the action of I$_4$. Hence, I$_4$ is an imprimitive Lie algebra of vector fields.
\end{example}

Apart from this first division into primitive/imprimitive Lie algebras, GKO subdivided the primitive finite-dimensional Lie algebras into eight families (P$_i$) and the imprimitive ones into twenty classes (I$_i$). Notice that several of  them  depend on some parameters (such as P$_1$, I$_8$ and I$_{16}$) and that the same Lie algebra structure may appear several times, e.g. I$_3$--I$_5$ and I$_6$, I$_7$, although such Lie algebras are not locally diffeomorphic among themselves. Some of the Lie algebras of Table \ref{table1} can be considered as Lie subalgebras of other classes, e.g. ${\rm P}_6$ is a Lie subalgebra of ${\rm P}_8$. A non-exhaustive list of relations of inclusion among elements of the different Lie algebras of the GKO classification is displayed in Table \ref{table2}.
This list fulfils many details not given in \cite{GKP92} and, if we study a Lie algebra of vector fields which does not consists of Hamiltonian vector fields, we can use them to find which of their subalgebras do. 

For our further purposes, we stress that the class I$_{14}$  contains Lie algebras which are not isomorphic  depending on the choice of the functions $\eta_j$. For instance, if we take $r=1$ and $\eta_1(x)=1$, then we have an instance of  I$_{14}^{r=1}$ given by $\langle X_1= \partial_x, X_2=\partial_y\rangle\simeq \mathbb R^2$. Meanwhile, if we set $r=1$ and $\eta_1(x)={\rm e}^x$, then we get the Lie algebra of I$_{14}^{r=1}$ of the form $\langle X_1= \partial_x, X_2={\rm e}^x\partial_y\rangle\simeq {\mathfrak h}_2$. Notice that a similar fact also appears  within I$_{15}$.

\begin{table}[t] {\footnotesize
 \noindent
\caption{{\small The GKO classification of  the $8+20$  finite-dimensional real Lie algebras of vector fields on the plane and their most relevant
characteristics. The first (one or two) vector fields which are written between brackets form a modular generating system.
The functions $\xi_1(x),\ldots,\xi_r(x)$ and $1$ are linearly independent and the functions $\eta_1(x),\ldots,\eta_r(x)$ form a basis of solutions for a system of $r$ linear differential equations in normal form with constant coefficients \cite[pp.~470--471]{HA75}. Finally, $\mathfrak{g}=\mathfrak{g}_1\ltimes \mathfrak{g}_2$ means that $\mathfrak{g}$ is the direct sum (as linear subspaces) of $\mathfrak{g}_1$ and $\mathfrak{g}_2$, with $\mathfrak{g}_2$ being an ideal of $\mathfrak{g}$.}}
\label{table1}
\medskip
\noindent\hfill
 \begin{tabular}{ p{.5cm} p{2.3cm}    p{10.5cm} l}
\hline
&  &\\[-1.9ex]
\#&Primitive & Basis of vector fields $X_i$ &Domain\\[+1.0ex]
\hline
 &  &\\[-1.9ex]
P$_1$&$A_\alpha\simeq \mathbb{R}\ltimes \mathbb{R}^2$ & $ \{ { {\partial_x} ,    {\partial_y} \},   \alpha(x\partial_x + y\partial_y)  +  y\partial_x - x\partial_y},\quad \ \alpha\geq 0$&$\mathbb{R}^2$ \\[+1.0ex]
P$_2$&$\mathfrak{sl}(2)$ & $\{ {\partial_x},   {x\partial_x  +  y\partial_y} \},   (x^2  -  y^2)\partial_x  +  2xy\partial_y$&$\mathbb{R}^2_{y\neq 0}$\\[+1.0ex]
P$_3$&$\mathfrak{so}(3)$ &${     \{{ y\partial_x  -  x\partial _y},     { (1  +  x^2  -  y^2)\partial_x  +  2xy\partial_y} \},   2xy\partial_x  +  (1  +  y^2  -  x^2)\partial_y}$&$\mathbb{R}^2$\\[+1.0ex]
P$_4$&$\mathbb{R}^2\ltimes\mathbb{R}^2$ &$ \{ {\partial_x},   {\partial_y}\},  x\partial_x + y\partial_y,   y\partial_x - x\partial_y$&$\mathbb{R}^2$\\[+1.0ex]
P$_5$&$\mathfrak{sl}(2 )\ltimes\mathbb{R}^2$ &${ \{ {\partial_x},   {\partial_y}\},  x\partial_x - y\partial_y,  y\partial_x,  x\partial_y}$&$\mathbb{R}^2$\\[+1.0ex]
P$_6$&$\mathfrak{gl}(2 )\ltimes\mathbb{R}^2$ &${ \{{\partial_x},    {\partial_y}\},   x\partial_x,   y\partial_x,   x\partial_y,   y\partial_y}$&$\mathbb{R}^2$\\[+1.0ex]
P$_7$&$\mathfrak{so}(3,1)$ &${ \{ {\partial_x},   {\partial_y}\},   x\partial_x\!+\! y\partial_y,   y\partial_x \!-\! x\partial_y,   (x^2 \!-\! y^2)\partial_x \!+\! 2xy\partial_y,  2xy\partial_x \!+\! (y^2\!-\!x^2)\partial_y}$  &$\mathbb{R}^2$\\[+1.0ex]
P$_8$&$\mathfrak{sl}(3 )$ &${ \{ {\partial_x},    {\partial_y}\},   x\partial_x,   y\partial_x,   x\partial_y,   y\partial_y,   x^2\partial_x + xy\partial_y,   xy\partial_x  +  y^2\partial_y}$&$\mathbb{R}^2$\\[+1.5ex]
\hline
&  &\\[-1.5ex]
\#& Imprimitive\!\! & Basis of vector fields $X_i$ &Domain\\[+1.0ex]
\hline
&  &\\[-1.5ex]
I$_1$&$\mathbb{R}$ &$ \{ {\partial_x} \}$ & $\mathbb{R}^2$\\[+1.0ex]
I$_2$&$\mathfrak{h}_2$ & $ \{ {\partial_x} \},  x\partial_x$& $\mathbb{R}^2$\\[+1.0ex]
I$_3$&$\mathfrak{sl}(2 )$ (type I) &$ \{ {\partial_x}\},  x\partial_x,  x^2\partial_x$& $\mathbb{R}^2$\\[+1.0ex]
I$_4$&$\mathfrak{sl}(2 )$ (type II) & ${ \{ {\partial_x  +  \partial_y},    {x\partial _x + y\partial_y}\},   x^2\partial_x  +  y^2\partial_y}$ &$\mathbb{R}^2_{x\neq y}$\\[+1.0ex]
I$_5$&$\mathfrak{sl}(2 )$ (type III) &${\{ {\partial_x},    {2x\partial_x + y\partial_y}\},   x ^2\partial_x  +  xy\partial_y}$&$\mathbb{R}^2_{y\neq 0}$\\[+1.0ex]
I$_6$&$\mathfrak{gl}(2 )$ (type I)& ${\{ {\partial_x},    {\partial_y}\},   x\partial_x,   x^2\partial_x}$&$\mathbb{R}^2$\\[+1.0ex]
I$_7$&$\mathfrak{gl}(2 )$ (type II)& ${ \{ {\partial_x},   {y\partial_y} \},     x\partial_x,    x^2\partial_x +  xy \partial_y}$&$\mathbb{R}^2_{y\neq 0}$ \\[+1.0ex]
I$_8$&$B_\alpha\simeq \mathbb{R}\ltimes\mathbb{R}^2$ &${ \{ {\partial_x},    {\partial_y}\},   x\partial_x  +  \alpha y\partial_y},\quad  0<|\alpha|\leq 1$&$\mathbb{R}^2$\\[+1.0ex]
I$_9$&$\mathfrak{h}_2\oplus\mathfrak{h}_2$ &${\{ {\partial_x},    {\partial_y}\},   x\partial_x,  y\partial_y}$&$\mathbb{R}^2$\\[+1.0ex]
I$_{10}$&$\mathfrak{sl}(2 )\oplus \mathfrak{h}_2$ & ${\{ {\partial_x},    {\partial_y} \},   x\partial_x,  y\partial_y,  x^2\partial_x }$&$\mathbb{R}^2$\\[+1.0ex]
I$_{11}$&$\mathfrak{sl}(2 )\oplus\mathfrak{sl}(2 )$ &$ \{ {\partial_x},    {\partial_y}\},   x\partial_x,   y\partial_y,   x^2\partial_x ,   y^2\partial_y $&$\mathbb{R}^2$\\[+1.0ex]
I$_{12}$&$\mathbb{R}^{r + 1}$ &$\{ {\partial_y} \},   \xi_1(x)\partial_y, \ldots , \xi_r(x)\partial_y,\quad   r\geq 1$&$\mathbb{R}^2$ \\[+1.0ex]
I$_{13}$&$\mathbb{R}\ltimes \mathbb{R}^{r + 1}$ &$ \{ {\partial_y} \},   y\partial_y,    \xi_1(x)\partial_y, \ldots , \xi_r(x)\partial_y,\quad   r\geq 1$ &$\mathbb{R}^2$\\[+1.0ex]
I$_{14}$&$\mathbb{R}\ltimes \mathbb{R}^{r}$ & ${\{ {\partial_x},   {\eta_1(x)\partial_y} \},  {\eta_2(x)\partial_y},\ldots ,\eta_r(x)\partial_y},\quad r\geq 1$&$\mathbb{R}^2$\\[+1.0ex]
I$_{15}$&$\mathbb{R}^2\ltimes \mathbb{R}^{r}$ &  ${\{ {\partial_x},    {y\partial_y}\} ,    {\eta_1(x)\partial_y},\ldots, \eta_r(x)\partial_y},\quad  r\geq 1$&$\mathbb{R}^2$\\[+1.0ex]
I$_{16}$&$C_\alpha^r\simeq \mathfrak{h}_2\ltimes\mathbb{R}^{r + 1}$ & ${ \{ {\partial_x},    {\partial_y} \},   x\partial_x  +  \alpha y\partial y,   x\partial_y, \ldots, x^r\partial_y},\quad   r\geq 1,\qquad \alpha\in\mathbb{R}$&$\mathbb{R}^2$\\[+1.0ex]
I$_{17}$&$\mathbb{R}\ltimes(\mathbb{R}\ltimes \mathbb{R}^{r})$ &$ \{ {\partial_x},    {\partial_y} \},   x\partial_x  +  (ry  +  x^r)\partial_y ,   x\partial_y, \ldots,  x^{r - 1}\partial_y,\quad   r\geq 1$ &$\mathbb{R}^2$\\[+1.0ex]
I$_{18}$&$(\mathfrak{h}_2\oplus \mathbb{R})\ltimes \mathbb{R}^{r + 1}$ & $ \{ {\partial_x},    {\partial_y}\},   x\partial_x,   x\partial_y,   y\partial_y,   x^2\partial_y, \ldots,x^r\partial_y,\quad r\geq 1$ &$\mathbb{R}^2$\\[+1.0ex]
I$_{19}$&$\mathfrak{sl}(2 )\ltimes \mathbb{R}^{r + 1}$ &  $ \{ {\partial_x},    {\partial_y} \},   x\partial_y,    2x\partial _x  +  ry\partial_y,   x^2\partial_x  +  rxy\partial_y,   x^2\partial_y,\ldots, x^r\partial_y ,\quad   r\geq 1$&$\mathbb{R}^2$ \\[+1.0ex]
I$_{20}$&$\mathfrak{gl}(2 )\ltimes \mathbb{R}^{r + 1}$ &  $ \{ {\partial_x},    {\partial_y} \},   x\partial_x,   x\partial_y,   y\partial_y,   x^2\partial_x  +  rxy\partial_y,   x^2\partial_y,\ldots, x^r\partial_y,\quad   r\geq 1$ &$\mathbb{R}^2$\\[+1.5ex]
\hline
 \end{tabular}
\hfill}
\end{table}


\begin{table}[t] {\footnotesize
 \noindent
\caption{{\small {Tree of inclusion relations} among the Lie algebras of the GKO classification. If (some of) the elements of $A$ are diffeomorphic to Lie subalgebras of $B$, there is a path of (dashed) arrows from $A$ to $B$. As every Lie algebra includes $I_1$, this Lie algebra is not shown. In bold and italics are classes with Hamiltonian Lie algebras and rank one associated distribution, correspondingly. Colors have been employed to help distinguishing arrows.}}
\label{table2}
\medskip
\xymatrix@R=1.8mm@C=3.4mm{
 \dim\,>6             &             &              &{\rm P}_8&                         &&&&&{\rm I}_{20}&&\\
 \dim\,6\rightarrow   &       &          {\rm P}_7 &{\rm P}_6\ar[u]         &          {\rm I}_{11}&&&         & {\rm I}_{19}\ar[ru]  &{\rm I}_{18}\ar[u]&      &\\
\dim\,5\rightarrow&          &           &{\rm \bf P}_5\ar[u]\ar[lu]          &{\rm I}_{10}\ar[u]&&   &   &{\rm \bf I}_{16}\ar[ru] &&{\rm I}_{15}\ar@{--}[ul]& & \\
\dim\,4\rightarrow& &{\rm P}_4\ar[ruu]\ar[uu]&     &{\rm I}_6\ar[u] &{\rm I}_9\ar@/_{1mm}/[rrrruu]\ar[lu]&{\rm I}_7\ar@/^{4mm}/[uuurrr]      &&{\rm I}_{17}\ar[u]&{\color{black} {\it I}_{13}}\ar@{--}[ur]&{\rm I}_{14}^{r> 1}\ar[u]&&\\
\dim\,3\rightarrow  &{\rm \bf P}_3\ar[ruuu]    &{\rm \bf P}_2\ar[ruu]&{\rm \bf P}_1\ar@{--}[uu]\ar[lu]&{\color{black}{\it I}_3}\ar[u]
&{\rm \bf I}_8\ar[u]&{\rm \bf I}_5\ar[u]\ar[uuurr]      &{\rm \bf I}_4\ar@/_{4mm}/[llluuu]&&{\color{black}{\it \bf I}_{12}^{r>1}}\ar@{--}[ul]\ar@{--}[ur]\ar[u]&&&\\
&&&&&&&&&& &\\
\dim\,2\rightarrow          &          &&  {\color{brown}{\bf I}^{r=1}_{14A}\simeq \mathfrak{h}_2}\ar@[brown]@/^{2mm}/[rrruu]\ar@[brown]@/_{2mm}/[rrrrruuu]\ar@[brown]@/^{2mm}/[rrrruu]\ar@[brown][luu]\ar@[brown]@/_{2mm}/[uuul]               &{\color{orange}{\it I}_2\simeq \mathfrak{h}_2}\ar@[orange]@/_{4mm}/[rrrruuu]\ar@[orange]@/_{12mm}/[rrrrrruuuu]\ar@[orange]@/_{4mm}/[rrrrruuu]\ar@[orange]@/_{3mm}/[luuuuu]\ar@[orange][uu]\ar@[orange][uuur]\ar@[orange][uuurr]&&{\color{blue}{\rm \bf I}^{r=1}_{14B}\simeq \mathbb{R}^2\ar@[blue]@/_{4mm}/[uuu]\ar@[blue][llluuuu]\ar@[blue]@/^{2mm}/[llluu]\ar@[blue][rruuuuu]\ar@[blue]@/_{3mm}/[rruuu]\ar@[blue][luu]}& {\color{red}{\it I}_{12}^{r=1}\simeq\mathbb{R}^2}\ar@/^{1mm}/@[red][lllluuuu]\ar@/^{2mm}/@[red][ruuuuu]\ar@[red][ruuu]\ar@[red][rruu]\ar@[red]@/_{7mm}/[rrruuu]&          &                     &&&&\\ 
\save "2,1"."7,8"*\frm<8pt>{.}\restore
\save "1,1"."7,12"**\frm<8pt>{--}\restore
}}
\end{table}

\section{Minimal Lie algebras of Lie--Hamilton systems on the plane}

In this section we study the local structure of the minimal Lie algebras of Lie--Hamilton systems on the plane around their generic points. Our main result, Theorem \ref{Char}, and the remaining findings of this section enable us  to give the local classification of Lie--Hamilton systems on the plane in Section \ref{Classification}.  To simplify the notation, $U$ will hereafter stand for a contractible open subset of $\mathbb{R}^2$.

\begin{lemma}\label{lem:local_sym} Let $V$ be a finite-dimensional real Lie algebra of Hamiltonian vector fields on $\mathbb{R}^2$ with respect to a Poisson structure and let $\xi_0\in\mathbb{R}^2$ be a generic point of $V$. There exists a $U\ni\xi_0$ such that $V|_{U}$ consists of Hamiltonian vector fields relative to a symplectic structure.
\end{lemma}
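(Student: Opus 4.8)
The plan is to exploit the very special form of Poisson structures on a surface together with the fact that Hamiltonian vector fields must vanish wherever the Poisson bivector degenerates. First I would recall that on $\mathbb{R}^2$ every bivector field reads $\Lambda=f\,\partial_x\wedge\partial_y$ for some $f\in C^\infty(\mathbb{R}^2)$, and that any such $\Lambda$ is automatically Poisson, since its Schouten--Nijenhuis bracket $[\Lambda,\Lambda]$ is a trivector on a two-dimensional manifold and hence vanishes. On the open set where $f\neq 0$ the bundle morphism $\hat\Lambda:T^*\mathbb{R}^2\to T\mathbb{R}^2$, $\alpha\mapsto\Lambda(\alpha,\cdot)$, is an isomorphism, so $\Lambda$ is nondegenerate and corresponds to the two-form $\omega=f^{-1}\,\dd x\wedge\dd y$; being a nonvanishing two-form on a surface, $\omega$ is closed and nondegenerate, i.e. symplectic, and a vector field is Hamiltonian for $\Lambda$ if and only if it is Hamiltonian for $\omega$, with the same Hamiltonian function up to sign.

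The key step is the observation that $f(\xi_0)\neq 0$ as soon as the distribution of $V$ does not collapse at $\xi_0$. Indeed, each $X\in V$ is Hamiltonian, so $X=\hat\Lambda(\dd h_X)$ for some $h_X$, and therefore $X(\xi)\in\operatorname{im}\hat\Lambda_\xi$ for every $\xi$. Since $\operatorname{im}\hat\Lambda_\xi=\{0\}$ at any point where $f(\xi)=0$, it follows that $\mathcal{D}^V_\xi\subset\operatorname{im}\hat\Lambda_\xi$ and hence $r^V(\xi)\le\operatorname{rank}\Lambda_\xi$; in particular $f(\xi)=0$ forces $r^V(\xi)=0$. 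Contrapositively, $r^V(\xi)\ge 1$ implies $f(\xi)\neq 0$.

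With this in hand I would split into the two possibilities allowed by the genericity of $\xi_0$. If $r^V(\xi_0)=0$, then, as $\xi_0$ is generic, $r^V$ vanishes identically on a contractible neighbourhood $U\ni\xi_0$, so $V|_U=\{0\}$ and the claim holds trivially with the standard symplectic form $\dd x\wedge\dd y$. If instead $r^V(\xi_0)\ge 1$, the previous step gives $f(\xi_0)\neq 0$, and by continuity $f\neq 0$ on some contractible $U\ni\xi_0$; restricting to $U$, the nondegenerate bivector $\Lambda|_U$ yields the symplectic form $\omega=f^{-1}\,\dd x\wedge\dd y$, with respect to which every element of $V|_U$ is Hamiltonian. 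I expect the only delicate point to be the key step: one must use genuinely that the elements of $V$ are Hamiltonian (not merely Poisson or canonical) vector fields, since this is precisely what forces them to vanish on the singular locus of $\Lambda$ and thereby ties the rank of $\mathcal{D}^V$ to the degeneracy of the bivector. The remaining verifications---that a nonvanishing two-form on a surface is symplectic and that the two notions of Hamiltonian field agree on the nondegenerate locus---are routine.
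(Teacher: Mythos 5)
Your proof is correct and follows essentially the same route as the paper's: the same case split on $r^V(\xi_0)$, the same key inclusion $\mathcal{D}^V_\xi\subset\operatorname{im}\hat\Lambda_\xi$ (the characteristic distribution of $\Lambda$), and the same conclusion that $\Lambda$ is nondegenerate, hence symplectic, on a neighbourhood of $\xi_0$. The only cosmetic difference is that in the nonzero-rank case you invoke continuity of the coefficient $f$ in $\Lambda=f\,\partial_x\wedge\partial_y$, whereas the paper invokes local constancy of $r^V$ at the generic point together with the evenness of the rank of the characteristic distribution; both arguments deliver nondegeneracy of $\Lambda$ on a neighbourhood of $\xi_0$.
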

\begin{proof}

If $\dim \mathcal{D}^V_{\xi_0}=0$, then $\dim \mathcal{D}_\xi^V=0$ for every $\xi$ in a $U\ni \xi_0$ because the rank of $\mathcal{D}^V$ is locally constant around generic points. Consequently, $V|_U=0$  and its unique element become Hamiltonian relative to the restriction of
$\omega=\dd x\wedge \dd y$ to $U$.
Let us assume now $\dim \mathcal{D}^V_{\xi_0}\neq 0$. By assumption, the elements of $V$ are Hamiltonian vector fields with respect to a Poisson bivector $\Lambda\in \Gamma(\Lambda^2{\rm T}\mathbb{R}^2)$.
Hence, $\mathcal{D}^V_\xi\subset \mathcal{D}^\Lambda_\xi$ for every $\xi\in\mathbb{R}^2$, with $\mathcal{D}^\Lambda$ being the {\it characteristic distribution} of $\Lambda$ \cite{IV}. Since $\dim \mathcal{D}_{\xi_0}^V\neq 0$ and $r^V$ is locally constant at $\xi_0$, then $\dim\,\mathcal{D}^V_\xi>0$ for every $\xi$ in a $U\ni \xi_0$. Since the rank of $\mathcal{D}^\Lambda$ is even at every point of $\mathbb{R}^2$ and $\mathcal{D}^V_\xi\subset \mathcal{D}^\Lambda_\xi$ for every $\xi\in U$, the rank of $\mathcal{D}^\Lambda$ is two on $U$. So, $\Lambda$ comes from a symplectic structure on $U$ and $V|_U$ is a Lie algebra of Hamiltonian vector fields relative to it.
\end{proof}

Roughly speaking, the previous lemma establishes that any Lie--Hamilton system $X$ on $\mathbb{R}^2$ can be considered around each generic point of ${V^X}$ as a Lie--Hamilton system admitting a minimal Lie algebra of Hamiltonian vector fields with respect to a symplectic structure. As our study of such systems is local, we hereafter focus on analysing  minimal Lie algebras of this type.

A {\it volume form} $\Omega$ on an $n$-dimensional manifold $N$ is a non-vanishing $n$-form on $N$. The divergence of a vector field $X$ on $N$ with respect to $\Omega$ is the unique function ${\rm div} X:N\rightarrow\mathbb{R}$ satisfying $\mathcal{L}_X\Omega=({\rm div}\, X)\, \Omega$. An {\it integrating factor} for $X$ on $U\subset N$ is a function $f:U\rightarrow \mathbb{R}$ such that $\mathcal{L}_{fX}\Omega=0$ on $U$. Next we have the following result~\cite{LP12}.

\begin{lemma}\label{Lem:IntFac}  Consider the volume form $\Omega=\dd x\wedge \dd y$ on a $U\subset \mathbb{R}^2$ and a vector field $X$ on $U$. Then, $X$ is
Hamiltonian with respect to a symplectic form $\omega=f \Omega$ on $U$  if and only if $f:U\rightarrow \mathbb{R}$ is a non-vanishing integrating factor of $X$ with respect to $\Omega$, i.e. $Xf=-f{\rm div} X$ on $U$.
\end{lemma}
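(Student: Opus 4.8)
The plan is to reduce the statement to a single Lie-derivative computation, exploiting the fact that $U$ is two-dimensional. First I would record two elementary facts about a surface: any $2$-form is automatically closed (its exterior derivative is a $3$-form, hence zero), so $\omega=f\Omega$ is symplectic on $U$ exactly when $f$ is nowhere vanishing; and, by Cartan's magic formula together with $\dd\omega=0$, one has $\mathcal{L}_X\omega=\dd\,\iota_X\omega$. Consequently $X$ is locally Hamiltonian for $\omega$ (i.e. $\iota_X\omega$ is closed) if and only if $\mathcal{L}_X\omega=0$, and since $U$ is contractible the Poincar\'e lemma upgrades ``closed'' to ``exact'', so local Hamiltonicity coincides with the existence of a Hamiltonian $h$ on $U$ with $\iota_X\omega=\dd h$.

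The heart of the proof is then the identity $\mathcal{L}_X(f\Omega)=(Xf+f\,{\rm div}\,X)\,\Omega$. This follows from the Leibniz rule $\mathcal{L}_X(f\Omega)=(Xf)\,\Omega+f\,\mathcal{L}_X\Omega$ and the defining relation $\mathcal{L}_X\Omega=({\rm div}\,X)\,\Omega$. Since $\Omega$ is nowhere zero, the left-hand side vanishes precisely when $Xf=-f\,{\rm div}\,X$, which is exactly the asserted integrating-factor condition; indeed it is equivalent to $\mathcal{L}_{fX}\Omega=0$ via the product rule ${\rm div}(fX)=Xf+f\,{\rm div}\,X$.

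Assembling these pieces gives both implications at once. For the forward direction, if $X$ is Hamiltonian for the symplectic form $\omega=f\Omega$, then $f$ is non-vanishing (non-degeneracy of $\omega$) and $\mathcal{L}_X\omega=0$; the displayed identity forces $Xf=-f\,{\rm div}\,X$, so $f$ is a non-vanishing integrating factor. Conversely, if $f$ is a non-vanishing integrating factor, then $\omega=f\Omega$ is a nowhere-vanishing, hence symplectic, $2$-form on $U$; the identity gives $\mathcal{L}_X\omega=0$, whence $\iota_X\omega$ is closed and, by contractibility of $U$, exact, so $X$ is Hamiltonian for $\omega$.

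I would expect no serious obstacle here: the only point requiring care is the passage between ``$\iota_X\omega$ closed'' and ``$\iota_X\omega$ exact'', which is where the standing assumption that $U$ be contractible is genuinely used; without it one would only obtain that $X$ is locally Hamiltonian. The remaining verifications are the standard two-dimensional differential-geometric identities recalled above.
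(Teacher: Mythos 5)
Your proposal is correct and follows essentially the same route as the paper: both hinge on the identity $\mathcal{L}_X(f\Omega)=(Xf+f\,{\rm div}\,X)\,\Omega=\mathcal{L}_{fX}\Omega$, identify $\mathcal{L}_X\omega=0$ with local Hamiltonicity via Cartan's formula, and invoke contractibility of $U$ and the Poincar\'e Lemma to upgrade locally Hamiltonian to Hamiltonian. Your write-up merely makes explicit a few steps the paper leaves implicit (closedness of $2$-forms on a surface, the Cartan-formula reduction), so there is nothing to correct.
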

\begin{proof} Since $\omega$ is a symplectic form on $U$, then $f$ must be non-vanishing. As
$$
\mathcal{L}_X\omega=\mathcal{L}_X(f\Omega)=(Xf)\Omega+f\mathcal{L}_X\Omega=(Xf+f{\rm div} X)\Omega=\mathcal{L}_{fX}\Omega,
$$
then $X$ is locally Hamiltonian with respect to $\omega$, i.e. $\mathcal{L}_X\omega=0$, if and only if $f$ is a non-vanishing integrating factor for $X$ on $U$. As $U$ is a contractible open {\ subset}, the Poincar\'e Lemma ensures that $X$ is a local Hamiltonian vector field if and only if it is a Hamiltonian vector field. Consequently, the lemma follows.
\end{proof}

\begin{definition} Given a vector space $V$ of vector fields on $U$, we say that $V$ admits a {\it modular generating system} $(U_1,X_1,\ldots,X_p)$ if $U_1$ is a dense open subset of $U$ such that every $X\in V|_{U_1}$ can be brought into the form
$
X|_{U_1}=\sum_{i=1}^p\ff_i X_i|_{U_1}
$
for certain functions $\ff_1,\ldots,\ff_p\in C^\infty(U_1)$ and vector fields $X_1,\ldots,X_p\in V$.
\end{definition}

\begin{example}
Given the Lie algebra P$_3\simeq \mathfrak {so}(3)$ on $\mathbb{R}^2$ of Table~\ref{table1}, the vector fields
$$
X_1=y\frac{\partial}{\partial x}-x\frac{\partial}{\partial y}  ,\qquad X_2=(1+x^2-y^2)\frac{\partial}{\partial x}+2xy\frac{\partial }{\partial y}
$$
of P$_3$ satisfy that $X_3=\ff_1X_1+\ff_2X_2$ on $U_1=\{(x,y)\in \mathbb{R}^2\mid x\neq 0\}$ for the functions $\ff_1,\ff_2\in C^\infty(U_1)$:
$$
\ff_1=\frac{x^2+y^2-1}{x},\qquad \ff_2=\frac y {x}.
$$
 Obviously, $U_1$ is an open dense subset of $\mathbb{R}^2$. As every element of $V$ is a linear combination of $X_1,X_2$ and $X_3=\ff_1X_1+\ff_2X_2$, then  every $X\in V|_{U_1}$ can be written as a linear combination with smooth functions on $U_1$ of $X_1$ and $X_2$. So, $(U_1,X_1,X_2)$ form a generating modular system for P$_3$.
\end{example}

In Table \ref{table1} we detail a modular generating system, which is indicated by the first one or two vector fields written between brackets in the list of the $X_i$'s, for every finite-dimensional Lie algebra of vector fields of the GKO classification.

\begin{theorem}\label{Char} Let $V$ be a Lie algebra of vector fields on $U\subset \mathbb{R}^2$ admitting a modular generating system $(U_1,X_1,\ldots,X_p)$.  We have that:

\noindent
1) The space $V$ consists of Hamiltonian vector fields relative to a symplectic form on $U$ if and only if:
\begin{enumerate}[i)]
 \item Let $\ff_1,\ldots,\ff_p$ be certain smooth functions on $U_1\subset U$. Then,
 \begin{equation}\label{DivCon}
 X|_{U_1}=\sum_{i=1}^p \ff_i X_i|_{U_1}\in V|_{U_1}\Longrightarrow {\rm div} X|_{U_1}=\sum_{i=1}^p \ff_i {\rm div}X_i|_{U_1}.
 \end{equation}
\item The elements $X_1,\ldots,X_p$ admit a common non-vanishing integrating factor on $U$.
\end{enumerate}

\noindent
2) If the rank of $\mathcal{D}^V$ is two on $U$, the symplectic form is unique up to a multiplicative non-zero constant.

\end{theorem}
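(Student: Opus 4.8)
The plan is to prove each of the two claims by reducing them to the integrating-factor criterion of Lemma~\ref{Lem:IntFac}, using the modular generating system to pass between the dense subset $U_1$ and all of $U$. For part~1), the forward implication is almost immediate: if $V$ consists of Hamiltonian vector fields relative to a symplectic form $\omega=f\,\Omega$ on $U$, then every $X\in V$ satisfies $Xf=-f\,{\rm div}X$ by Lemma~\ref{Lem:IntFac}, so in particular each $X_i$ shares the common non-vanishing integrating factor $f$ on $U$, giving~(ii). To obtain~(i), I would observe that the map $X\mapsto {\rm div}X$ is $\mathbb{R}$-linear but not $C^\infty$-linear; however, whenever $X=\sum_i \ff_i X_i$ one computes ${\rm div}X=\sum_i\ff_i\,{\rm div}X_i + \sum_i X_i\ff_i$, so~(i) is the assertion that the extra term $\sum_i X_i\ff_i$ vanishes. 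This vanishing follows from~(ii): both $X$ and the $X_i$ satisfy $Yf=-f\,{\rm div}Y$ with the same $f$, and substituting $X=\sum_i\ff_i X_i$ into $Xf=-f\,{\rm div}X$ while expanding $Xf=\sum_i\ff_i(X_if)=-f\sum_i\ff_i\,{\rm div}X_i$ forces ${\rm div}X=\sum_i\ff_i\,{\rm div}X_i$ on $U_1$, which is exactly~(\ref{DivCon}).

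For the converse in part~1), I would start from~(ii): let $f$ be a common non-vanishing integrating factor of $X_1,\ldots,X_p$ on $U$, so $X_if=-f\,{\rm div}X_i$ for each $i$. The goal is to show every $X\in V$ is Hamiltonian with respect to $\omega=f\,\Omega$, i.e. $Xf=-f\,{\rm div}X$ on $U$ by Lemma~\ref{Lem:IntFac}. I would verify this identity first on the dense subset $U_1$, where $X|_{U_1}=\sum_i\ff_i X_i|_{U_1}$. Computing $Xf=\sum_i\ff_i(X_if)=-f\sum_i\ff_i\,{\rm div}X_i$, and invoking hypothesis~(i) to replace $\sum_i\ff_i\,{\rm div}X_i$ by ${\rm div}X$, yields $Xf=-f\,{\rm div}X$ on $U_1$. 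Since $U_1$ is dense in $U$ and both sides are continuous (indeed smooth), the identity extends to all of $U$ by continuity. Thus $f$ is a non-vanishing integrating factor for every $X\in V$, and Lemma~\ref{Lem:IntFac} gives that each such $X$ is Hamiltonian relative to the single symplectic form $\omega=f\,\Omega$; that this one $\omega$ works simultaneously for all of $V$ is what makes $V$ a Lie algebra of Hamiltonian vector fields.

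For part~2), uniqueness, I would argue as follows. Suppose $\omega=f\,\Omega$ and $\omega'=f'\,\Omega$ are two symplectic forms on $U$ with respect to which all of $V$ is Hamiltonian, where $f,f'$ are non-vanishing. By Lemma~\ref{Lem:IntFac}, both $f$ and $f'$ are non-vanishing integrating factors for every $X\in V$, so $Xf=-f\,{\rm div}X$ and $Xf'=-f'\,{\rm div}X$ on $U$. Setting $g=f'/f$, which is well-defined and non-vanishing since $f$ is non-vanishing, a direct computation gives $Xg=(Xf'\cdot f-f'\cdot Xf)/f^2=(-f'f\,{\rm div}X+f'f\,{\rm div}X)/f^2=0$ for every $X\in V$. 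Hence $g$ is constant along the flow of every element of $V$, i.e. $g$ is a first integral of the whole distribution $\mathcal{D}^V$. The hypothesis that $\mathcal{D}^V$ has rank two on $U$ means its integral leaves are open, so a function killed by all of $V$ must be locally constant; since $U$ is connected (contractible), $g$ is a non-zero constant on $U$, giving $\omega'=g\,\omega$ as claimed.

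The main obstacle I anticipate is the bookkeeping in the algebraic identity for the divergence of $X=\sum_i\ff_i X_i$, specifically keeping straight that ${\rm div}$ fails to be $C^\infty$-linear and that hypothesis~(i) is precisely the condition removing the offending Leibniz correction term $\sum_i X_i\ff_i$. Everything else—the passage from $U_1$ to $U$ by density and continuity, and the rank-two argument that first integrals are constant—is routine once this identity is handled carefully; the only subtlety there is to note that the decomposition $X=\sum_i\ff_i X_i$ holds only on the dense $U_1$, so the key divergence identity must first be established on $U_1$ and then extended.
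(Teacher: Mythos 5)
Your proof is correct and follows essentially the same route as the paper's: both directions of part 1) reduce to the integrating-factor identity $Xf=-f\,{\rm div}\,X$ from Lemma \ref{Lem:IntFac}, applied through the modular generating decomposition on the dense subset $U_1$ and extended to $U$ by density and continuity, and part 2) uses the same computation $X(f'/f)=0$ plus the rank-two condition to force the ratio of integrating factors to be a non-zero constant. Your framing of condition i) as the vanishing of the Leibniz correction term is only a cosmetic repackaging of the identical computation in the paper.
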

\begin{proof} Let us prove the direct part of 1). Since $(U_1,X_1,\ldots,X_p)$ form a modular generating system for $V$, we have that every $X|_{U_1}\in V|_{U_1}$ can be brought into  the form $X|_{U_1}=\sum_{i=1}^p\ff_i X_i|_{U_1}$ for certain $\ff_1,\ldots,\ff_p\in C^\infty(U_1)$. As $V$ is a Lie algebra of Hamiltonian vector fields with respect to a symplectic structure on $U$, let us say
 \begin{equation}\label{ww}
\omega=f(x,y)\dd x\wedge \dd y,
 \end{equation}
then Lemma \ref{Lem:IntFac} ensures that  $Yf=-f{\rm div} Y$ for every $Y\in V$.  Then,
$$
f{\rm div} X=-X f=-\sum_{i=1}^p \ff_i X_i f=f \sum_{i=1}^p \ff_i {\rm div}X_i \Longleftrightarrow
f\left({\rm div} X-\sum_{i=1}^p \ff_i {\rm div}X_i\right)=0
$$
on $U_1$.  
As $\omega$ is non-degenerate, then $f$ is non-vanishing and $i)$ follows. Since all the vector fields of $V$ are Hamiltonian with respect to $\omega$, they share a common non-vanishing integrating factor, namely $f$. From this, $ii)$ holds.

Conversely, if $ii)$ is satisfied, then Lemma \ref{Lem:IntFac} ensures that $X_1,\ldots,X_p$ are Hamiltonian with respect to  (\ref{ww}) on $U$, with $f$ being a non-vanishing integrating factor. As $(U_1,X_1,\ldots,X_p)$ form a generating modular system for $V$, every $X\in V$ can be written as $\sum_{i=1}^p\ff_i X_i$ on $U_1$ for certain functions $\ff_1,\ldots,\ff_p\in C^\infty(U_1)$. From $i)$ we obtain
$
{\rm div}\, X=\sum_{i=1}^p\ff_i{\rm div}X_i
$
on $U_1$. Then,
$$
Xf=\sum_{i=1}^p\ff_i X_i f=-f\sum_{i=1}^p\ff_i{\rm div}X_i=-f{\rm div X}
$$
on $U_1$ and, since the elements of $V$ are smooth and $U_1$ is dense on $U$, the above expresion also holds  on $U$. Hence, $f$ is a non-vanishing integrating factor for $X$, which becomes a Hamiltonian vector field with respect to $\omega$ on $U$ in virtue of Lemma \ref{Lem:IntFac}. Hence, part 1) is proven.

As far as part 2) of the theorem is concerned, if the vector fields of $V$ are Hamiltonian with respect to two different symplectic structures on $U$, they admit two different non-vanishing integrating factors $f_1$ and $f_2$. Hence,
$$X(f_1/f_2)=(f_2Xf_1-f_1Xf_2)/f_2^2=(f_2f_1{\rm div} X-f_1f_2{\rm div}X)/f_2^2=0
$$
and $f_1/f_2$ is a common constant of motion for all the elements of $V$. Hence, it is a constant of motion for every vector field taking values in the distribution $\mathcal{D}^V$. Then rank of $\mathcal{D}^V$ on $U$ is two by assumption. So, $\mathcal{D}^V$ is generated by the vector fields $\partial_x$ and $\partial_y$ on $U$. Thus, the only constants of motion on $U$
common to all the vector fields taking values in $\mathcal{D}^V$, and consequently common to the elements of $V$, are constants. Since $f_1$ and $f_2$ are non-vanishing, then $f_1=\lambda f_2$ for a $\lambda\in\mathbb{R}\backslash \{0\}$ and the associated symplectic structures are the same up to an irrelevant non-zero  proportionality constant.
\end{proof}

Using Theorem \ref{Char}, we can immediately prove the following result.

\begin{corollary}\label{Cor:NonDiv}
If a Lie algebra of vector fields $V$ on a $U\subset\mathbb{R}^2$ consists of Hamiltonian vector fields with respect to a symplectic form and admits a modular generating system whose elements are divergence free, then every element of $V$ is divergence free.
\end{corollary}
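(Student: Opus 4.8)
The plan is to invoke the direct implication of part 1) of Theorem \ref{Char}, which is available precisely because $V$ is by hypothesis a Lie algebra of Hamiltonian vector fields relative to a symplectic form. Under this hypothesis condition $i)$ of that theorem holds automatically, so that for every $X\in V$ expressed through the modular generating system as $X|_{U_1}=\sum_{i=1}^p \ff_i X_i|_{U_1}$ with $\ff_1,\ldots,\ff_p\in C^\infty(U_1)$, one has
$$
{\rm div}\, X|_{U_1}=\sum_{i=1}^p \ff_i\, {\rm div}\, X_i|_{U_1}.
$$

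First I would fix an arbitrary $X\in V$ and write it in the above form using the modular generating system $(U_1,X_1,\ldots,X_p)$. Then, invoking the standing hypothesis that each generator $X_1,\ldots,X_p$ is divergence free, i.e. ${\rm div}\, X_i=0$ for every $i$, the right-hand side of the displayed identity vanishes identically, and hence ${\rm div}\, X|_{U_1}=0$.

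The final step would be to propagate this conclusion from the dense open subset $U_1$ to all of $U$. Since $X$ is smooth, ${\rm div}\, X$ is a smooth, in particular continuous, function on $U$; vanishing on the dense subset $U_1$, it must vanish on the whole of $U$. As $X\in V$ was arbitrary, every element of $V$ is divergence free. There is essentially no genuine obstacle here—the result is immediate from Theorem \ref{Char}, as the preceding remark asserts—and the only point requiring a word of care is this last density argument, which leans exactly on the clause in the definition of a modular generating system that $U_1$ be a dense open subset of $U$.
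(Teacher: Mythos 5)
Your proof is correct and is precisely the argument the paper intends: the paper states the corollary as an immediate consequence of Theorem \ref{Char}, and your deduction---applying condition $i)$ of part 1) with ${\rm div}\, X_i=0$ to get ${\rm div}\, X=0$ on $U_1$, then extending to $U$ by continuity and density---is exactly that immediate deduction, with the density step correctly identified as the only point needing care.
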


\section{Lie--Hamilton algebras}

In this section we prove some new results concerning Lie--Hamilton algebras. Analogues of the following results can also be proved through Lie algebra cohomology techniques \cite{OR04}, although the approach here presented is simpler and provides all the tools that we will need in the following sections.

It is known that Lie--Hamilton algebras are not uniquely defined in general. Moreover, the existence of different types of Lie--Hamilton algebras for the same Lie--Hamilton system is important in their linearisation and the use of certain methods~\cite{CLS12Ham}. For instance, if a Lie--Hamilton system $X$ on $N$ admits a Lie--Hamilton algebra isomorphic to $V^X$ and $\dim V^X=\dim N$, then $X$ can be linearized together with its associated Poisson structure \cite{CLS12Ham}.

\begin{example}
Consider again the Lie--Hamilton system $X$ given by (\ref{Riccati2}) and assume $V^X\simeq \mathfrak{sl}(2)$. Recall that $X$ admits a Lie--Hamilton algebra $(\mathcal{H}_\Lambda,\{\cdot,\cdot\}_\omega)\simeq \mathfrak{sl}(2)$ spanned by the Hamiltonian functions $h_1,h_2,h_3$ given by (\ref{ab}) relative to the symplectic structure $\omega$ detailed in (\ref{aa}). We can also construct a second (non-isomorphic) Lie--Hamilton algebra for $X$ with respect to (\ref{aa}). The vector fields $X_i$, with $i=1,2,3$, spanning $V^X$ (see (\ref{vectRiccati2})) have also Hamiltonian functions $\bar h_i=h_i+1$, for $i=1,2,3$, respectively. Hence, $(\mathbb{R}^2_{y\neq 0},\omega,h=a_0(t)\bar{h}_1+a_1(t)\bar{h}_2+a_2(t)\bar{h}_3)$ is a Lie--Hamiltonian structure for $X$
giving rise to a Lie--Hamilton algebra  $(\overline{\mathcal{H}}_\Lambda,\{\cdot,\cdot\}_\omega)\equiv (\langle \bar{h}_1,\bar{h}_2,\bar{h}_3,1\rangle,\{\cdot,\cdot\}_\omega) \simeq \mathfrak{sl}(2)\oplus \mathbb{R}$ for $X$.
\end{example}

\begin{proposition}\label{LieHam1} A Lie--Hamilton system $X$ on a symplectic connected manifold $(N,\omega)$ possesses an associated Lie--Hamilton algebra $(\mathcal{H}_\Lambda,\{\cdot,\cdot\}_\omega)$ isomorphic to $V^X$ if and only if every Lie--Hamilton algebra non-isomorphic to $V^X$ is isomorphic to $V^X\oplus \mathbb{R}$.
\end{proposition}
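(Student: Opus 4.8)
The plan is to encode both conditions in the Hamiltonian-function map and to reduce the statement to a dichotomy for one-dimensional central extensions.

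First I would introduce the map $\varphi$ sending a function $f\in C^\infty(N)$ to its Hamiltonian vector field $X_f$, determined by $\iota_{X_f}\omega=\dd f$. This is a Lie algebra morphism for $\{\cdot,\cdot\}_\omega$, and since $N$ is connected its kernel is exactly the constants $\mathbb{R}$. Every Lie--Hamilton algebra of $X$ has the form $\mathcal{H}_\Lambda={\rm Lie}(\{h_t\}_{t\in\mathbb{R}},\{\cdot,\cdot\}_\omega)$ with each $h_t$ a Hamiltonian function of $X_t$, so $\varphi(\mathcal{H}_\Lambda)={\rm Lie}(\{X_t\}_{t\in\mathbb{R}})=V^X$ and $\mathcal{H}_\Lambda\cap\mathbb{R}$ is either $0$ or $\mathbb{R}$. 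This yields a clean dichotomy: either (A) $\mathcal{H}_\Lambda\cap\mathbb{R}=0$, so $\varphi$ restricts to an isomorphism $\mathcal{H}_\Lambda\simeq V^X$; or (B) $\mathbb{R}\subset\mathcal{H}_\Lambda$, giving a central extension $0\to\mathbb{R}\to\mathcal{H}_\Lambda\to V^X\to 0$ with $\dim\mathcal{H}_\Lambda=\dim V^X+1$. Thus a Lie--Hamilton algebra is isomorphic to $V^X$ precisely in case (A) and non-isomorphic precisely in case (B), and the proposition becomes the equivalence ``some algebra is of type (A)'' $\Leftrightarrow$ ``every algebra of type (B) is isomorphic to $V^X\oplus\mathbb{R}$''.

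For the direct implication I would use a type-(A) algebra $\mathcal{H}_0$ to build a section $s=(\varphi|_{\mathcal{H}_0})^{-1}\colon V^X\to C^\infty(N)$ with $s(V^X)=\mathcal{H}_0$. Given any type-(B) algebra $\mathcal{H}'$ and $Y\in V^X$, choosing $\tilde Y\in\mathcal{H}'$ with $\varphi(\tilde Y)=Y$ gives $s(Y)-\tilde Y\in\ker\varphi=\mathbb{R}\subset\mathcal{H}'$, whence $s(Y)\in\mathcal{H}'$. Therefore $\mathcal{H}_0=s(V^X)\subset\mathcal{H}'$, and comparing dimensions forces $\mathcal{H}'=\mathcal{H}_0\oplus\mathbb{R}$; since the constants are central this is a Lie algebra direct sum, so $\mathcal{H}'\simeq V^X\oplus\mathbb{R}$.

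For the converse I would start from any Lie--Hamilton algebra $\mathcal{H}$ (one exists as $X$ is a Lie--Hamilton system). If it is of type (A) there is nothing to prove; if it is of type (B), the hypothesis gives $\mathcal{H}\simeq V^X\oplus\mathbb{R}$ and I must split the central extension $0\to\mathbb{R}\to\mathcal{H}\to V^X\to 0$. This splitting is the main obstacle, since an abstract isomorphism $\mathcal{H}\simeq V^X\oplus\mathbb{R}$ need not a priori respect this particular sequence. I would resolve it with the isomorphism invariant $\dim[\mathcal{H},\mathcal{H}]$: for a one-dimensional central extension one checks that $\dim[\mathcal{H},\mathcal{H}]$ equals $\dim[V^X,V^X]$ when the sequence splits and $\dim[V^X,V^X]+1$ otherwise, whereas $\dim[V^X\oplus\mathbb{R},V^X\oplus\mathbb{R}]=\dim[V^X,V^X]$. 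Hence $\mathcal{H}\simeq V^X\oplus\mathbb{R}$ excludes the non-split case and yields a subalgebra $W\subset\mathcal{H}$ with $W\cap\mathbb{R}=0$ and $\varphi|_W\colon W\simeq V^X$. It then remains to verify that $W$ is genuinely a Lie--Hamilton algebra of $X$: putting $h'_t:=(\varphi|_W)^{-1}(X_t)\in W$ produces Hamiltonian functions of the $X_t$ with ${\rm Lie}(\{h'_t\}_{t\in\mathbb{R}})\subset W$ and image $V^X$ under $\varphi$, so injectivity of $\varphi|_W$ forces ${\rm Lie}(\{h'_t\}_{t\in\mathbb{R}})=W$. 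Thus $W$ is a type-(A) Lie--Hamilton algebra isomorphic to $V^X$, completing the argument.
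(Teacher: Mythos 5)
Your proof is correct, and its skeleton coincides with the paper's: both rest on the exact sequence $0\to \mathcal{H}_\Lambda\cap\langle 1\rangle \to \mathcal{H}_\Lambda\stackrel{\varphi}{\longrightarrow} V^X\to 0$ (exact because $N$ is connected, so $\ker\varphi$ is the constants), the resulting dichotomy between algebras containing $1$ and algebras mapped isomorphically onto $V^X$, and, for the direct implication, the fact that two Hamiltonian functions of the same vector field differ by a constant --- your section $s=(\varphi|_{\mathcal{H}_0})^{-1}$ is a basis-free rewording of the paper's argument with the basis $\{h_1,\ldots,h_r,1\}$. The genuine difference lies in the converse. The paper simply declares that one ``can define a Lie algebra anti-isomorphism $\mu\colon V^X\to\overline{\mathcal{H}}_\Lambda$'' onto a Lie subalgebra of $\overline{\mathcal{H}}_\Lambda$ consisting of Hamiltonian functions and isomorphic to $V^X$; that is, it takes for granted exactly the point you flag as the main obstacle: that an \emph{abstract} isomorphism $\overline{\mathcal{H}}_\Lambda\simeq V^X\oplus\mathbb{R}$ forces the central extension $0\to\langle 1\rangle\to\overline{\mathcal{H}}_\Lambda\to V^X\to 0$ to split compatibly with $\varphi$. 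This is not automatic --- e.g.\ if $V^X$ has nontrivial centre, a subalgebra of $\overline{\mathcal{H}}_\Lambda$ isomorphic to $V^X$ may well contain the constants, and then it does not project onto $V^X$ and cannot serve as a Lie--Hamilton algebra. Your resolution via the isomorphism invariant $\dim[\mathcal{H},\mathcal{H}]$ is precisely the missing lemma, and it is true: if $1\notin[\mathcal{H},\mathcal{H}]$, then for any linear section $s$ the assignment $[x,y]\mapsto [s(x),s(y)]$ is well defined on $[V^X,V^X]$, the discrepancy $z\mapsto [s(\cdot),s(\cdot)]-s(z)$ extends linearly to $V^X$, and this exhibits the defining cocycle as a coboundary, so the extension splits; conversely a split extension clearly has $[\mathcal{H},\mathcal{H}]\cap\langle 1\rangle=0$. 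Incidentally, this derived-algebra mechanism is the same one the paper employs later in Corollary \ref{NoGo2}, but never invokes inside this proof. So your argument establishes the proposition by the same global strategy while being self-contained at the one step the paper leaves implicit.
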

\begin{proof} Let $(\overline{\mathcal{H}}_\Lambda,\{\cdot,\cdot\}_\omega)$ be an arbitrary Lie--Hamilton algebra for $X$.
As $X$ is defined on a connected manifold, the sequence of Lie algebras
\begin{equation}\label{ExacSeq}
0\hookrightarrow (\overline{\mathcal{H}}_\Lambda,\{\cdot,\cdot\}_\omega)\cap \langle 1\rangle \hookrightarrow (\overline{\mathcal{H}}_\Lambda,\{\cdot,\cdot\}_\omega)\stackrel{\varphi}{\longrightarrow} V^X\rightarrow 0,
\end{equation}
where $\varphi:\overline{\mathcal{H}}_\Lambda\rightarrow V^X$ maps every function of $\overline{\mathcal{H}}_\Lambda$ to minus its Hamiltonian vector field, is always exact (cf.~\cite{CLS12Ham}). Hence, $(\overline{\mathcal{H}}_\Lambda,\{\cdot,\cdot\}_\omega)$ can be isomorphic either to $V^X$ or to   a Lie algebra extension of $V^X$ of  dimension $\dim V^X+1$. 

If $(\mathcal{H}_\Lambda,\{\cdot,\cdot\}_\omega)$ is isomorphic to $V^X$ and there exists a second Lie--Hamilton algebra $(\overline{\mathcal{H}}_\Lambda,\{\cdot,\cdot\}_\omega)$ for $X$ non-isomorphic to $V^X$, we see from (\ref{ExacSeq}) that $1\in \overline{\mathcal{H}}_\Lambda$ and $1\notin \mathcal{H}_\Lambda$. Given a basis $X_1,\ldots,X_r$ of $V^X$, each element $X_i$, with $i=1,\ldots,r$, has a Hamiltonian function $\overline{h}_i\in\overline{\mathcal{H}}_\Lambda$ and another $h_i\in{\mathcal{H}}_\Lambda$. As $V^X$ is defined on a connected manifold, then $h_i=\overline{h}_i-\lambda_i\in \overline{\mathcal{H}}_\Lambda$ with $\lambda_i\in\mathbb{R}$ for every $i=1,\ldots,r$. From this and using again that $1\in \overline{\mathcal{H}}_\Lambda\backslash \mathcal{H}_\Lambda$, we obtain that $\{h_1,\ldots,h_r,1\}$ is a basis for $\overline{\mathcal{H}}_\Lambda$ and $(\overline{\mathcal{H}}_\Lambda,\{\cdot,\cdot\}_\omega)\simeq (\mathcal{H}_\Lambda \oplus\mathbb{R},\{ \cdot,\cdot\}_\omega)$.

Let us assume now that every Lie--Hamilton algebra $(\overline{\mathcal{H}}_\Lambda,\{\cdot,\cdot\}_\omega)$ non-isomorphic to $V^X$ is isomorphic to $V^X\oplus\mathbb{R}$. We can define a Lie algebra anti-isomorphism $\mu :V^X\rightarrow \overline{\mathcal{H}}_\Lambda$ mapping each element of $V^X$ to a Hamiltonian function belonging to a Lie subalgebra of $(\overline{\mathcal{H}}_\Lambda,\{\cdot,\cdot\}_\omega)$ isomorphic to $V^X$. Hence, $(N,\omega,h=\mu(X))$, where $h_t=\mu(X_t)$ for each $t\in\mathbb{R}$, is a Lie--Hamiltonian structure for $X$ and $(\mu(V^X),\{\cdot,\cdot\}_\omega)$ is a Lie--Hamilton algebra for $X$ isomorphic to $V^X$.
\end{proof}

\begin{proposition}\label{LieHam2} If a Lie--Hamilton system $X$ on a symplectic connected manifold $(N,\omega)$ admits an associated Lie--Hamilton algebra $(\mathcal{H}_\Lambda,\{\cdot,\cdot\}_\omega)$ isomorphic to $V^X$, then it admits a Lie--Hamilton algebra isomorphic to $V^X\oplus \mathbb{R}$.
\end{proposition}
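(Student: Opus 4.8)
The plan is to realise $V^X\oplus\mathbb{R}$ concretely as the Lie--Hamilton algebra attached to a $t$-dependent Hamiltonian obtained by perturbing $h$ with a purely $t$-dependent term. First I would fix a basis $X_1,\ldots,X_r$ of $V^X$ together with the Hamiltonian functions $h_1,\ldots,h_r\in\mathcal{H}_\Lambda$ provided by the isomorphism $\mathcal{H}_\Lambda\simeq V^X$. Since the exact sequence (\ref{ExacSeq}) gives $\ker\varphi=\mathcal{H}_\Lambda\cap\langle 1\rangle$ while $\varphi:\mathcal{H}_\Lambda\to V^X$ is here an isomorphism (equal dimensions and surjectivity), I obtain $1\notin\mathcal{H}_\Lambda$, so $\{h_1,\ldots,h_r\}$ is a basis of $\mathcal{H}_\Lambda$ whose brackets $\{h_i,h_j\}_\omega=\sum_k c_{ij}^k h_k$ carry no constant term. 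As the constant $1$ is a Casimir of $\omega$ (its Hamiltonian vector field vanishes and $\{\cdot,1\}_\omega=0$), the space $\overline{\mathcal{H}}_\Lambda:=\langle h_1,\ldots,h_r,1\rangle$ is closed under $\{\cdot,\cdot\}_\omega$, and since $1$ is central and the $h_i$ close on their own, $(\overline{\mathcal{H}}_\Lambda,\{\cdot,\cdot\}_\omega)\simeq V^X\oplus\mathbb{R}$.

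Next I would produce a Lie--Hamiltonian structure whose Lie--Hamilton algebra is exactly $\overline{\mathcal{H}}_\Lambda$. Writing $h_t=\sum_\alpha b_\alpha(t)h_\alpha$ (no constant term, as $h_t\in\mathcal{H}_\Lambda$), I set $\bar h_t:=h_t+g(t)$ for a function $g:\mathbb{R}\to\mathbb{R}$ to be chosen. Because $\dd\bar h_t=\dd h_t$, each $X_t$ is still the Hamiltonian vector field of $\bar h_t$, so $(N,\omega,\bar h)$ is a Lie--Hamiltonian structure for $X$ as soon as $\mathrm{Lie}(\{\bar h_t\})$ is finite-dimensional; and since $\bar h_t\in\overline{\mathcal{H}}_\Lambda$, we automatically have $\mathrm{Lie}(\{\bar h_t\}_{t\in\mathbb{R}})\subseteq\overline{\mathcal{H}}_\Lambda$. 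It then remains only to force the reverse inclusion, for which it suffices to show $1\in\mathrm{Lie}(\{\bar h_t\})$: in that case $h_t=\bar h_t-g(t)\cdot 1$ lies in $\mathrm{Lie}(\{\bar h_t\})$ as well, whence $\mathcal{H}_\Lambda\subseteq\mathrm{Lie}(\{\bar h_t\})$ and equality follows.

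The heart of the argument, and the step I expect to be the main obstacle, is the choice of $g$ guaranteeing $1\in\mathrm{Lie}(\{\bar h_t\})$. The key observation is that iterated Poisson brackets of the $\bar h_t$ coincide with those of the $h_t$ and hence carry no constant term, so a constant can enter $\mathrm{Lie}(\{\bar h_t\})$ only through linear combinations of the generators; concretely, $1\in\mathrm{Lie}(\{\bar h_t\})$ precisely when there exist $t_1,\ldots,t_m$ and $\lambda_1,\ldots,\lambda_m\in\mathbb{R}$ with $\sum_j\lambda_j h_{t_j}$ lying in the span $P$ of the brackets and $\sum_j\lambda_j g(t_j)\neq 0$. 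Projecting to the finite-dimensional quotient $\mathcal{H}_\Lambda/P$, whose coordinate functions $c_\beta(t)$ are linear combinations of the $b_\alpha(t)$, such $\lambda_j,t_j$ exist as long as $g$ does not lie in $\mathrm{span}\{c_\beta\}$.

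Since $\mathrm{span}\{b_1,\ldots,b_r\}$ is a finite-dimensional subspace of the infinite-dimensional space of real functions on $\mathbb{R}$, I would simply pick any $g\notin\mathrm{span}\{b_1,\ldots,b_r\}$; this forces $1\in\mathrm{Lie}(\{\bar h_t\})$, hence $\mathrm{Lie}(\{\bar h_t\})=\overline{\mathcal{H}}_\Lambda$. Therefore $(N,\omega,\bar h)$ is a Lie--Hamiltonian structure for $X$ whose Lie--Hamilton algebra is $\overline{\mathcal{H}}_\Lambda\simeq V^X\oplus\mathbb{R}$, which is exactly the claim.
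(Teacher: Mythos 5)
Your proposal is correct and follows essentially the same route as the paper's own proof: both perturb the $t$-dependent Hamiltonian by a purely $t$-dependent function (your $g(t)$, the paper's $b_{p+1}(t)$) chosen outside the finite-dimensional span of the coefficient functions of $h$, and conclude that the resulting Lie--Hamiltonian structure generates $\mathcal{H}_\Lambda\oplus\langle 1\rangle\simeq V^X\oplus\mathbb{R}$. Your quotient-by-$P$ argument is simply a more detailed justification of the paper's terser dimension count, namely that the linear hull of $\{\bar h_t\}_{t\in\mathbb{R}}$ has dimension one greater than that of $L_h$ and hence already contains the constant function $1$.
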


\begin{proof}
Let $(N,\omega,h)$ be a Lie--Hamiltonian structure for $X$ giving rise to the Lie--Hamilton algebra $(\mathcal{H}_\Lambda,\{\cdot,\cdot\}_\omega)$. Consider the linear space $L_h$ spanned by linear combinations of the functions $\{h_t\}_{t\in\mathbb{R}}$. Since we assume $\mathcal{H}_\Lambda\simeq V^X$, the exact sequence (\ref{ExacSeq}) involves that $1\notin L_h$. Moreover, we can write $h=\sum_{i=1}^pb_i(t)h_{t_i}$, where $h_{t_i}$ are the values of $h$ at certain times $t_1,\ldots,t_p$ such that $\{h_{t_1},\ldots,h_{t_p}\}$ are linearly independent and $b_1,\ldots,b_p$ are certain $t$-dependent functions. Observe that the vector fields $(b_1(t),\ldots,b_p(t))$, with $t\in\mathbb{R}$, span a $p$-dimensional linear space. If we choose a $t$-dependent Hamiltonian $\bar{h}=\sum_{i=1}^{p}b_i(t)h_{t_i}+b_{p+1}(t)$, where $b_{p+1}(t)$ is not a linear combination of $b_1(t),\ldots,b_p(t)$, and we recall that $1,h_{t_1},\ldots,h_{t_p}$ are linearly independent over $\mathbb{R}$,
we obtain that the linear hull of the functions $\{\bar h_t\}_{t\in\mathbb{R}}$ has dimension $\dim L_h+1$. Moreover, $(N,\{\cdot,\cdot\}_\omega,\bar h)$ is a Lie--Hamiltonian structure for $X$. Hence, they span, along with their successive Lie brackets, a Lie--Hamilton algebra isomorphic to $\mathcal{H}_\Lambda\oplus \mathbb{R}$.
\end{proof}

\begin{corollary}\label{NoGo2} If $X$ is a Lie--Hamilton system with respect to a symplectic connected manifold $(N,\omega)$ admitting a Lie--Hamilton algebra $(\mathcal{H}_\Lambda,\{\cdot,\cdot\}_\omega)$ satisfying that $1\in \{\mathcal{H}_\Lambda,\mathcal{H}_\Lambda\}_\omega$, then $X$ does not possess any Lie--Hamilton algebra isomorphic to $V^X$.
\end{corollary}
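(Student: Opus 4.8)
The plan is to argue by contradiction. Assume that, besides the given Lie--Hamilton algebra $(\mathcal{H}_\Lambda,\{\cdot,\cdot\}_\omega)$ with $1\in\{\mathcal{H}_\Lambda,\mathcal{H}_\Lambda\}_\omega$, the system $X$ also admits a Lie--Hamilton algebra $(\mathcal{H}'_\Lambda,\{\cdot,\cdot\}_\omega)$ isomorphic to $V^X$, and derive a contradiction with $1\in\{\mathcal{H}_\Lambda,\mathcal{H}_\Lambda\}_\omega$. The heart of the argument is that the existence of such an $\mathcal{H}'_\Lambda$ supplies a constant-free, Poisson-closed ``section'' of the Hamiltonian functions, which pins the derived algebra of every other Lie--Hamilton algebra of $X$ off the constant direction.

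First I would extract two facts from the exact sequence (\ref{ExacSeq}), whose map $\varphi$ sends each function to minus its Hamiltonian vector field and has kernel $\overline{\mathcal{H}}_\Lambda\cap\langle 1\rangle$. Applied to $\mathcal{H}'_\Lambda\simeq V^X$, the isomorphism forces $\varphi|_{\mathcal{H}'_\Lambda}$ to be injective, so that $1\notin\mathcal{H}'_\Lambda$; moreover $\mathcal{H}'_\Lambda$ is a Lie algebra of functions and hence closed under $\{\cdot,\cdot\}_\omega$. Applied to $\mathcal{H}_\Lambda$, the hypothesis $1\in\{\mathcal{H}_\Lambda,\mathcal{H}_\Lambda\}_\omega\subseteq\mathcal{H}_\Lambda$ gives $1\in\mathcal{H}_\Lambda$, whence $\ker\varphi=\langle 1\rangle$ and, by exactness, $\dim\mathcal{H}_\Lambda=\dim V^X+1$.

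Next I would tie the two algebras together through their Hamiltonian functions. Fixing a basis $X_1,\ldots,X_r$ of $V^X$, let $h'_i\in\mathcal{H}'_\Lambda$ and $h_i\in\mathcal{H}_\Lambda$ be Hamiltonian functions of $X_i$ (these exist because $\varphi$ is surjective onto $V^X$ in both cases). Since $N$ is connected, two Hamiltonian functions of the same vector field differ by a real constant, so $h_i=h'_i+\mu_i$ with $\mu_i\in\mathbb{R}$. Using $1\in\mathcal{H}_\Lambda$ we get $h'_i=h_i-\mu_i\in\mathcal{H}_\Lambda$; as $\varphi(h'_i)=-X_i$ are independent, the $h'_1,\ldots,h'_r,1$ are linearly independent, and the dimension count $\dim\mathcal{H}_\Lambda=r+1$ yields $\mathcal{H}_\Lambda=\langle h'_1,\ldots,h'_r,1\rangle$.

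Finally comes the key computation. Because the constant $1$ has vanishing Hamiltonian vector field it is central for $\{\cdot,\cdot\}_\omega$, so all brackets involving it drop out and $\{\mathcal{H}_\Lambda,\mathcal{H}_\Lambda\}_\omega$ is spanned by the $\{h'_i,h'_j\}_\omega$. By the closedness of $\mathcal{H}'_\Lambda$ these lie in $\mathcal{H}'_\Lambda$, so $\{\mathcal{H}_\Lambda,\mathcal{H}_\Lambda\}_\omega\subseteq\mathcal{H}'_\Lambda$; since $1\notin\mathcal{H}'_\Lambda$, this forces $1\notin\{\mathcal{H}_\Lambda,\mathcal{H}_\Lambda\}_\omega$, contradicting the hypothesis. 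The only genuinely delicate point is the observation that $\mathcal{H}'_\Lambda\simeq V^X$ provides this constant-free subalgebra containing every $h'_i$; the remaining ingredients (Hamiltonian functions differing by constants on a connected manifold, centrality of $1$, and the dimension count from (\ref{ExacSeq})) are routine. One could alternatively invoke Proposition~\ref{LieHam1} to conclude $\mathcal{H}_\Lambda\simeq V^X\oplus\mathbb{R}$, but the direct computation above avoids tracking the extension class and is cleaner.
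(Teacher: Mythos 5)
Your proof is correct, but it is organized differently from the paper's. The paper argues in two short steps: first, the hypothesis $1\in\{\mathcal{H}_\Lambda,\mathcal{H}_\Lambda\}_\omega$ rules out $\mathcal{H}_\Lambda\simeq V^X\oplus\mathbb{R}$, because the derived algebra of a Lie algebra isomorphic to $V^X\oplus\mathbb{R}$ cannot contain the constant $1$; second, it invokes Proposition~\ref{LieHam1} as a black box to conclude that no Lie--Hamilton algebra isomorphic to $V^X$ can exist. You instead run a direct contradiction: assuming some $\mathcal{H}'_\Lambda\simeq V^X$ exists, you use the exact sequence (\ref{ExacSeq}) and connectedness to rebuild $\mathcal{H}_\Lambda=\langle h'_1,\ldots,h'_r,1\rangle$ with $h'_i\in\mathcal{H}'_\Lambda$, and then compute that $\{\mathcal{H}_\Lambda,\mathcal{H}_\Lambda\}_\omega\subseteq\mathcal{H}'_\Lambda\not\ni 1$. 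Note that your steps producing this basis are precisely the paper's own proof of the forward implication of Proposition~\ref{LieHam1}, so in substance you have inlined that proposition rather than cited it. Each route has a virtue: the paper's is shorter given that Proposition~\ref{LieHam1} is already established, while yours is self-contained and sidesteps a point the paper leaves implicit --- namely that an \emph{abstract} isomorphism $\mathcal{H}_\Lambda\simeq V^X\oplus\mathbb{R}$ need not send $1$ into the $\mathbb{R}$ summand, so the claim that its derived algebra avoids $1$ really requires a small dimension count (under $\varphi$ the derived algebra surjects onto $[V^X,V^X]$ with $1$ in the kernel, forcing $\dim\{\mathcal{H}_\Lambda,\mathcal{H}_\Lambda\}_\omega\geq\dim[V^X,V^X]+1$, whereas the derived algebra of $V^X\oplus\mathbb{R}$ has dimension exactly $\dim[V^X,V^X]$). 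Your explicit basis computation makes that subtlety disappear.
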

\begin{proof} If $1\in \{\mathcal{H}_\Lambda,\mathcal{H}_\Lambda\}_\omega$, then $\mathcal{H}_\Lambda$ cannot be isomorphic to $V^X\oplus\mathbb{R}$ because the derived Lie algebra of $\mathcal{H}_\Lambda$, i.e. $\{\mathcal{H}_\Lambda,\mathcal{H}_\Lambda\}_\omega$, contains the constant function $1$ and the derived Lie algebra of a $\mathcal{H}_\Lambda$ isomorphic to $V^X\oplus \mathbb{R}$ does not. In view of Proposition \ref{LieHam1}, system $X$ does not admit any Lie--Hamilton algebra isomorphic to $V^X$.
\end{proof}

\begin{proposition}\label{UniExt} If $X$ is a Lie--Hamilton system on a connected manifold $N$ admitting a $V^X$ of Hamiltonian vector fields with respect to a symplectic structure $\omega$ that does not possess any Lie--Hamilton algebra $(\mathcal{H}_\Lambda,\{\cdot,\cdot\}_\omega)$ isomorphic to $V^X$, then all its Lie--Hamilton algebras (with respect to the Lie bracket $\{\cdot,\cdot\}_\omega$) are {\b isomorphic}.
\end{proposition}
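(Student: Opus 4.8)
The plan is to combine the exact sequence (\ref{ExacSeq}) with the fact that, on a connected manifold, a Hamiltonian function is determined by its Hamiltonian vector field up to an additive constant. Let $\varphi$ denote the globally defined map on $C^\infty(N)$ sending each function to minus its Hamiltonian vector field; it is a Lie algebra homomorphism for $\{\cdot,\cdot\}_\omega$, and since $N$ is connected and $\omega$ is non-degenerate, its kernel is exactly the line $\langle 1\rangle$ of constant functions. Any Lie--Hamilton algebra $\overline{\mathcal{H}}_\Lambda$ of $X$ is of the form ${\rm Lie}(\{h_t\}_{t\in\mathbb{R}},\{\cdot,\cdot\}_\omega)$ with $\varphi(h_t)=-X_t$, so that $\varphi(\overline{\mathcal{H}}_\Lambda)={\rm Lie}(\{X_t\}_{t\in\mathbb{R}})=V^X$. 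Hence $\varphi$ restricts to a surjection $\overline{\mathcal{H}}_\Lambda\to V^X$ whose kernel is either $0$ or $\langle 1\rangle$, which is precisely the dichotomy furnished by (\ref{ExacSeq}).

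First I would dispose of the kernel-$0$ alternative. If $\ker(\varphi|_{\overline{\mathcal{H}}_\Lambda})=0$, then $\varphi$ identifies $\overline{\mathcal{H}}_\Lambda$ with $V^X$, i.e.\ $\overline{\mathcal{H}}_\Lambda\simeq V^X$, which the hypothesis forbids. Therefore every Lie--Hamilton algebra of $X$ satisfies $1\in\overline{\mathcal{H}}_\Lambda$ and $\dim\overline{\mathcal{H}}_\Lambda=\dim V^X+1$; each of them is thus a one-dimensional central extension of $V^X$ by $\langle 1\rangle$.

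The main step is to show that there is essentially only one such extension. Since every element of $V^X$ is Hamiltonian, $\varphi$ maps onto $V^X$, and $\varphi^{-1}(V^X)$ is a finite-dimensional Lie subalgebra of $(C^\infty(N),\{\cdot,\cdot\}_\omega)$ with $\dim\varphi^{-1}(V^X)=\dim V^X+\dim\langle 1\rangle=\dim V^X+1$. As each Lie--Hamilton algebra satisfies $\overline{\mathcal{H}}_\Lambda\subseteq\varphi^{-1}(V^X)$ and has the same dimension, I obtain the equality $\overline{\mathcal{H}}_\Lambda=\varphi^{-1}(V^X)$ for every Lie--Hamilton algebra of $X$. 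In other words, all the Lie--Hamilton algebras coincide as subspaces of $C^\infty(N)$, so they are in particular isomorphic. (Equivalently, and in the form that generalises beyond the symplectic planar case, fix a basis of $V^X$: in two given Lie--Hamilton algebras the Hamiltonian functions representing a fixed basis element differ only by an additive constant because $\ker\varphi=\langle 1\rangle$; the brackets $\{h_i,h_j\}_\omega$ are then literally the same functions in both algebras, so re-expressing them in the respective bases alters the defining $2$-cocycle only by a coboundary. The two central extensions therefore determine the same class in $H^2(V^X,\mathbb{R})$ and are isomorphic.)

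The point requiring care is the input from connectedness of $N$: it is exactly what forces $\ker\varphi=\langle 1\rangle$ to be one-dimensional, so that the sole ambiguity in choosing a Hamiltonian function for a given vector field is a single additive constant. Once this is secured, the equality $\overline{\mathcal{H}}_\Lambda=\varphi^{-1}(V^X)$ (or, cohomologically, the vanishing of the coboundary freedom) makes the conclusion immediate. The remaining verifications---that $\varphi$ is a homomorphism, that it surjects onto $V^X$, and the dimension count---are routine and rely only on $V^X$ consisting of Hamiltonian vector fields and on the exactness of (\ref{ExacSeq}).
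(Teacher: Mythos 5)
Your proof is correct, and like the paper's it actually establishes the stronger fact that the Lie--Hamilton algebras not merely are isomorphic but literally coincide as subspaces of $C^\infty(N)$; however, the route is organized differently. The paper argues pairwise: given two Lie--Hamilton algebras $\mathcal{H}_\Lambda$ and $\overline{\mathcal{H}}_\Lambda$, it uses the exact sequence (\ref{ExacSeq}) together with the hypothesis to place the constant $1$ in both, chooses Hamiltonian functions $h_i\in\mathcal{H}_\Lambda$ and $\bar h_i\in\overline{\mathcal{H}}_\Lambda$ for a fixed basis $X_1,\ldots,X_r$ of $V^X$, invokes connectedness to write $h_i=\bar h_i-\lambda_i$ with $\lambda_i\in\mathbb{R}$, and concludes $\mathcal{H}_\Lambda=\overline{\mathcal{H}}_\Lambda$ by cross-containment of bases. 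You instead compare each Lie--Hamilton algebra with a single canonical object, the full preimage $\varphi^{-1}(V^X)$, and get equality from the dimension count $\dim\varphi^{-1}(V^X)=\dim V^X+1$; connectedness enters exactly once, in showing $\ker\varphi=\langle 1\rangle$. The ingredients are the same (the exact sequence kills the kernel-zero alternative, connectedness pins Hamiltonian functions down up to additive constants), but your packaging is more canonical and modular: it exhibits $\varphi^{-1}(V^X)$ as the unique Lie--Hamilton algebra in the non-isomorphic-to-$V^X$ regime, avoiding any choice of a second algebra to compare against, whereas the paper's basis-matching is more elementary and explicit --- and is, in substance, what your parenthetical remark re-expresses cohomologically via cocycles differing by the coboundary of $X_i\mapsto\lambda_i$ in $H^2(V^X,\mathbb{R})$.
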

\begin{proof} Let $(\mathcal{H}_\Lambda,\{\cdot,\cdot\}_\omega)$ and $(\overline{\mathcal{H}}_\Lambda,\{\cdot,\cdot\}_\omega)$ be two Lie--Hamilton algebras for $X$. Since they are not isomorphic to $V^X$ and in view of the exact sequence (\ref{ExacSeq}), then $1\in \mathcal{H}_\Lambda\cap \overline{\mathcal{H}}_\Lambda$ . Let $X_1,\ldots,X_r$ be a basis of $V^X$. Every vector field $X_i$ admits a Hamiltonian function $h_i\in\mathcal{H}_\Lambda$ and another $\bar{h}_i\in\overline{\mathcal{H}}_\Lambda$. The functions $h_1,\ldots,h_r$ are linearly independent and the same applies to $\bar h_1,\ldots,\bar h_r$. Then,  $\{h_1,\ldots,h_r,1\}$ is a basis for $\mathcal{H}_\Lambda$ and  $\{\bar{h}_1,\ldots,\bar{h}_r,1\}$   is a basis for $\overline{\mathcal{H}}_\Lambda$. As $N$ is connected, then $h_i=\bar{h}_i -\lambda_i$ with $\lambda_i\in\mathbb{R}$ for each $i\in\mathbb{R}$. Hence, the functions $h_i$ belong to $\overline{\mathcal{H}}_\Lambda$ and the functions $\bar{h}_i$ belong to ${\mathcal{H}}_\Lambda$. Thus $\mathcal{H}_\Lambda=\overline{\mathcal{H}}_\Lambda$.
\end{proof}

\section{Local classification of Lie--Hamilton systems on the plane}\label{Classification}

In this section we describe the local structure of Lie--Hamilton systems on the plane, i.e. given the minimal Lie algebra of a Lie--Hamilton system $X$ on the plane, we prove that $V^X$ is locally diffeomorphic around a generic point of $V^X$ to one of the Lie algebras given in Table \ref{table3}. We also prove that, around a generic point of $V^X$, the Lie--Hamilton algebras of $X$ must have one of the algebraic structures described in Table \ref{table3}.

If $X$ is a Lie--Hamilton system, its minimal Lie algebra must be locally diffeomorphic to one of the Lie algebras of the GKO classification that consists of Hamiltonian vector fields with respect to a Poisson structure. As we are concerned with generic points of minimal Lie algebras, Lemma \ref{lem:local_sym} ensures that $V^X$ is locally diffeomorphic around generic points to a Lie algebra of Hamiltonian vector fields with respect to a symplectic structure. So, its minimal Lie algebra is locally diffeomorphic to one of the Lie algebras of the GKO classification consisting of Hamiltonian vector fields with respect to a symplectic structure on a certain open contractible subset of its domain. By determining which of the Lie algebras of the GKO classification admit such a property, we can classify the local structure of all Lie--Hamilton systems on the plane. This relevant result can be stated as follows: 

\begin{proposition}\label{NoGo} The primitive Lie algebras ${\rm  P}^{\alpha\ne 0}_1$, {\rm P}$_4$, {\rm P}$_6$--{\rm P}$_8$ and the imprimitive ones {\rm I}$_2$, {\rm I}$_3$, {\rm I}$_6$, {\rm I}$_7$, {\rm I}$^{(\alpha\neq -1)}_8$, {\rm I}$_9$--{\rm I}$_{11}$, {\rm I}$_{13}$, {\rm I}$_{15}$, ${\rm I}^{(\alpha\neq -1)}_{16}$, {\rm I}$_{17}$--{\rm I}$_{20}$ are not Lie algebras of Hamiltonian vector fields on any $U\subset \mathbb{R}^2$.
\end{proposition}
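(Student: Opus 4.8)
The plan is to derive a contradiction from the necessary conditions for Hamiltonicity collected in Theorem \ref{Char}, applied to the modular generating systems recorded in Table \ref{table1}. First I would reduce everything to the symplectic setting: if any algebra $V$ on the list were a Lie algebra of Hamiltonian vector fields with respect to some Poisson bivector on an open $U\subset\mathbb{R}^2$, then Lemma \ref{lem:local_sym} would yield a smaller open neighbourhood of a generic point on which $V$ consists of Hamiltonian vector fields relative to a \emph{symplectic} structure. Hence it suffices to rule out symplectic Hamiltonicity on every open subset. Throughout, divergences are computed with respect to $\Omega=\dd x\wedge \dd y$.

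Most of the classes fall immediately to Corollary \ref{Cor:NonDiv}. For P$_1^{(\alpha\neq 0)}$, P$_4$, P$_6$--P$_8$, I$_2$, I$_3$, I$_6$, I$_8^{(\alpha\neq-1)}$, I$_9$--I$_{11}$, I$_{13}$, I$_{16}^{(\alpha\neq-1)}$ and I$_{17}$--I$_{20}$ the modular generating system of Table \ref{table1} is built out of the divergence-free fields $\partial_x,\partial_y$ (or $\partial_x$, resp. $\partial_y$, alone). Yet each such algebra contains a field of nonzero divergence, e.g. $x\partial_x+y\partial_y$ (divergence $2$) for P$_4$ and P$_7$; $x\partial_x$ (divergence $1$) for P$_6$, P$_8$, I$_2$, I$_3$, I$_6$, I$_9$--I$_{11}$, I$_{18}$, I$_{20}$; $y\partial_y$ for I$_{13}$; $x\partial_x+\alpha y\partial_y$ (divergence $1+\alpha$) for I$_8^{(\alpha\neq-1)}$ and I$_{16}^{(\alpha\neq-1)}$; $\alpha(x\partial_x+y\partial_y)+y\partial_x-x\partial_y$ (divergence $2\alpha$) for P$_1^{(\alpha\neq0)}$; and $x\partial_x+(ry+x^r)\partial_y$, $2x\partial_x+ry\partial_y$ (divergences $1+r$, $2+r$) for I$_{17}$, I$_{19}$. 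Since a divergence-free modular generating system together with Hamiltonicity would force, by Corollary \ref{Cor:NonDiv}, every element of $V$ to be divergence-free, the presence of such a field rules out symplectic Hamiltonicity of these algebras on any open subset.

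The delicate cases are I$_7$ and I$_{15}$, whose modular generating system $(\,\partial_x,\,y\partial_y\,)$ is not divergence-free, so Corollary \ref{Cor:NonDiv} no longer applies; here I would use condition i) of Theorem \ref{Char}, i.e. (\ref{DivCon}), directly. For I$_7$ the element $x\partial_x\in V$ expands over the generating system as $x\partial_x=x\,\partial_x+0\cdot y\partial_y$, so ${\rm div}(x\partial_x)=1$ whereas $x\,{\rm div}\,\partial_x+0\cdot{\rm div}(y\partial_y)=0$. For I$_{15}$ the field $\eta_1(x)\partial_y\in V$ expands as $\eta_1(x)\partial_y=0\cdot\partial_x+(\eta_1(x)/y)\,y\partial_y$, so ${\rm div}(\eta_1(x)\partial_y)=0$ while $(\eta_1(x)/y)\,{\rm div}(y\partial_y)=\eta_1(x)/y$, which is not identically zero. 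In both cases (\ref{DivCon}) fails, so by the direct part of Theorem \ref{Char} neither algebra can consist of Hamiltonian vector fields relative to a symplectic form.

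It remains only to observe that in every case the offending mismatch is a fixed nonzero smooth function on the relevant domain, hence nonzero on the dense subset $U_1$ of any open $U$ we might restrict to; the necessary condition therefore fails on all open subsets, and the reduction of the first paragraph completes the proof. The single genuine obstacle is the pair I$_7$, I$_{15}$: because their generating systems carry the divergence-one field $y\partial_y$, one cannot appeal to the divergence-free shortcut and must instead verify by hand that a well-chosen test element breaks the full divergence condition (\ref{DivCon}).
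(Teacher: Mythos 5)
Your proof is correct, and its overall architecture is the same as the paper's: reduce to the symplectic setting, eliminate most classes with Corollary~\ref{Cor:NonDiv}, and check condition~(\ref{DivCon}) of Theorem~\ref{Char} directly for the exceptions. The genuine difference is your treatment of ${\rm I}_7$. The paper's proof claims that \emph{every} listed algebra except ${\rm I}_{15}$ has a divergence-free modular generating system, ``see the elements between brackets in Table~\ref{table1}''; but for ${\rm I}_7$ those bracketed fields are $\partial_x$ and $y\partial_y$, and ${\rm div}(y\partial_y)=1$, so the paper's justification does not apply to ${\rm I}_7$ as literally stated. You spot exactly this, decline the shortcut, and instead verify that (\ref{DivCon}) fails directly: $x\partial_x = x\,\partial_x + 0\cdot y\partial_y$ has divergence $1$, while $x\,{\rm div}\,\partial_x + 0\cdot{\rm div}(y\partial_y)=0$. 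This is the same mechanism the paper (and you) use for ${\rm I}_{15}$, and it makes your argument more watertight than the paper's own. For completeness, the paper's route can also be repaired without your direct check: $(\mathbb{R}^2,\,\partial_x,\,x\partial_x-y\partial_y)$ is a divergence-free modular generating system for ${\rm I}_7$, since $y\partial_y = x\,\partial_x-(x\partial_x-y\partial_y)$ and $x^2\partial_x+xy\partial_y = 2x^2\,\partial_x - x\,(x\partial_x-y\partial_y)$, after which Corollary~\ref{Cor:NonDiv} applies because ${\rm I}_7$ contains $y\partial_y$. The remaining cosmetic difference is that your first paragraph (reduction from Poisson to symplectic via Lemma~\ref{lem:local_sym}) appears in the paper in the discussion preceding Proposition~\ref{NoGo} rather than inside its proof; making it explicit, as you do, is a small but real improvement in self-containedness. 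All your divergence computations and case assignments check out against Table~\ref{table1}.
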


\begin{proof}
Apart from I$_{15}$, the remaining Lie algebras detailed in this statement  admit a modular generating system whose elements are divergence free on the whole $\mathbb{R}^2$ (see the elements between brackets in Table \ref{table1}). At the same time, we also observe in Table \ref{table1} that these Lie algebras admit a vector field with non-zero divergence on any $U$. In view of Corollary \ref{Cor:NonDiv}, they cannot be Lie algebras of Hamiltonian vector fields with respect to any symplectic structure on any $U\subset\mathbb{R}^2$.

In the case of the Lie algebra I$_{15}$, we have that $(\mathbb{R}^2_{y\neq 0},X_1=\partial_x,X_2=y\partial_y)$ form a generating modular system of I$_{15}$. Observe that $X_2=y\partial_y$ and $X_3=\eta_1(x)\partial_y$, where $\eta_1$ is a non-null function ---it forms with $\eta_2(x),\ldots,\eta_r(x)$ a basis of solutions of a system of $r$ first-order linear homogeneous differential equations in normal form with constant coefficients (cf. \cite{GKP92,1880})--- satisfy ${\rm div} X_2=1$ and ${\rm div}X_3=0$. Obviously, ${\rm div} X_3\neq {\rm div}X_2/\eta_1$ on any $U$. So,
I$_{15}$ does not satisfy Theorem \ref{Char} on any $U$ and it is not a Lie algebra of Hamiltonian vector fields on any  $U\subset \mathbb{R}^2$.
\end{proof}

To simplify the notation, we assume in this section that all objects are defined on a contractible $U\subset \mathbb{R}^2$ of the domain of the Lie algebra under study. Additionally, $U_1$ stands for a dense open subset of $U$. In the following two subsections, we explicitly show that {\em all} of  the Lie algebras of the GKO classification {\em not} listed in Proposition \ref{NoGo} consist of Hamiltonian vector fields on any $U$ of their domains. For each Lie algebra, we compute the integrating factor $f$ of $\omega$ given by (\ref{ww}) turning the elements of a basis of the Lie algebra into Hamiltonian vector fields and we work out their Hamiltonian functions.
Additionally, we obtain the algebraic structure of all the Lie--Hamilton algebras of the Lie--Hamilton systems admitting such minimal Lie algebras.

We stress that the main results covering the resulting Hamiltonian functions $h_i$, the symplectic form $\omega$ and the Lie Hamiltonian algebra are summarized in Table~\ref{table3} accordingly to the GKO classification of Table~\ref{table1}, so that the reader may skip all the details given in Subsections 6.1 and 6.2 concerning the corresponding computations for primitive Lie--Hamiltonian algebras and imprimitive ones. In this respect,
we  point out that  the Lie algebras of the class I$_{14}$   gives rise to {\em two} non-isomorphic Lie--Hamilton algebras: I$_{14A}$ whenever $1\notin \langle \eta_1,\ldots,\eta_r\rangle$ and I$_{14B}$ otherwise. Consequently, we  obtain {\em twelve}  finite-dimensional real  Lie algebras of Hamiltonian vector fields on the plane.

In order to shorten the presentation of the following results, we remark that for some of such Lie--Hamilton algebras their symplectic structure is just the standard one:

\begin{proposition}\label{GoSym}
The Lie algebras ${\rm P}^{(\alpha=0)}_1$, {\rm P}$_5$, {\rm I}$_8^{(\alpha=-1)}$,  {\rm I}$_{14B}$ and {\rm I}$_{16}^{(\alpha=-1)}$   are Lie algebras of Hamiltonian vector fields  with respect to the standard symplectic form $\omega=\dd x\wedge \dd y$, that is, $f\equiv  1$.
\end{proposition}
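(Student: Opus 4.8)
The plan is to verify, for each of the five listed Lie algebras, that the constant integrating factor $f\equiv 1$ satisfies the criterion of Theorem~\ref{Char} with respect to the modular generating system indicated in brackets in Table~\ref{table1}. By Lemma~\ref{Lem:IntFac}, a vector field $X$ is Hamiltonian with respect to $\omega=\dd x\wedge\dd y$ precisely when $Xf=-f\,{\rm div}X$; for $f\equiv 1$ the left-hand side vanishes, so the condition collapses to the single requirement that $\operatorname{div}X=0$ for every $X$ in the algebra. Thus the entire statement reduces to checking that each of ${\rm P}^{(\alpha=0)}_1$, {\rm P}$_5$, {\rm I}$_8^{(\alpha=-1)}$, {\rm I}$_{14B}$ and {\rm I}$_{16}^{(\alpha=-1)}$ consists of divergence-free vector fields relative to $\Omega=\dd x\wedge\dd y$.

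First I would invoke Corollary~\ref{Cor:NonDiv}: if the modular generating system of the algebra consists of divergence-free vector fields and the algebra is already known to be Hamiltonian with respect to \emph{some} symplectic form, then every element is divergence free. However, to conclude that the standard form itself works, the cleanest route is to verify directly via Theorem~\ref{Char} that conditions $i)$ and $ii)$ hold with $f\equiv 1$. Condition $ii)$ is immediate since the nowhere-vanishing constant function $1$ is trivially a common integrating factor for any collection of divergence-free fields. Condition $i)$ requires that whenever $X|_{U_1}=\sum_i g_i X_i|_{U_1}$, one has $\operatorname{div}X=\sum_i g_i\operatorname{div}X_i$; but if all the generators $X_i$ are divergence free this reads $\operatorname{div}X=0$, which must then be checked against the explicit divergence of $X$. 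The substantive computation is therefore to confirm that \emph{all} basis vector fields of each algebra, not merely the modular generators, have vanishing divergence.

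The calculations themselves are routine. For ${\rm P}^{(\alpha=0)}_1$ the fields are $\partial_x,\partial_y$ and $y\partial_x-x\partial_y$, each visibly divergence free. For {\rm P}$_5$ one checks $\partial_x,\partial_y,x\partial_x-y\partial_y,y\partial_x,x\partial_y$, where the trace $\partial_x(x)+\partial_y(-y)=0$ handles the only nontrivial case. For {\rm I}$_8^{(\alpha=-1)}$ the generator $x\partial_x+\alpha y\partial_y$ has divergence $1+\alpha$, which vanishes exactly at $\alpha=-1$, explaining the restriction. For {\rm I}$_{14B}$, the defining condition $1\in\langle\eta_1,\ldots,\eta_r\rangle$ together with the structure $\partial_x,\eta_j(x)\partial_y$ gives divergence $\eta_j'(x)\cdot 0=0$ for the second-type fields and $0$ for $\partial_x$. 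For {\rm I}$_{16}^{(\alpha=-1)}$ the field $x\partial_x+\alpha y\partial_y$ again forces $\alpha=-1$, while $x^k\partial_y$ are all divergence free. In each case the divergences of the generators vanish identically on $\mathbb{R}^2$.

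The one point requiring care—and the main (mild) obstacle—is to confirm that divergence-freeness of the modular generators propagates to \emph{every} element of the algebra, rather than merely the listed basis. Since every $X\in V$ is expressible as $\sum_i g_i X_i$ on the dense subset $U_1$ with the $X_i$ divergence free, condition $i)$ of Theorem~\ref{Char} forces $\operatorname{div}X=0$ on $U_1$, and by continuity on all of $U$; this is exactly the content of Corollary~\ref{Cor:NonDiv}. Once divergence-freeness is established algebra-wide, Lemma~\ref{Lem:IntFac} immediately yields that each element is Hamiltonian with respect to $\omega=\dd x\wedge\dd y$, so $f\equiv 1$ serves as the universal integrating factor and the proposition follows.
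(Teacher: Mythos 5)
Your strategy is the same as the paper's: verify conditions $i)$ and $ii)$ of Theorem~\ref{Char} for the modular generating system $(U,\partial_x,\partial_y)$ with $f\equiv 1$, which reduces everything to divergence computations. Your computations are correct, including the observation that the divergence $1+\alpha$ of $x\partial_x+\alpha y\partial_y$ is what singles out $\alpha=-1$ in I$_8$ and I$_{16}$. The paper's proof is exactly this: it notes from Table~\ref{table1} that these algebras admit the modular generating system $(U,X_1=\partial_x,X_2=\partial_y)$, that all their elements have zero divergence (so condition~(\ref{DivCon}) holds), and that $\partial_x,\partial_y$ are Hamiltonian for $\dd x\wedge\dd y$, whence Theorem~\ref{Char} concludes.

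However, your final paragraph contains a genuine logical error precisely at the step you single out as ``the main (mild) obstacle''. You claim that divergence-freeness propagates from the generators to every element of $V$ because condition $i)$ of Theorem~\ref{Char} ``forces'' ${\rm div}\,X=0$, and that ``this is exactly the content of Corollary~\ref{Cor:NonDiv}''. Both appeals are circular. Condition $i)$ is a \emph{hypothesis} of Theorem~\ref{Char} that you are in the middle of verifying; it asserts precisely that ${\rm div}\,X=\sum_i g_i\,{\rm div}\,X_i$ (here $=0$), so invoking it to obtain ${\rm div}\,X=0$ assumes what is to be proved. Likewise, Corollary~\ref{Cor:NonDiv} has as hypothesis that $V$ \emph{already} consists of Hamiltonian vector fields with respect to a symplectic form, which is the conclusion of the Proposition. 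The gap closes with one line that your basis computations already support: $V$ is a finite-dimensional real Lie algebra spanned by the listed basis fields, every element of $V$ is a linear combination of these with \emph{constant} coefficients, and the divergence operator is $\mathbb{R}$-linear; hence checking the basis suffices to get ${\rm div}\,X=0$ for all $X\in V$. With that replacement, both sides of condition $i)$ vanish identically, condition $ii)$ holds trivially with $f\equiv 1$, and Theorem~\ref{Char} yields the Proposition, exactly as in the paper.
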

\begin{proof}
We see in Table \ref{table1} that all the aforementioned Lie algebras admit a modular generating system $(U,X_1=\partial_x,X_2=\partial_y)$ and all their elements have zero divergence. So, they satisfy condition (\ref{DivCon}). The vector fields $X_1,X_2$ are Hamiltonian with respect to the symplectic structure $\omega=\dd x\wedge \dd y$. In view of Theorem \ref{Char}, the whole Lie algebra consists of Hamiltonian vector fields relative to $\omega$.
\end{proof}

\begin{table}[t] {\footnotesize
 \noindent
\caption{{\small The   classification of the $4+8$ finite-dimensional real  Lie algebras of Hamiltonian vector fields on $\mathbb{R}^2$. For I$_{12}$, I$_{14A}$ and I$_{16}$, we have $j=1,\dots,r$ and $ r\ge1$; in  I$_{14B}$ the index $j=2,\dots, r$.
  }}
\label{table3}
\medskip
\noindent\hfill
\begin{tabular}{ l l l l l }
\hline
&&&  &\\[-1.5ex]
\#&Primitive  & Hamiltonian functions $h_i$& $\omega$ &  Lie--Hamilton algebra\ \\[+1.0ex]
\hline
&  & &  &\\[-1.5ex]
P$_1$& $A_0\simeq {\mathfrak{iso}}(2)$ & ${y, \ -x, \ \tfrac 12 (x^2+y^2)},\ 1$ & ${\rm d}x\wedge {\rm d}y$ & $ \overline{\mathfrak{iso}}(2)$\\[+1.0ex]
P$_2$& $\mathfrak{sl}(2 )$ & $\displaystyle{- \frac 1y, \ -\frac xy, \  -\frac{x^2+y^2}{y}}$ & $\displaystyle{    \frac{ {\rm d}x\wedge {\rm  d}y }{ y^{2}  } }$ & $\mathfrak{sl}(2)$ or ${\mathfrak{sl}}(2)\oplus\mathbb{R}$\\[+2ex]
P$_3$& $\mathfrak{so}(3)$ &$\displaystyle { \frac{-1}{2 (1+x^2+y^2) },\  \frac{ y}{1+x^2+y^2},  }$ &$\displaystyle \frac{\dd x\wedge \dd y}{(1+x^2+y^2)^{2}}$ & $\mathfrak{so}(3)$ or ${\mathfrak{so}}(3)\oplus\mathbb{R}$\\[+2.5ex]
& &$\displaystyle { - \frac{x}{1+x^2+y^2} }$, \ 1 &  &  \\[+2.5ex]
 P$_5$& $\mathfrak{sl}(2 )\ltimes\mathbb{R}^2$  & ${y,\  -x, \ xy,\  \frac 12 y^2,\  -\frac 12 x^2, \ 1}$&$\dd x\wedge \dd y$  &$\overline{\mathfrak{sl}(2 )\ltimes \mathbb{R}^2}\simeq \mathfrak{h}_6$    \\[+1.5ex]
\hline
 &&&  &\\[-1.5ex]
\# & Imprimitive &  Hamiltonian functions $h_i$& $\omega$  &  Lie--Hamilton algebra \\[+1.0ex]
\hline
 &  &\\[-1.5ex]
I$_1$& $\mathbb{R}$ &$\int^y{f(y')\dd y'}$ & $f(y)\dd x\wedge \dd y$ & $ \mathbb{R}$ or $\mathbb{R}^2$\\[+1.0ex]
I$_4$& $\mathfrak{sl}(2 )$ (type II) & $\displaystyle{ \frac 1 {{x-y}   } ,\ \frac{x+y}{2(x-y)},\ \frac{xy}{x-y} }$ &$\displaystyle   {\frac{\dd x\wedge \dd y} {{(x-y)^{2}}}}$ &$\mathfrak{sl}(2 )$ or $\mathfrak{sl}(2)\oplus\mathbb{R}$\\[+2.0ex]
I$_5$& $\mathfrak{sl}(2 )$ (type III) &$\displaystyle { {-\frac{1}{2y^{2}},\ -\frac{x}{y^{2}},\ -\frac{x^2}{2 y^{2}  }  }} $  &$\displaystyle{\frac{\dd x\wedge \dd y}{y^3}}$ &$\mathfrak{sl}(2 )$ or  ${\mathfrak{sl}}(2)\oplus\mathbb{R}$\\[+2.0ex]
I$_8$& $B_{-1}\simeq  { {\mathfrak{iso}}}(1,1)$ &${y,\ -x,\ xy,\ 1 }$ & $\dd x\wedge \dd y$ & ${ \overline{\mathfrak{iso}}}(1,1) \simeq \mathfrak{h}_4$\\[+2.0ex]
I$_{12}$& $\mathbb{R}^{r+1}$ &$-\int^x \!\! {f(x')\dd x'}, - \int^x \!\! {f(x')\xi_j(x')\dd x'}$ & $f(x)\dd x\wedge \dd y$ & $\mathbb{R}^{r+1}$ or $\mathbb{R}^{r+2}$ \\[+2.0ex]
I$_{14A}$& $\mathbb{R} \ltimes \mathbb{R}^{r}$ (type I) &$y,\  - \int^x {\eta_j(x')\dd x'} $,\quad $1\notin  \langle \eta_j \rangle$ & $ \dd x\wedge \dd y$ & $\mathbb{R}\ltimes\mathbb{R}^{r}$ or $(\mathbb{R}\ltimes \mathbb{R}^{r})\oplus\mathbb{R}$\\[+2.0ex]
I$_{14B}$& $\mathbb{R} \ltimes \mathbb{R}^{r}$ (type II) &$y,\ -x, \ - \int^x{\eta_j(x')\dd x'},\ 1 $ & $ \dd x\wedge \dd y$ & $ \overline{(\mathbb{R} \ltimes \mathbb{R}^{r}) }$\\[+2.0ex]
I$_{16}$& $C_{-1}^r \simeq  {\mathfrak{h}_2\ltimes\mathbb{R}^{r+1}}$ & $\displaystyle{ {y,\  -x, \ xy,  \  -\frac{x^{j+1}}{j+1} ,\ 1} }$ & $\dd x\wedge \dd y$ & $ \overline{\mathfrak{h}_2\ltimes\mathbb{R}^{r+1}} $\\[+2.0ex]
    \hline
 \end{tabular}
\hfill}
\end{table}

\subsection{Primitive  Lie--Hamilton algebras}

\subsubsection{Lie algebra P$^{(\alpha=0)}_{1}$: $A_0\simeq  {\mathfrak{iso}}(2) $}

Proposition \ref{GoSym} states that $A_0$ is a Lie algebra of Hamiltonian vector fields with respect to the symplectic form $\omega=\dd x\wedge \dd y$. The basis of vector fields $X_1, X_2, X_3$ of  $A_0$ given in Table \ref{table1} satisfy the commutation relations
 $$
 [X_1,X_2]=0,\qquad [X_1,X_3]= -X_2,\qquad [X_2,X_3]=X_1.
 $$
So, $A_0$ is isomorphic to the  two-dimensional Euclidean algebra $ {\mathfrak{iso}}(2) $.
Using the relation $\iota_{X}\omega={\rm d}h$ between a Hamiltonian vector field and one of its Hamiltonian functions, we get that the Hamiltonian functions for $X_1,X_2,X_3$ read
$$
h_1=y,\qquad h_2=-x,\qquad h_3=\tfrac 12(x^2+y^2),
$$
correspondingly. Along with $h_0=1$, these functions  fulfil 
$$
\{h_1,h_2\}_\omega=h_0,\quad \{h_1,h_3\}_\omega=h_2,\quad \{h_2,h_3\}_\omega=-h_1,\quad  \{h_0,\cdot\}_\omega=0 .
$$
Consequently, if $X$ is a Lie--Hamilton system admitting a minimal Lie algebra $A_0$, i.e. $X=\sum_{i=1}^3b_i(t)X_i$ for certain $t$-dependent functions $b_1,b_2,b_3$ such that $V^X\simeq A_0$, then it admits a Lie--Hamiltonian structure $(U,\omega,h=\sum_{i=1}^3b_i(t)h_i$) and a Lie--Hamilton algebra $(\ham,\{\cdot,\cdot\}_\omega)$ generated by the functions $\langle h_1,h_2,h_3,h_0\rangle$. Hence, $(\ham,\{\cdot,\cdot\}_\omega)$ is a finite-dimensional real Lie algebra of Hamiltonian functions isomorphic to the {\em centrally extended}  Euclidean algebra $\overline{\mathfrak{iso}}(2) $~\cite{azca}.
Note that $1\in \{\mathcal{H}_\Lambda,\mathcal{H}_\Lambda\}_\omega$. In virtue of Corollary \ref{NoGo2}, system $X$ does not admit any Lie--Hamilton algebra isomorphic to $V^X$. Moreover, Proposition~\ref{UniExt} ensures that all Lie--Hamilton algebras for $X$ are isomorphic to $\overline{\mathfrak{iso}}(2)$.

\subsubsection{Lie algebra P$_2$: $\mathfrak{sl}(2 )$}

We have already proved in Section \ref{BLHS} that the Lie algebra of vector fields P$_2$, which is spanned by the vector fields (\ref{vectRiccati2}), is a Lie algebra of Hamiltonian vector fields with respect to the symplectic structure (\ref{aa}).
The   Hamiltonian functions $h_1,h_2,h_3$ for $X_1$, $X_2$ and $X_3$  were found to be (\ref{ab}), correspondingly. Then, a Lie system $X$ with minimal Lie algebra P$_2$, i.e. a system of the form $X=\sum_{i=1}^3b_i(t)X_i$ for certain $t$-dependent functions $b_1,b_2,b_3$ such that $V^X={\rm P}_2$, is a Lie--Hamilton system
with respect to the Poisson bracket induced by (\ref{aa}). Then, $X$ admits a Lie--Hamiltonian structure $(U,\omega,h=\sum_{i=1}^3b_i(t)h_i)$ and  a Lie--Hamilton algebra isomorphic to $\mathfrak{sl}(2)$ with commutation relations (\ref{sl2Rh}). In view of Proposition \ref{LieHam2}, any Lie--Hamilton system associated to P$_2$ also admits a Lie--Hamilton algebra isomorphic to $\mathfrak{sl}(2)\oplus\mathbb{R}$. In view of Proposition \ref{LieHam1}, these are the only algebraic structures of the Lie--Hamilton algebras for such Lie--Hamilton systems.

\subsubsection{Lie algebra P$_3$: $\mathfrak{so}(3)$}
In this case, we must determine a symplectic structure $\omega$ turning the elements of the modular generating system $(U_1,X_1,X_2)$ of P$_3$ into locally Hamiltonian vector fields with respect to a symplectic structure $\omega$ (\ref{ww}). In view of Theorem \ref{Char}, this ensures that every element of P$_3$ is Hamiltonian with respect to $\omega$. The condition $\mathcal{L}_{X_1}\omega=0$ gives
$$
 y\frac{\partial f}{\partial x}-x\frac{\partial f}{\partial y} =0.
$$
Applying the characteristics method, we find that $f$ must be constant along the integral curves of the system
$x\, \dd x+ y\, \dd y =0$,  namely curves with $x^2+y^2=k\in\mathbb{R}$. So, $f=f(x^2+y^2)$. If we now require  $\mathcal{L}_{X_2}\omega=0$, we obtain that
$$
 (1+x^2-y^2)\frac{\partial f}{\partial x}+2xy\frac{\partial f}{\partial y}+4xf=0.
$$
Using that $f=f(x^2+y^2)$, we have
$$
\frac{f'}{f}=-\frac{2}{1+x^2+y^2} \Rightarrow f(x^2+y^2)=(1+x^2+y^2)^{-2}.
$$
  Then,
  $$
\omega=\frac{\dd x\wedge \dd y}{(1+x^2+y^2)^2}.
$$
So, P$_3$ becomes a Lie algebra of Hamiltonian vector fields relative to $\omega$. The vector fields $X_1$, $X_2$ and $X_3$ admit the Hamiltonian functions
$$
h_1=-\frac{1}{2(1+x^2+y^2)},\qquad h_2=\frac{y}{1+x^2+y^2},\qquad h_3=-\frac{x}{1+x^2+y^2},
$$
which along $h_0=1$ satisfy the commutation relations
$$
\{h_1,h_2\}_\omega=-h_3,\qquad \{h_1,h_3\}_\omega= h_2 ,\qquad \{h_2,h_3\}_\omega=-4h_1-h_0,\qquad   \{h_0,\cdot\}_\omega=0,
$$
with respect to the Poisson bracket induced by $\omega$. Then, $\langle h_1,h_2,h_3, h_0\rangle$ span a Lie algebra of Hamiltonian functions isomorphic to a central extension of  ${\mathfrak{so}}(3)$, denoted $\overline{\mathfrak{so}}(3)$. It is well known  \cite{azca}   that the central extension associated with $h_0$ is a trivial one; if we define $\bar h_1= h_1+h_0/4$, then $\langle \bar h_1,h_2,h_3\rangle$ span a Lie algebra  isomorphic to ${\mathfrak{so}}(3)$ and $\overline{\mathfrak{so}}(3)\simeq \mathfrak{so}(3)\oplus\mathbb{R}$. In view of this and using Propositions \ref{LieHam1} and \ref{LieHam2}, a Lie system admitting a minimal Lie algebra P$_3$ admits Lie--Hamilton structures isomorphic to $\mathfrak{so}(3)\oplus\mathbb{R}$ and $\mathfrak{so}(3)$.

\subsubsection{Lie algebra P$_5$: $\mathfrak{sl}(2 )\ltimes \mathbb{R}^2$}

From Proposition \ref{GoSym}, this Lie algebra consists of Hamiltonian vector fields with respect to the symplectic form $\omega=\dd x\wedge \dd y$. The vector fields of the basis
$X_1,\ldots,X_5$ for P$_5$ given in Table~\ref{table1} are Hamiltonian vector fields relative to $\omega$ with  Hamiltonian functions
$$
h_1=y,\qquad h_2=-x,\qquad h_3=xy,\qquad h_4=\tfrac 12 {y^2},\qquad h_5=-\tfrac
 12 x^2,
$$
correspondingly. These functions together with $h_0=1$ satisfy the relations
$$
\begin{array}{llll}
\{h_1,h_2\}_\omega=h_0, &\quad \{h_1,h_3\}_\omega=-h_1,&\quad \{h_1,h_4\}_\omega=0,&\quad \{h_1,h_5\}_\omega=-h_2,\\[2pt]
\{h_2,h_3\}_\omega=h_2,&\quad\{h_2,h_4\}_\omega=-h_1,&\quad \{h_2,h_5\}_\omega=0,&\quad
\{h_3,h_4\}_\omega=2h_4, \\[2pt]
\{h_3,h_5\}_\omega=-2h_5,&\quad
\{h_4,h_5\}_\omega=h_3, &\quad   \{h_0,\cdot\}_\omega=0 . &
\end{array}
$$
Hence $\langle h_1,\ldots,h_5  , h_0\rangle$ span a Lie algebra $\overline{\mathfrak{sl}(2 )\ltimes \mathbb{R}^2}$ which is a non-trivial central extension of P$_5$, i.e.~it is not isomorphic to P$_5\oplus\mathbb{R}$. In fact,   it is isomorphic to the so called two-photon Lie algebra   $\mathfrak{h}_6$ (see~\cite{BBF09} and references therein); this  can be brought into the form $\mathfrak{h}_6\simeq \mathfrak{sl}(2)
\oplus_s \mathfrak{h}_3$,
where $\mathfrak{sl}(2 ) \simeq \langle h_3,h_4,h_5\rangle $,
$\mathfrak{h}_3\simeq  \langle h_1,h_2,h_0\rangle$ is the Heisenberg--Weyl Lie algebra, and
$\oplus_s$ stands for a semidirect sum. Furthermore, $\mathfrak{h}_6$ is also  isomorphic to the  $(1+1)$-dimensional centrally extended Schr\"odinger  Lie algebra~\cite{Schrod}.

In view of Corollary \ref{NoGo2}, Proposition \ref{UniExt} and following the same line of reasoning than in previous cases, a Lie system admitting a minimal Lie algebra P$_5$ only possesses Lie--Hamilton algebras isomorphic 	to $\mathfrak{h}_6$.


\subsection{Imprimitive  Lie--Hamilton algebras}

\subsubsection{Lie algebra I$_1$: $\mathbb{R}$}
Note that $X_1=\partial_x$ is a modular generating basis of I$_1$. By solving the PDE $\mathcal{L}_{X_1}\omega=0$ with $\omega$ written in the form (\ref{ww}), we obtain that $\omega=f(y)\dd x\wedge \dd y$ with $f(y)$ being any non-vanishing function of $y$. In view of Theorem \ref{Char}, the Lie algebra I$_1$ becomes a Lie algebra of Hamiltonian vector fields with respect to $\omega$. Observe that $X_1$, a basis of I$_1$, has a Hamiltonian function, $h_1=\int^y f(y')\dd y'$. As $h_1$ spans a Lie algebra isomorphic to $\mathbb{R}$, it is obvious that a system $X$ with $V^X\simeq I_1$ admits a Lie--Hamilton algebra isomorphic to I$_1$. Proposition \ref{LieHam2} yields that $X$ admits a Lie--Hamilton algebra isomorphic to $\mathbb{R}^2$. In view of Proposition \ref{LieHam1}, these are the only algebraic structures for the Lie--Hamilton algebras for $X$.

\subsubsection{Lie algebra I$_4$: $\mathfrak{sl}(2)$ of type II}
This Lie algebra admits a modular generating system $(\mathbb{R}^2_{x\neq y},X_1= {\partial}_x+  {\partial}_y, X_2= x{\partial}_x +y{\partial}_y)$.
Let us search for a symplectic form $\omega$ (\ref{ww})  turning $X_1$ and $X_2$ into local Hamiltonian vector fields with respect to it. Using Theorem \ref{Char}, we can ensure that every element of I$_4$ is Hamiltonian with respect to $\omega$.  By imposing $\mathcal{L}_{X_i}\omega=0$ ($i=1,2$), we find that
$$
\frac{\partial f}{\partial x}+\frac{\partial f}{\partial y}=0,\qquad x\frac{\partial f}{\partial x}+y\frac{\partial f}{\partial y}+2 f=0.
$$
 Applying the method of characteristics to the first equation, we have that
 $\dd x=\dd y $.  Then $f=f(x-y)$. Using this in the second equation, we obtain  a particular solution $f= (x-y)^{-2}$ which  gives rise to a  closed and non-degenerate two-form, namely
\begin{equation}
\omega=\frac{{\rm d} x \wedge {\rm d} y}{(x -y)^2}  .
\label{aa2}
\end{equation}
Hence,
 \begin{equation*}
 h_1=\frac{1}{x-y} ,\qquad
h_2=  \frac{x +y}{2(x -y)}  ,\qquad
h_3=\frac{xy}{x -y}
\end{equation*}
are Hamiltonian functions of the vector fields $X_1,X_2,X_3$ of the basis for I$_4$ given in Table \ref{table1}, respectively. Using the Poisson bracket $\{\cdot,\cdot\}_\omega$ induced by (\ref{aa2}), we obtain that $h_1,h_2$ and $h_3$ satisfy
\begin{equation*}
\{h_1,h_2\}_\omega=-h_1,\qquad \{h_1,h_3\}_\omega=-2h_2,\qquad \{h_2,h_3\}_\omega=-h_3.
\end{equation*}
Then, $(\langle h_1,h_2,h_3\rangle,\{\cdot,\cdot\}_\omega) \simeq \mathfrak{sl}(2)$.
Consequently, if $X$ is a Lie--Hamilton system admitting a minimal Lie algebra I$_4$, it admits a Lie--Hamilton algebra that is isomorphic to  $\mathfrak{sl}(2)$ or, from Proposition \ref{LieHam2}, to $\mathfrak{sl}(2)\oplus\mathbb{R}$. From Proposition \ref{LieHam1}, these are the only algebraic structures for its Lie--Hamilton algebras.

\subsubsection{Lie algebra I$_5$: $\mathfrak{sl}(2)$ of type III}
Observe that $(U,X_1=\partial_x,X_2=2x\partial_x+y\partial_y)$ form a modular generating system of I$_5$. The conditions $\mathcal{L}_{X_1}\omega=\mathcal{L}_{X_2}\omega=0$ ensuring that $X_1$ and $X_2$ are locally Hamiltonian with respect to $\omega$ give rise to the equations
 $$
 \frac{\partial f}{\partial x}=0,\qquad 2x\frac{\partial f}{\partial x}+y\frac{\partial f}{\partial y}+3f=0,
 $$
so that  $f(x,y)=y^{-3}$ and $X_1,X_2$ become locally Hamiltonian vector fields relative to the symplectic form
 $$
\omega=\frac{\dd x\wedge \dd y}{y^3} .
$$
In view of Theorem \ref{Char}, this implies that every element of I$_5$ is Hamiltonian with respect to $\omega$.
  Hamiltonian functions for the elements of the basis $X_1,X_2,X_3$ for I$_5$ given in Table \ref{table1} read
$$
h_1=-\frac{1}{2y^{2}},\qquad  h_2= -\frac{x}{y^{2}},\qquad  h_3=-\frac{x^2}{2 y^{2}  }.
$$
They span a Lie algebra isomorphic to $\mathfrak{sl}(2)$:
$$
\{h_1,h_2\}_\omega=-2h_1,\qquad \{h_1,h_3\}_\omega=-h_2,\qquad \{h_2,h_3\}_\omega=-2h_3.
$$
Therefore, a Lie system possessing a minimal Lie algebra I$_5$ possesses a Lie--Hamilton algebra isomorphic to $\mathfrak{sl}(2)$ and, in view of Proposition \ref{LieHam2}, to $\mathfrak{sl}(2)\oplus\mathbb{R}$. In view of Proposition \ref{LieHam1}, these are the only possible algebraic structures for the Lie--Hamilton algebras for $X$.

\subsubsection{Lie algebra I$_8^{(\alpha=-1)}$: $B_{-1}\simeq{\mathfrak{iso}}(1,1) $}

In view of Proposition \ref{GoSym}, this Lie algebra consists of Hamiltonian vector fields with respect to the standard symplectic structure $\omega=\dd x\wedge \dd y$.
The elements of the basis for $B_{-1}$ detailed in Table \ref{table1} satisfy the commutation relations
 $$
 [X_1,X_2]=0,\qquad [X_1,X_3]= X_1,\qquad [X_2,X_3]=-X_2 .
 $$
Hence, these vector fields
 span a Lie  algebra isomorphic to the  (1+1)-dimensional Poincar\'e algebra $ {\mathfrak{iso}}(1,1) $. Their   corresponding  Hamiltonian functions turn out to be
$$
h_1=y,\qquad h_2=-x,\qquad h_3=xy ,
$$
which together with a central generator $h_0=1$ fulfil the commutation relations
$$
\{h_1,h_2\}_\omega=h_0,\qquad \{h_1,h_3\}_\omega=-h_1,\qquad \{h_2,h_3\}_\omega=h_2,\qquad \{h_0,\cdot\}_\omega=0 .
$$
Thus, a Lie  system $X$ admitting a minimal Lie algebra $B_{-1}$ possesses a Lie--Hamilton algebra isomorphic to the centrally extended Poincar\'e algebra $ {\overline{\mathfrak{iso}}}(1,1) $ which, in turn, is also isomorphic to the harmonic oscillator    algebra $\mathfrak{h}_4$. As is well  known~\cite{azca}, this Lie algebra is not of the form $\mathfrak{iso}(1,1)\oplus\mathbb{R}$, then Proposition \ref{LieHam1} ensures that $X$ does not admit any Lie--Hamilton algebra isomorphic to $\mathfrak{iso}(1,1)$. Moreover, Proposition \ref{UniExt} states that all Lie--Hamilton algebras of $X$ must be isomorphic to $ {\overline{\mathfrak{iso}}}(1,1)$.

\subsubsection{Lie algebra I$_{12}$: $\mathbb{R}^{r+1}$}

The vector field $X_1=\partial_y$ is a modular generating system for I$_{12}$ and all the elements of this Lie algebra have zero divergence. By solving the PDE $\mathcal{L}_{X_1}\omega=0$, where we recall that $\omega$ has the form (\ref{ww}),  we see that $f=f(x)$ and $X_1$ becomes Hamiltonian for any non-vanishing function $f(x)$. In view of Theorem \ref{Char}, the remaining elements of I$_{12}$ become automatically Hamiltonian with respect to $\omega$. Then, we obtain that $X_1,
\ldots,X_{r+1}$ are Hamiltonian vector fields relative to the symplectic structure $\omega=f(x)\dd x\wedge \dd y$ with   Hamiltonian functions
$$
h_1=- \int^{x}{f(x')\dd x'} ,\qquad  h_{j+1} =-  \int^{x} f(x')\xi_j(x')\dd x' ,\qquad j=1,\ldots,r ,\qquad r\ge 1,
$$
which span the Abelian Lie algebra $\mathbb{R}^{r+1}$. In consequence, a Lie--Hamilton system $X$ related to a minimal Lie algebra I$_{12}$ possesses a Lie--Hamilton algebra isomorphic to $\mathbb{R}^{r+1}$. From Propositions \ref{LieHam1} and \ref{LieHam2}, it only admits an additional Lie--Hamilton algebra isomorphic to $\mathbb{R}^{r+2}$.

 \subsubsection{I$_{14}$: $\mathbb{R}\ltimes\mathbb{R}^{r}$}
The functions $\eta_1(x),\ldots,\eta_r(x)$ form a fundamental system of solutions of a system of $r$ first-order differential equations with constant coefficients \cite{GKP92,HA75}. Hence, none of them vanishes in an open interval of $\mathbb{R}$ and I$_{14}$ is such that
$(U_1,X_1,X_2)$, where $X_1$ and $X_2$ are given in Table \ref{table1},  form a modular generating system. Since all the elements of I$_{14}$ have divergence zero and using Theorem \ref{Char}, we infer that I$_{14}$  consists of Hamiltonian vector fields relative to a symplectic structure if and only if $X_1$ and $X_2$ are locally Hamiltonian vector fields with respect to a symplectic structure. By requiring $\mathcal{L}_{X_i}\omega=0$, with $i=1,2$ and $\omega$ of the form (\ref{ww}), we obtain that
$$
 \frac{\partial f}{\partial x}=0,\qquad \eta_j(x)  \frac{\partial f}{\partial y}=0,\qquad j=1,\ldots,r.
$$
So, I$_{14}$ is only compatible with $\omega=\dd x\wedge \dd y$. The Hamiltonian functions corresponding to $X_1,\ldots,X_{r+1}$ turn out to be
\begin{equation}\label{hh}
h_1=y ,\qquad  h_{j+1} =-  \int^{x} \eta_j(x')\dd x' ,\qquad j=1,\ldots,r ,\qquad r\ge 1.
\end{equation}
We remark that two {\em different} Lie--Hamiltonian algebras, denoted I$_{14A}$ and I$_{14B}$,  are spanned by the above Hamiltonian functions:

\begin{itemize}

\item I$_{14A}$: If $1\notin \langle \eta_1,\ldots,\eta_{r}\rangle$,  then the functions (\ref{hh}) span a Lie algebra $\mathbb{R}\ltimes\mathbb{R}^{r}$ and, by considering Propositions~\ref{LieHam1} and \ref{LieHam2}, this case only admits an additional Lie--Hamilton algebra isomorphic to  $(\mathbb{R}\ltimes \mathbb{R}^r)\oplus\mathbb{R}$.

\item I$_{14B}$: If $1\in \langle \eta_1,\ldots,\eta_{r}\rangle$, we can choose a basis of I$_{14}$ in such a way that there exists a function, let us say $\eta_1$, equal to $1$. Then the Hamiltonian functions (\ref{hh}) turn out to be
$$
h_1=y ,\qquad  h_2=-x,\qquad h_{j+1} =-  \int^{x} \eta_j(x')\dd x' ,\qquad j=2,\ldots,r ,\qquad r\ge 1,
$$
 which require a central generator $h_0=1$  in order to close   a centrally extended Lie algebra $(\ham,\{\cdot,\cdot\}_\omega)\simeq  \overline{(\mathbb{R} \ltimes \mathbb{R}^{r})}$.

\end{itemize}

In view of the above, a Lie system $X$ with a minimal Lie algebra I$_{14}$ is a Lie--Hamiltonian system. Its Lie--Hamilton algebras can be isomorphic to I$_{14}$ or I$_{14}\oplus \mathbb{R}$ when  $1\notin \langle \eta_1,\ldots,\eta_{r}\rangle$ (class I$_{14A}$). If $1\in \langle \eta_1,\ldots,\eta_{r}\rangle$ (class I$_{14B}$), a Lie--Hamilton algebra is isomorphic to $\overline{\mathbb{R} \ltimes \mathbb{R}^{r}}$ and since  $1\in \{\ham ,\ham\}_\omega$, we obtain from Corollary \ref{NoGo2} and Proposition \ref{UniExt} that every Lie--Hamilton algebra for $X$ is isomorphic to it.

\subsubsection{Lie algebra I$^{(\alpha=-1)}_{16}$: $C_{-1}^r \simeq \mathfrak{h}_2\ltimes\mathbb{R}^{r+1}$}

In view of Proposition \ref{GoSym}, this Lie algebra consists of Hamiltonian vector fields with respect to the standard symplectic structure. The resulting  Hamiltonian functions for $X_1,\ldots, X_{r+3}$ are given by
   $$
h_1=y,\qquad h_2=-x,\qquad h_3=xy ,\qquad h_{j+3}=-\frac{x^{j+1}}{j+1} ,\qquad j=1,\dots,r, \qquad r\ge 1,
$$
which again require an additional central generator $h_0=1$ to close on a finite-dimenisonal Lie algebra. The commutation relations for this Lie algebra are given by
$$
\begin{array}{llll}
&\{h_1,h_2\}_\omega=h_0,\quad &\{h_1,h_3\}_\omega=-h_1,\quad &\{h_2,h_3\}_\omega=h_2,\\[2pt]
&\{h_1,h_{4}\}_\omega=-h_{2},\quad  &\{h_1,h_{k+4}\}_\omega=-( k +1)h_{k+3},\quad &\{h_2,h_{j+3}\}_\omega=0  ,\\[2pt]
&  \{h_3,h_{j+3} \}_\omega=-(j+1) h_{j+3},\quad   &\{h_{j+3} ,h_{k+4}\}_\omega= 0,  \quad &\{h_0,\cdot\}_\omega=0 ,
\end{array}
$$
with $j=1,\dots, r$ and $k=1,\dots,r-1$, which define the centrally extended Lie algebra $ \overline{\mathfrak{h}_2\ltimes\mathbb{R}^{r+1}}$. 

Then, given a Lie system $X$ with a minimal Lie algebra $C_{-1}^r$, the system is a Lie--Hamiltonian one which admits a Lie--Hamilton algebra isomorphic to $\overline{\mathfrak{h}_2\ltimes \mathbb{R}^{r+1}}$. As $1\in \left\{\overline{\mathfrak{h}_2\ltimes\mathbb{R}^{r+1}},\overline{\mathfrak{h}_2\ltimes\mathbb{R}^{r+1}}\right\}_\omega$, Corollary \ref{NoGo2} and Proposition \ref{UniExt} ensure that  every Lie--Hamilton algebra for $X$ is isomorphic to $\overline{\mathfrak{h}_2\ltimes\mathbb{R}^{r+1}}$.


\section{Application to $\mathfrak{sl}(2)$-Lie systems on the plane}

In this  section, we employ our techniques to study the properties of certain Lie systems on the plane admitting a Vessiot--Guldberg Lie algebra isomorphic to $\mathfrak{sl}(2)$, the so-called {\it $\mathfrak{sl}(2)$-Lie systems} \cite{LS12,Pi12}. More specifically, we analyse in detail  Lie systems   used to describe Milne--Pinney equations \cite{SIGMA}, Kummer--Schwarz equations \cite{CGL11} and complex Riccati equations with real $t$-dependent coefficients  \cite{Eg07}. As a byproduct, our results also cover the 
$t$-dependent frequency harmonic oscillator.


\subsection{Milne--Pinney equations}
The Milne--Pinney equation, which is well known for its multiple properties and applications in physics (see~\cite{LA08} and references therein), takes the form
$$
\frac{\dd^2x}{\dd t^2}=-\omega^2(t)x+\frac{c}{x^3},
$$
where $\omega(t)$ is any $t$-dependent function and $c$ is a real constant.
By adding a new variable $y\equiv \dd x/\dd t$, we can study these equations through the first-order system
\begin{equation}\label{FirstLie}
\left\{
\begin{aligned}
\frac{\dd x}{\dd t}&=y,\\
\frac{\dd y}{\dd t}&=-\omega^2(t)x+\frac{c}{x^3},
\end{aligned}\right.
\end{equation}
which is a Lie system \cite{PW, SIGMA}. We recall that (\ref{FirstLie}) can be regarded as  the equations of motion of the one-dimensional Smorodinsky--Winternitz system~\cite{BCHLS13Ham, WSUF65}; moreover, when the parameter $c$ vanishes, this reduces to the harmonic oscillator (both with   a $t$-dependent frequency).
Explicitly, (\ref{FirstLie}) is the associated system to the $t$-dependent vector field
$
X_t=X_3+\omega^2(t)X_1,
$
where
\begin{equation}\label{FirstLieA}
X_1=-x\frac{\partial}{\partial y},\qquad X_2=\frac 12 \left(y\frac{\partial}{\partial y}-x\frac{\partial}{\partial x}\right),\qquad X_3=y\frac{\partial}{\partial x}+\frac{c}{x^3}\frac{\partial}{\partial y},
\end{equation}
span a finite-dimensional real Lie algebra $V$ of vector fields isomorphic to $\mathfrak{sl}(2)$ with commutation relations given by
\begin{equation}\label{FirstLieB}
[X_1,X_2]=X_1,\qquad [X_1,X_3]=2 X_2,\qquad  [X_2,X_3]=X_3 .
\end{equation}

There are {\em  four} classes of finite-dimensional Lie algebras of vector fields isomorphic to $\mathfrak{sl}(2)$ in the GKO  classification: P$_2$ and I$_3$--I$_5$. To determine which one is locally diffeomorphic to $V$, we first  find out whether $V$ is imprimitive or not. In this respect,
 recall that $V$ is  {\em imprimitive} if there exists a one-dimensional distribution $\mathcal{D}$ invariant under the action (by Lie brackets) of the elements of $V$. Hence,  $\mathcal{D}$  is spanned by
 a non-vanishing vector field
$$
Y=\fff_x(x,y)\frac{\partial}{\partial x}+\fff_y(x,y)\frac{\partial}{\partial y},
$$
 which must be invariant under the action of $X_1$, $X_2$ and $X_3$. As $\fff_x$ and $\fff_y$ cannot vanish simultaneously, $Y$ can be taken either of the following local forms
\begin{equation}\label{xxx}
Y=\frac{\partial}{\partial x}+\fff_y\frac{\partial}{\partial y},\qquad Y=\fff_x\frac{\partial}{\partial x}+\frac{\partial}{\partial y}.
\end{equation}
Let us assume that $\mathcal{D}$ is spanned by the first one and search for   $Y$. Now, if $\mathcal{D}$ is invariant under the Lie brackets of the elements of $V$, we have that
\begin{subequations}
\begin{align}
&\quad\qquad \qquad
 \mathcal{L}_{X_1}Y=\left(1-x\frac{\partial \fff_y}{\partial y}\right)\frac{\partial}{\partial y}=\hh_1 Y,\label{con2}\\
&\mathcal{L}_{X_2}Y=\frac 12\left[\frac{\partial}{\partial x}+\left(y\frac{\partial \fff_y}{\partial y}-x\frac{\partial \fff_y}{\partial x}- \fff_y\right)\frac{\partial}{\partial y}\right]=\hh_2 Y,\label{con1}\\
&\mathcal{L}_{X_3}Y=-\fff_y\frac{\partial}{\partial x}+\left(\frac{3c}{x^4}+y\frac{\partial \fff_y}{\partial x}+\frac c {x^3}\frac{\partial \fff_y}{\partial y}\right)\frac{\partial}{\partial y}=\hh_3 Y,\label{con3}
\end{align}
\end{subequations}
for certain functions $\hh_1,\hh_2,\hh_3$ locally defined on $\mathbb{R}^2$. The left-hand side of (\ref{con2}) has no term $\partial_x$ but the right-hand one has it provided $\hh_1\neq 0$. Therefore, $\hh_1=0$ and $\fff_y= { y}/{x}+G$ for a certain $G=G(x)$. Next by introducing this result in
  (\ref{con1}), we find that $\hh_2=1/2$ and $2G + x    G^\prime =0$, so that $G(x)=\mm/x^2$ for   $\mm\in\mathbb{R}$.
   Substituting this into (\ref{con3}), we obtain that $\hh_3=-(\mm+x y)/x^2$
and $ \mm^2=-4c$. Consequently, when  $c>0$  it does not exist any non-zero $Y$ spanning locally $\mathcal{D}$ satisfying (\ref{con2})--(\ref{con3})  and $V$ is therefore primitive, whilst   if $c\le 0$   there exists a vector field
$$
Y= \frac{\partial}{\partial x}+\left(\frac{y}{x} +\frac{\mm}{x^2}\right) \frac{\partial}{\partial y},\qquad \mm^2=-4c,
$$
   which spans  $\mathcal{D}$, so that $V$ is  imprimitive.   The case of $\mathcal{D}$ being   spanned by the second form of $Y$   (\ref{xxx}) can be analysed analogously and drives to the same conclusion. 

Therefore the system     (\ref{FirstLie})  belongs to different classes 
within the GKO classification according to  the value of the parameter $c$. The final result   is established in the following statement.

\begin{proposition}\label{prop71}  The system   (\ref{FirstLie}), corresponding to the the Milne--Pinney equations,  is locally diffeomorphic to   {\rm P}$_2$ for $c>0$, {\rm I}$_4$ for $c<0$ and {\rm I}$_5$ for $c=0$.
\end{proposition}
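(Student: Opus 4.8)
The plan is to identify the GKO class of $V=\langle X_1,X_2,X_3\rangle$ using only diffeomorphism invariants, exploiting the fact that the relations (\ref{FirstLieB}) already give $V\simeq\mathfrak{sl}(2)$ and that, by the GKO classification of Table~\ref{table1}, a planar Lie algebra isomorphic to $\mathfrak{sl}(2)$ must be locally diffeomorphic to exactly one of the four classes ${\rm P}_2$, ${\rm I}_3$, ${\rm I}_4$, ${\rm I}_5$. Since a local diffeomorphism preserves the rank of the associated distribution, its primitivity, and the number of invariant one-dimensional distributions, it suffices to compute these invariants for $V$ and match them against the four candidates.

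First I would discard ${\rm I}_3$ by a rank count. The vector fields (\ref{FirstLieA}) give $r^V(x,y)=2$ at every point with $x,y\neq 0$: indeed $X_1=-x\,\partial_y$ and $X_3=y\,\partial_x+(c/x^3)\,\partial_y$ are linearly independent there. As ${\rm I}_3=\langle\partial_x,x\partial_x,x^2\partial_x\rangle$ has a rank-one distribution everywhere, $V$ cannot be locally diffeomorphic to ${\rm I}_3$. This leaves ${\rm P}_2$ (the only primitive candidate, since among ${\rm P}_1$--${\rm P}_8$ only ${\rm P}_2$ is three-dimensional and isomorphic to $\mathfrak{sl}(2)$) together with the imprimitive ${\rm I}_4$ and ${\rm I}_5$.

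Next I would invoke the primitivity computation already carried out above the statement, which shows that an invariant one-dimensional distribution $\mathcal{D}=\langle Y\rangle$, with $Y=\partial_x+(y/x+\mu/x^2)\partial_y$, exists precisely when $\mu^2=-4c$ has a real solution (the complementary local form $Y=\fff_x\partial_x+\partial_y$ yielding nothing new). For $c>0$ there is no such $Y$, so $V$ is primitive and must coincide with ${\rm P}_2$. For $c\le 0$ the algebra is imprimitive, hence locally diffeomorphic to ${\rm I}_4$ or ${\rm I}_5$, and the two cases are separated by the discriminant: $\mu^2=-4c$ yields the \emph{two} distinct directions $\mu=\pm 2\sqrt{-c}$ when $c<0$, but only the \emph{single} (double) direction $\mu=0$ when $c=0$.

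It then remains to check that this count distinguishes ${\rm I}_4$ from ${\rm I}_5$, which I expect to be the crux and the one genuine calculation. On the normal forms of Table~\ref{table1} one verifies directly that ${\rm I}_4=\langle\partial_x+\partial_y,\,x\partial_x+y\partial_y,\,x^2\partial_x+y^2\partial_y\rangle$ leaves invariant the two distributions $\langle\partial_x\rangle$ and $\langle\partial_y\rangle$ (imposing $[X_i,Y]\in\langle Y\rangle$ on a general $Y=\partial_x+g\,\partial_y$ forces $g$ constant and then $g=0$), whereas ${\rm I}_5=\langle\partial_x,\,2x\partial_x+y\partial_y,\,x^2\partial_x+xy\partial_y\rangle$ leaves invariant only $\langle\partial_y\rangle$ (the same procedure forces $g=C/y$ from the first two generators and a contradiction from the third). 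Matching the counts, $V$ with $c<0$ (two invariant distributions) is locally diffeomorphic to ${\rm I}_4$ and $V$ with $c=0$ (one invariant distribution) to ${\rm I}_5$, which together with the $c>0$ case completes the proof.
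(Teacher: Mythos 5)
Your proof is correct, and at the decisive step it takes a genuinely different route from the paper. The two arguments share the opening: the discriminant computation $\mu^2=-4c$ settles primitivity, and for $c>0$ the GKO table then forces ${\rm P}_2$, exactly as in the paper. For $c\le 0$, however, the paper proceeds constructively: it solves the PDE systems $\phi_*(\partial_x+\partial_y)=-x\partial_y$, $\phi_*(x^2\partial_x+y^2\partial_y)=y\partial_x+(c/x^3)\partial_y$ (and the ${\rm I}_5$ analogues) to exhibit explicit local diffeomorphisms, e.g. $u=\lambda|x-y|^{-1/2}$, $v=-\lambda(x+y)|x-y|^{-1/2}/2$ with $\lambda^4=-4c$, and it is the solvability condition on $\lambda$ that selects ${\rm I}_4$ when $c<0$ and ${\rm I}_5$ when $c=0$; note that this construction makes ruling out ${\rm I}_3$ unnecessary. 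You instead discard ${\rm I}_3$ by the rank of $\mathcal{D}^V$ and then separate ${\rm I}_4$ from ${\rm I}_5$ by a diffeomorphism invariant, the number of invariant one-dimensional distributions: two for ${\rm I}_4$ and for $V$ with $c<0$ (the directions $\mu=\pm 2\sqrt{-c}$), one for ${\rm I}_5$ and for $V$ with $c=0$. This is sound---the count is manifestly preserved under $\phi_*$, and your verifications on the normal forms of Table \ref{table1} are correct. What your route buys is economy and insight: no PDEs to integrate, and the trichotomy is read off the discriminant of $\mu^2=-4c$. What it gives up is the explicit change of variables, which has independent value (the paper reuses such maps throughout Section 7, e.g. to relate Kummer--Schwarz and Milne--Pinney equations), and it leans on the completeness of the GKO classification in all three cases, whereas the paper needs it only for $c>0$. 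One point you should state explicitly to make the exact counts airtight (the paper's own ``drives to the same conclusion'' plays the analogous role): the second local form in (\ref{xxx}) produces no further invariant distributions. For the Milne--Pinney fields the only genuinely new candidate is $\langle\partial_y\rangle$, which fails since $[X_3,\partial_y]=-\partial_x$; likewise, the second-form computation for ${\rm I}_4$ and ${\rm I}_5$ yields only $\langle\partial_y\rangle$, confirming the counts two and one, respectively.
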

\begin{proof}  Since $V$ is  primitive when $c>0$ and  this is  isomorphic
to $\mathfrak{sl}(2)$,  the GKO classification given in Table 1 implies  that $V$ is locally diffeomorphic to
the primitive class P$_2$.

 Let us now consider that $c<0$ and prove that the system is
 diffeomorphic to  the class  I$_4$. We do this by showing that there exists  a local diffeomorphism 
$\phi:(x,y)\in U\subset \mathbb{R}^2_{x\ne y}\mapsto \bar U\subset (u,v)\in\mathbb R^2_{u\neq 0}$, satisfying that $\phi_*$ maps the basis  for I$_4$  listed in Table 1 into
 (\ref{FirstLieA}). Due to the Lie bracket $[X_1,X_3]=2X_2$, verified in both bases,  it is only necessary to search    the map for the generators $X_1$ and $X_3$  (so for  $X_2$ this will be automatically fulfilled). 
  By writing  in coordinates
$$
\phi_*(\partial_x+\partial_y)=-x\partial_y,\qquad  \phi_*(x^2\partial_x+y^2\partial_y)=y\partial_x+c/x^3\partial_y ,
$$
we obtain
$$
\left(
\begin{array}{cc}
\frac{\partial u}{\partial x}&\frac{\partial u}{\partial y}\\[4pt]
\frac{\partial v}{\partial x}&\frac{\partial v}{\partial y}\\
\end{array}
\right)\left(\begin{array}{c}
              1\\[4pt]
              1
             \end{array}\right)=
             \left(
             \begin{array}{c}
              0\\[4pt] -u
             \end{array}\right),\qquad
             \left(
\begin{array}{cc}
\frac{\partial u}{\partial x}&\frac{\partial u}{\partial y}\\[4pt]
\frac{\partial v}{\partial x}&\frac{\partial v}{\partial y}\\
\end{array}
\right)\left(\begin{array}{c}
              x^2\\[4pt]
              y^2
             \end{array}\right)=
             \left(
             \begin{array}{c}
              v\\ [4pt]  {c}/{u^3}
             \end{array}\right).
$$
Hence, ${\partial_x u}+{\partial_y u}=0\Rightarrow u=\ff(x-y)$ for a certain $\ff:z\in\mathbb{R}\mapsto g(z)\in \mathbb{R}$. Since $x^2\partial_x u +y^2\partial_y u  =v$, then $v=(x^2-y^2) \ff'$, where $\ff'$ is the derivative of $\ff(z)$ in terms of $z$.
Using now that ${\partial_x v}+{\partial_y v}=-u$ we get $2(x-y)\ff^\prime=-\ff$ so that $\ff=\lambda/ |x-y|^{1/2}$  where $\lambda\in\mathbb{R}\backslash \{0\}$. Substituting  this into the remaining equation $x^2\partial_x v +y^2\partial_y v  =c/u^3$,  we find that $\lambda^4=-4 c$. 
Since $c<0$, we consistently find that 
$$
u=\frac{\lambda}{|x-y|^{1/2}},\qquad v=-\frac{\lambda(x+y )}{2|x-y|^{1/2}},\qquad \lambda^4=-4 c.
$$
Finally,  let us set $c=0$ and search for a   local diffeomorphism 
$\phi:(x,y)\in U\subset \mathbb{R}^2_{y\ne 0}\mapsto \bar U\subset (u,v)\in\mathbb R^2$ such that $\phi_*$ maps the basis corresponding to I$_5$ into (\ref{FirstLieA}); namely
$$
\phi_*(\partial_x )=-x\partial_y,\qquad  \phi_*(x^2\partial_x+xy\partial_y)=y\partial_x,
$$
yielding
$$
\left(
\begin{array}{cc}
\frac{\partial u}{\partial x}&\frac{\partial u}{\partial y}\\[4pt]
\frac{\partial v}{\partial x}&\frac{\partial v}{\partial y}\\
\end{array}
\right)\left(\begin{array}{c}
              1\\[4pt]
              0
             \end{array}\right)=
             \left(
             \begin{array}{c}
              0\\[4pt] -u
             \end{array}\right),\qquad
             \left(
\begin{array}{cc}
\frac{\partial u}{\partial x}&\frac{\partial u}{\partial y}\\[4pt]
\frac{\partial v}{\partial x}&\frac{\partial v}{\partial y}\\
\end{array}
\right)\left(\begin{array}{c}
              x^2\\[4pt]
              x y
             \end{array}\right)=
             \left(
             \begin{array}{c}
              v\\ [4pt] 0
             \end{array}\right).
$$
Hence, ${\partial_x u}=0\Rightarrow u=\ff_1(y)$ for a certain $\ff_1:\mathbb{R}\rightarrow\mathbb{R}$. Since $\partial_x v =-u$, then $v=-\ff_1(y)x+\ff_2(y)$ for another   $\ff_2:\mathbb{R}\rightarrow \mathbb{R}$.
Using now the PDEs of the second matrix, we see that 
$xy {\partial_y u}  =xy  { \ff^\prime _1} =v=-\ff_1 x+\ff_2  $, so that  
  $\ff_2=0$ and $\ff_1=\lambda/y $, where $\lambda\in\mathbb{R}\backslash \{0\}$. It can be checked that the remaining equation is so fulfilled. Therefore $u= \lambda/y $ and $v=-\lambda x/y$.
  \end{proof}

We remark that, since  the three classes P$_2$, I$_4$ and I$_5$ appear in Table 3,   system (\ref{FirstLie}) can always be associated
to a symplectic form turning their vector fields Hamiltonian. In this respect, 
  recall that  it was recently proved, that the  system (\ref{FirstLie})  is a
 Lie--Hamilton one for any value of $c$~\cite{BCHLS13Ham}.
However,   we shall show that  identifying   it  to one of the classes of
 the GKO  classification will be useful  to study the relation of this system to other ones.


\subsection{Second-order Kummer--Schwarz equations}

Let us turn now to the second-order Kummer--Schwarz equation  written as a first-order system in the form
\begin{equation}\label{FirstLie2}
\left\{\begin{aligned}
\frac{\dd x}{\dd t}&=y,\\
\frac{\dd y}{\dd t}&=\frac 32 \frac{y^2}{x}-2c\,x^3+2b_1(t)x,
\end{aligned}\right.
\end{equation}
where $b_1(t)$ is an arbitrary $t$-dependent function and $c$ is a real constant. This equation appears in several mathematical problems and it is related to relevant differential equations appearing in physics \cite{CGL11,LA08}.

It is well known that (\ref{FirstLie2}) is a Lie system \cite{CGL11,LS12}. In fact, it describes the integral curves of the $t$-dependent vector field
$X_t=X_3+b_1(t)X_1$ where
\begin{equation}\label{KS2}
X_1=2x\frac{\partial}{\partial y},\qquad X_2=x\frac{\partial}{\partial x}+2y\frac{\partial}{\partial y},\qquad X_3=y\frac{\partial}{\partial x}+\left(\frac{3y^2}{2x}-2c\,x^3\right)\frac{\partial}{\partial y},
\end{equation}
span a Lie algebra isomorphic to $\mathfrak{sl}(2)$ with commutation rules (\ref{FirstLieB}). Thus 
$V$ can be isomorphic to one of the   four $\mathfrak{sl}(2)$-Lie algebras of vector fields     in the GKO classification.  

As in the previous subsection, we analyse  if there exists a distribution $\mathcal{D}$ stable under $V$ and locally generated by a vector field $Y$ of the first form given in (\ref{xxx}) (the same results by assuming the second one). So, impossing  $\mathcal{D}$ to be stable under $V$ yields
\begin{subequations}
\begin{align}
&\qquad\qquad\,\,\,\,\mathcal{L}_{X_1}Y=2\left(x\frac{\partial \fff_y}{\partial y} -1\right)\frac{\partial}{\partial y}=\hh_1Y\label{con2p}
,\\
&\qquad\,\, \mathcal{L}_{X_2}Y=-\frac{\partial}{\partial x}+\left(x\frac{\partial \fff_y}{\partial x}+2y\frac{\partial \fff_y}{\partial y}-2\fff_y\right)\frac{\partial}{\partial y}=\hh_2Y,\label{con1p}\\
&\mathcal{L}_{X_3}Y=-\fff_y\frac{\partial}{\partial x}+\left[X_3\fff_y+\frac{3y^2}{2x^2}+6c\,x^2-\frac{3y}x\fff_y\right]\frac{\partial}{\partial y}=\hh_3 Y\label{con3p},
\end{align}
\end{subequations}
for certain functions $\hh_1,\hh_2,\hh_3$ locally defined on $\mathbb{R}^2$.
The left-hand side of (\ref{con2p}) has no term $\partial_x$ and the right-hand one does not have it provided $\hh_1=0$. Hence, $\hh_1=0$ and $\fff_y= y/x+F$ for a $F=F(x)$. In view of (\ref{con1p}), we then obtain  $\hh_2=-1$ and 
$F -x    F^\prime=0$, that is, $F(x) =   \mm x$ for   $\mm\in\mathbb{R}$. Substituting $g_y$ in (\ref{con3p}),  
we obtain that $\hh_3=-\mm x-y/x$
and $ \mm^2=-4c$. Hence, as in the Milne--Pinney equations, we find that 
if $c>0$  it does not exist any   $Y$ spanning locally $\mathcal{D}$ satisfying (\ref{con2p})--(\ref{con3p})  and $V$ 
is primitive, meanwhile  if $c\le 0$, then   $\mathcal{D}$ is spanned by the vector field
$$
Y= \frac{\partial}{\partial x}+\left(\frac{y}{x} + {\mm}{x}\right) \frac{\partial}{\partial y},\qquad  \mm^2=-4c,
$$
and $V$ is  imprimitive.

The precise classes of the  GKO classification  corresponding to the system (\ref{FirstLie2}) are summarized in the following proposition.

\begin{proposition}\label{prop72} The system   (\ref{FirstLie2}), associated with the second-order Kummer--Schwarz    equations,    is locally diffeomorphic to   P$_2$ for $c>0$, I$_4$ for $c<0$ and I$_5$ for $c=0$.  
\end{proposition}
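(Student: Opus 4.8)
The plan is to mirror the proof of Proposition \ref{prop71}, exploiting that the vector fields (\ref{KS2}) satisfy the same $\mathfrak{sl}(2)$ relations (\ref{FirstLieB}) and that the primitivity analysis preceding the statement has already established the dichotomy: $V$ is primitive for $c>0$ and imprimitive for $c\le 0$, the invariant one-dimensional distribution in the latter case being spanned by $Y=\partial_x+(y/x+\mu x)\partial_y$ with $\mu^2=-4c$. Since the four $\mathfrak{sl}(2)$-classes in Table \ref{table1} are P$_2$ (primitive) and I$_3$, I$_4$, I$_5$ (imprimitive), the case $c>0$ is settled at once: being primitive and isomorphic to $\mathfrak{sl}(2)$, $V$ can only be locally diffeomorphic to P$_2$.

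For $c<0$ I would construct an explicit local diffeomorphism $\phi:(x,y)\in\mathbb{R}^2_{x\neq y}\mapsto(u,v)$ sending the I$_4$ basis of Table \ref{table1} onto (\ref{KS2}). Because $[X_1,X_3]=2X_2$ holds both in the I$_4$ basis and in (\ref{KS2}), it suffices to prescribe the images of $X_1$ and $X_3$; the relation for $X_2$ then follows automatically, since a pushforward is a Lie-algebra homomorphism. Imposing $\phi_*(\partial_x+\partial_y)=2u\,\partial_v$ gives $u_x+u_y=0$, hence $u=g(x-y)$, together with $v_x+v_y=2u$; imposing $\phi_*(x^2\partial_x+y^2\partial_y)=v\,\partial_u+(\tfrac{3v^2}{2u}-2cu^3)\partial_v$ yields $v=(x^2-y^2)g'$ from the $\partial_u$-component. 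Substituting into $v_x+v_y=2u$ forces $(x-y)g'=g$, so $g(z)=\lambda z$ with $\lambda\neq 0$, and matching coefficients in the residual $\partial_v$-equation $x^2v_x+y^2v_y=\tfrac{3v^2}{2u}-2cu^3$ pins down $\lambda^2=-1/(4c)$, which is real precisely because $c<0$. This produces $u=\lambda(x-y)$, $v=\lambda(x^2-y^2)$, whose Jacobian determinant $2\lambda^2(x-y)$ is nonzero on $\mathbb{R}^2_{x\neq y}$, confirming local equivalence with I$_4$.

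For $c=0$ the same scheme maps the I$_5$ basis onto (\ref{KS2}) with $c=0$, again fixing only the images of $X_1$ and $X_3$ (the image of $X_2$ being determined by the bracket and landing in the correct span). From $\phi_*(\partial_x)=2u\,\partial_v$ one gets $u_x=0$, so $u=g_1(y)$ and $v=2g_1(y)x+g_2(y)$; the $\partial_u$-component of $\phi_*(x^2\partial_x+xy\partial_y)=v\,\partial_u+\tfrac{3v^2}{2u}\partial_v$ then reads $xy\,g_1'=v$, which forces $g_2=0$ and $yg_1'=2g_1$, whence $g_1=\lambda y^2$; the remaining $\partial_v$-equation is satisfied identically. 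Thus $u=\lambda y^2$, $v=2\lambda xy^2$, with Jacobian determinant $-4\lambda^2 y^3$ nonvanishing on the I$_5$ domain $\mathbb{R}^2_{y\neq0}$, gives the desired local diffeomorphism onto I$_5$.

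The only delicate point is the compatibility bookkeeping in the $c<0$ case: one must check that the overdetermined system for $\phi$ closes and that its single obstruction reproduces exactly the sign restriction $c<0$ (the analogue of $\lambda^4=-4c$ in Proposition \ref{prop71}). Once the ansatz $u=g(x-y)$ is adopted this collapses to elementary coefficient comparison, so I expect no conceptual difficulty beyond careful computation.
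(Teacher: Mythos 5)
Your proposal is correct and follows essentially the same route as the paper's own proof: the primitivity dichotomy settles $c>0$ as P$_2$, and for $c\le 0$ you construct the same explicit diffeomorphisms ($u=\lambda(x-y)$, $v=\lambda(x^2-y^2)$ with $4\lambda^2=-1/c$ for I$_4$; $u=\lambda y^2$, $v=2\lambda xy^2$ for I$_5$) by prescribing only the images of $X_1$ and $X_3$ and letting the bracket relation $[X_1,X_3]=2X_2$ fix $X_2$. Your explicit Jacobian-determinant checks are a small welcome addition that the paper leaves implicit.
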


\begin{proof}  The case with   $c>0$ provides the primitive class P$_2$  since $Y=0$. If   $c<0$ we look for a local diffeomorphism 
$\phi:(x,y)\in U\subset \mathbb{R}^2_{x\ne y}\mapsto \bar U\subset (u,v)\in\mathbb R^2_{u\neq 0}$, such that $\phi_*$ maps the basis of  I$_4$  into  (\ref{KS2}), that is,
$$
\phi_*(\partial_x+\partial_y)=2x\partial_y,\qquad  \phi_*(x^2\partial_x+y^2\partial_y)=y\partial_x+ (\tfrac32 y^2/{x}-2c\,x^3 )\partial_y .
$$
Then
$$
\left(
\begin{array}{cc}
\frac{\partial u}{\partial x}&\frac{\partial u}{\partial y}\\[4pt]
\frac{\partial v}{\partial x}&\frac{\partial v}{\partial y}\\
\end{array}
\right)\left(\begin{array}{c}
              1\\[4pt]
              1
             \end{array}\right)=
             \left(
             \begin{array}{c}
              0\\[4pt] 2u
             \end{array}\right),\qquad
             \left(
\begin{array}{cc}
\frac{\partial u}{\partial x}&\frac{\partial u}{\partial y}\\[4pt]
\frac{\partial v}{\partial x}&\frac{\partial v}{\partial y}\\
\end{array}
\right)\left(\begin{array}{c}
              x^2\\[4pt]
              y^2
             \end{array}\right)=
             \left(
             \begin{array}{c}
              v\\ [4pt]  \tfrac 32 v^2/u -2{c}\,{u^3}
             \end{array}\right).
$$
Proceeding as in    the proof of Proposition~\ref{prop71} we find that   $u=\ff(x-y)$ and  $v=(x^2-y^2) \ff^\prime$
 for   $\ff:\mathbb{R}\rightarrow\mathbb{R}$. 
  As  now   ${\partial_x v}+{\partial_y v}=2u$ we obtain $2(x-y)\ff^\prime=2\ff$, so that $\ff=\lambda (x-y) $  with $\lambda\in\mathbb{R}\backslash \{0\}$. The remaining equation $x^2\partial_x v +y^2\partial_y v  =\tfrac 32 v^2/u -2{c}\,{u^3}$ implies that 
   $4\lambda^2=-1/ c$, which is consistent with the value $c<0$. Then
$$
u=  {\lambda } (x-y) ,\qquad v=  {\lambda }(x^2-y^2) ,\qquad 4\lambda^2=-1/ c.
$$

In the third possibility  with $c=0$  we require that $\phi_*$     maps the basis of  I$_5$  into  (\ref{KS2}) so  fulfilling
$$
\phi_*(\partial_x )=2x\partial_y,\qquad  \phi_*(x^2\partial_x+xy\partial_y)=y\partial_x+ \tfrac32 y^2/{x} \partial_y,
$$
that is,
$$
\left(
\begin{array}{cc}
\frac{\partial u}{\partial x}&\frac{\partial u}{\partial y}\\[4pt]
\frac{\partial v}{\partial x}&\frac{\partial v}{\partial y}\\
\end{array}
\right)\left(\begin{array}{c}
              1\\[4pt]
              0
             \end{array}\right)=
             \left(
             \begin{array}{c}
              0\\[4pt] 2u
             \end{array}\right),\qquad
             \left(
\begin{array}{cc}
\frac{\partial u}{\partial x}&\frac{\partial u}{\partial y}\\[4pt]
\frac{\partial v}{\partial x}&\frac{\partial v}{\partial y}\\
\end{array}
\right)\left(\begin{array}{c}
              x^2\\[4pt]
              x y
             \end{array}\right)=
             \left(
             \begin{array}{c}
              v\\ [4pt]  \tfrac 32 v^2/u
             \end{array}\right).
$$
By taking into account the proof of Proposition~\ref{prop71}, it is straightforward  to check  that the four PDEs are satisfied for
$u= \lambda y^2$ and $v=2\lambda  x y^2$ with  $\lambda\in\mathbb{R}\backslash \{0\}$.    \end{proof}


\subsection{Complex Riccati equation with $t$-dependent real coefficients}

Let us return to complex Riccati equations with $t$-dependent real coefficients in the form (\ref{Riccati2}). We already showed that   this system has a minimal Lie algebra P$_2\simeq \mathfrak{sl}(2)$. Therefore, it is locally diffeomorphic to the minimal Lie algebra appearing in the above Milne--Pinney  and Kummer--Schwarz   equations whenever the parameter $c>0$. In view of the GKO classification, there exist local diffeomorphisms relating the {\em three}  first-order systems associated with these equations. For instance, we can search for a local diffeomorphism $\phi:(x,y)\in U\subset \mathbb{R}^2_{y\ne 0}\mapsto \bar U\subset (u,v)\in\mathbb R^2_{u\neq 0}$ mapping every system (\ref{Riccati2}) into one of the form (\ref{FirstLie}), e.g.~satisfying that $\phi_*$ maps the basis (\ref{vectRiccati2}) of P$_2$, related to the planar Riccati equation, into the basis (\ref{FirstLieA}) associated with the Milne--Pinney one. By writing  in coordinates
$$
\phi_*(\partial_x)=-x\partial_y,\qquad  \phi_*[(x^2-y^2)\partial_x+2xy\partial_y]=y\partial_x+c/x^3\partial_y ,
$$
we obtain
$$
\left(
\begin{array}{cc}
\frac{\partial u}{\partial x}&\frac{\partial u}{\partial y}\\[4pt]
\frac{\partial v}{\partial x}&\frac{\partial v}{\partial y}\\
\end{array}
\right)\left(\begin{array}{c}
              1\\[4pt]
              0
             \end{array}\right)=
             \left(
             \begin{array}{c}
              0\\[4pt] -u
             \end{array}\right),\qquad
             \left(
\begin{array}{cc}
\frac{\partial u}{\partial x}&\frac{\partial u}{\partial y}\\[4pt]
\frac{\partial v}{\partial x}&\frac{\partial v}{\partial y}\\
\end{array}
\right)\left(\begin{array}{c}
              x^2-y^2\\[4pt]
              2xy
             \end{array}\right)=
             \left(
             \begin{array}{c}
              v\\ [4pt]  {c}/{u^3}
             \end{array}\right).
$$
Similar computations to those performed in the proof of Proposition~\ref{prop71} for the three PDEs 
${\partial_x u}=0$, $\partial_x v =-u$ and $(x^2-y^2) {\partial_x u}+  2xy {\partial_y u}=v$ gives
$u=\lambda/|y|^{1/2}$ and $v=-\lambda x/|y|^{1/2}$ with  $\lambda\in\mathbb{R}\backslash \{0\}$.
Substituting these results  into the remaining equation we find that $\lambda^4=c$ which is consistent with the positive value of $c$.   Consequently, this maps the system (\ref{Riccati2}) into (\ref{FirstLie}) and the solution of the first one is locally equivalent to solutions of the second one.

Summing up, we have explicitly proven that the three Lie--Hamilton  $\mathfrak{sl}(2)$ algebras of the classes  P$_2$, I$_4$ and I$_5$ given in Table 3  cover the following $\mathfrak{sl}(2)$-Lie systems:

\begin{itemize} 
 
\item P$_2$:  Milne--Pinney  and Kummer--Schwarz   equations for $c>0$ as well as complex Riccati equations with $t$-dependent coefficients.

\item I$_4$:  Milne--Pinney  and Kummer--Schwarz   equations for $c<0$.

\item I$_5$:  Milne--Pinney  and Kummer--Schwarz   equations for $c=0$ and
the harmonic oscillator with $t$-dependent frequency.

\end{itemize} 

This means that, only within each class,  they are locally diffeomorphic  and, therefore, there   exists a  
local change of variables mapping  one into another. Thus, for instance, there does not exist any diffeomorphism mapping the Milne--Pinney  and Kummer--Schwarz equations with $c\ne 0$  to the harmonic oscillator. These results also  explain from an algebraic point of view the existence of the known diffeomorphism mapping   Kummer--Schwarz equations to Milne--Pinney equations \cite{LA08} provided that the sign of $c$ is the same in both systems.


  \section{Application to biological models}

In this section  we focus on new applications of the Lie--Hamilton approach  to  Lotka--Volterra-type systems and to a viral infection model. We also consider here the analysis of Buchdahl equations which can be studied through a Lie--Hamiltonian system diffeomorphic to a precise $t$-dependent  Lotka--Volterra system.


 \subsection{Generalised Buchdahl equations}
 We call generalised Buchdahl equations~\cite{Bu64, CSL05, CN10} to the    second-order differential equations
 $$
 \frac{\dd^2 x}{\dd t^2}=a(x)\left(\frac{\dd x}{\dd t}\right)^2+b(t)\frac{\dd x}{\dd t},
 $$
where $a(x)$ and $b(t)$ are arbitrary functions of their respective arguments. In order to analyse whether these equations can be studied through a Lie system,
 we add the variable $y\equiv \dd x/\dd t $ and consider the first-order   system  of differential equations
  \begin{equation}\label{Buchdahl}
\left\{ \begin{aligned}
 \frac{\dd x}{\dd t}&=y,\\
 \frac{\dd y}{\dd t}&=a(x)y^2+b(t)y .
 \end{aligned}\right.
 \end{equation}
 Note that if $(x(t),y(t))$ is a particular solution of this system with $y(t_0)=0$ for a particular $t_0\in\mathbb{R}$, then $y(t)=0$ for every $t\in\mathbb{R}$ and $x(t)=\lambda \in\mathbb{R}$. Moreover, if $a(x)=0$ then the solution of the above system is also trivial. As a consequence, we can restrict ourselves to studying  particular solutions on $\mathbb R^2_{y\neq 0}$ with $a(x)\neq 0$.

 Next let us  prove that (\ref{Buchdahl}) is a Lie system. Explicitly,  this  is associated with the $t$-dependent vector field
 $X_t=X_1+b(t)X_2$, 
 where 
 \begin{equation}\label{VectBuch}
 X_1=y\frac{\partial }{\partial x}+a(x)y^2\frac{\partial}{\partial y},\qquad  X_{2}=y\dfrac{\partial}{\partial y}.
 \end{equation}
Since
$$
 [X_1,X_2]=-X_1,
 $$
these vector fields span a Lie algebra $V$ isomorphic to $\mathfrak{h}_2$ leaving invariant the distribution $\mathcal{D}$ spanned by $Y\equiv X_1$.  Since the rank of $\mathcal{D}^V$ is two,  $V$ is locally diffeomorphic to the imprimitive class ${\rm I}_{14A}$ with $r=1$ and $\eta_1(x)={\rm e}^x$  given in Table~\ref{table3}. This proves for the first time that generalised Buchdahl equations written as the     system (\ref{Buchdahl})  are, in fact, not only a Lie system \cite{CGL11} but a Lie--Hamilton one.

Next by determining a symplectic form obeying $\mathcal{L}_{X_i}\omega=0$, with $i=1,2$ for the vector fields   (\ref{VectBuch}) and  the  generic $\omega$ (\ref{ww}), it can be shown that this reads
\begin{equation*}
\omega=\dfrac{\exp\left(-\int a(x)\dd x\right)}{y}\,\dd x\wedge \dd y,
\end{equation*}
which turns   $\tX$ and $X_2$ into Hamiltonian vector fields with Hamiltonian functions
$$
h_1=y\exp\left(-\int^x a(x')\dd x'\right),\qquad h_2=-\int^x\exp\left(-\int^{x'} a(\bar x)d\bar x\right)dx',
$$
respectively. Note that all the these structures are properly defined on $\mathbb R^2_{y\neq 0}$ and   hold
$\{h_{1},h_{2}\}=h_{1}.
$
Consequently, the system (\ref{Buchdahl}) has a $t$-dependent Hamiltonian given by
$
h_t=h_{1}+b(t) h_{2}.
$


 \subsection{Time-dependent Lotka--Volterra systems}
 Consider the specific $t$-dependent Lotka--Volterra system~\cite{Tr96,JHL05} of the form
  \begin{equation}\label{LV}
  \left\{
 \begin{aligned}
 \frac{\dd x}{\dd t}&=ax-g(t)(x-ay)x,\\
 \frac{\dd y}{\dd t}&=ay-g(t)(bx-y)y,
 \end{aligned}\right.
\end{equation}
where $g(t)$ is a $t$-dependent function  representing the variation of the seasons and $a,b$ are certain real parameters
describing the interactions among the species. We hereafter focus on the case $a\neq 0$, as otherwise the above equation becomes, up to a $t$-reparametrization,
an autonomous differential equation that can easily be integrated. We also assume $g(t)$ to be a non-constant function and we restrict ourselves to particular solutions on $\mathbb{R}_{x,y\ne 0}=\{(x,y)|x\neq 0, y\neq 0\}$ (the remaining ones can be trivially  obtained).

Let us prove that (\ref{LV}) is a Lie system and that for some values of the real parameters $a\ne 0$ and $b$ this is   a Lie--Hamilton system as well. This system describes the integral curves of the $t$-dependent vector field $
X_t=X_1+g(t)X_2$
 where
 $$
 X_1=ax\frac{\partial}{\partial x}+ay\frac{\partial}{\partial y},\qquad X_2=-(x-ay)x\frac{\partial}{\partial x}-(bx-y)y\frac{\partial}{\partial y} ,
 $$
satisfy 
$$
[X_1,X_2]=a X_2,\qquad a\ne 0.
$$
 Hence, $X_1$ and $X_2$ are the generators of  a Lie algebra $V$ of vector fields isomorphic to $\mathfrak{h}_2$  leaving invariant the distribution $\mathcal{D}$  on $\mathbb{R}_{x,y\ne 0}$ spanned by $Y\equiv X_2$.
 According to the values of the parameters $a\ne 0$ and $b$ we find that:
 
 \begin{itemize}
 \item When  $a=b=1$,  the rank of $\mathcal{D}^V$  on the domain of $V$ is one. In view of Table \ref{table1}   the Lie algebra $V$ is thus isomorphic to I$_2$ and, by taking into account Table \ref{table3},  we conclude that $X$ is a Lie system, but not a Lie--Hamilton one.
 
 \item Otherwise,  the rank of $\mathcal{D}^V$ is two, so that this Lie algebra is locally diffeomorphic to I$_{14A}$ with $r=1$ and $\eta_1={\rm e}^{a x}$ given  in Table \ref{table3} and, consequently, $X$ is a Lie--Hamilton system.  As a straightforward consequence, when $a=1$ and $b\ne 1$ the system (\ref{LV})  is locally diffeomorphic to the  
  generalised Buchdahl equations (\ref{Buchdahl}).

 \end{itemize}

Let us now derive a symplectic structure (\ref{ww}) turning the elements of $V$ into Hamiltonian vector fields by solving the system of PDEs $\mathcal{L}_{X_1}\omega=\mathcal{L}_{X_2}\omega=0$. The first condition reads in local coordinates
$$
\mathcal{L}_{X_1}\omega=(X_1f+2af)\dd x\wedge \dd y=0.
$$
So we obtain that $f=F(x/y)/y^2$ for any function $F:\mathbb{R}\rightarrow \mathbb{R}$.
By imposing that $\mathcal{L}_{X_2}\omega=0$, we find  
$$
\mathcal{L}_{X_2}\omega=\left[(b-1)x^2+(a-1)yx\right]\frac{\partial f}{\partial x}+f\left[(b-2)x+ay\right]=0.
$$
Notice that, as expected,   $f$ vanishes when $a=b=1$. We study separately the remaining cases: i) $a\ne 1$ and  $b\ne 1$; ii) $a=1$ and $b\ne 1$; and iii) $a\ne 1$ and $b= 1$.

When  both $a,b\ne 1$  we write $f=F(x/y)/y^2$, thus obtaining that $\omega$ reads, up to a non-zero multiplicative constant, as
$$
\omega=\frac 1{y^2}\left(\frac{x}{y}\right)^{\frac a{1-a}}\left(1-a+(1-b)\frac xy\right)^{\frac 1{a-1}+\frac 1{b-1}}\dd x\wedge \dd y,\qquad a,b\neq 1.
$$
From this, we   obtain the following Hamiltonian functions for $X_1$ and $X_2$:
$$
\begin{gathered}
h_{1}=a (1-b)^{1+\frac{1}{a-1}+\frac{1}{b-1}}\left(\dfrac{x}{y}\right)^{\frac{1}{b-1}}\,_2F_1\left(
\dfrac{1}{1-b},\dfrac{1}{1-a}+\dfrac{1}{1-b};\dfrac{b-2}{b-1};\dfrac{y(1-a)}{x(b-1)}
\right),\\
\qquad h_{2}=-y \left(\dfrac{x}{y}\right)^{\frac{1}{1-a}}\left[
(1-a)+(1-b)\dfrac{x}{y}
\right]^{\frac{1}{a-1}+\frac{1}{b-1}+1},
\end{gathered}
$$
where $\,_2F_1(\alpha,\beta, \gamma,\zeta)$ stands for the well-known hypergeometric function $\,_2F_1(\alpha,\beta,\gamma,z)=\sum_{n=0}^\infty [(\alpha)_n(\beta)_n/(\gamma)_n]z^n/n!$ with $(\delta)_n=\Gamma(\delta+n)/\Gamma(\delta)$ being the rising Pochhmaler symbol. 
As expected, $\{h_1,h_2\}_\omega=-a h_2$.

When $a=1$ and $b\neq 1$, the symplectic form for $X$ becomes
$$
\omega=\frac 1{y^2}\exp\left(\frac{y-(b-2)x \ln|x/y|}{(b-1)x}\right)\dd x\wedge \dd y, \qquad b \neq 1,
$$
and the Hamiltonian functions for $X_1$ and $X_2$ read
$$
\begin{gathered}
h_{1}=-\left(\dfrac{1}{1-b}\right)^{\frac{1}{b-1}}\Gamma\left(
\dfrac{1}{1-b},\dfrac{y}{x(1-b)} 
\right),\qquad 
h_{2}=(b-1)x\left(\dfrac{x}{y}\right)^{\frac{1}{b-1}} \text{exp}\left(
\dfrac{y}{(b-1)x}
\right),
\end{gathered}
$$
with $\Gamma(u,v)$ being the incomplete Gamma function,  which satisfy  $\{h_{1}, h_{2}\}_\omega=-h_{2}$.

Finally, when $b=1$ and $a\neq 1$, we have
$$
\omega=\dfrac{1}{y^{2}}\left(
\dfrac{x}{y}
\right)^{\frac{a}{1-a}}\text{exp}\left(
\dfrac{x}{y (a-1)}
\right)\text{d} x \wedge \text{d}y ,\qquad a\neq 1.
$$
Then, the Hamiltonian functions for $X_1$ and $X_2$ are, in this order, 
$$
\begin{gathered}
h_{1}=a(1-a)^{\frac{1}{1-a}}\, \Gamma \left(\dfrac{1}{1-a},\dfrac{x}{y(1-a)}\right),\quad
h_{2}=(a-1)y\,\text{exp}\left(
\dfrac{x}{y(a-1)}
\right)\left(
\dfrac{x}{y}
\right)^{\frac{1}{1-a}} .
\end{gathered}
$$
Indeed,  $\{h_{1}, h_{2}\}_\omega=-a h_{2}$.


\subsection{Predator-prey Lie systems}

The system of differential equations \cite{LlibreValls2}
\begin{equation} \label{SLV}
\left\{
\begin{aligned}
\frac{\dd x}{\dd t}&= b(t)x + c(t)y +d(t)x^{2}+e(t)x y + f(t)y^{2},\\
\frac{\dd y}{\dd t}&= y,\\
\end{aligned}\right.
\end{equation}
where $b(t), c(t), d(t), e(t)$ and $f(t)$ are arbitrary $t$-dependent functions, is an interacting species model that belongs to the class of quadratic-linear polynomial systems with a unique singular point at the origin \cite{LlibreValls2}.

In general, this predator-prey system is not a Lie system. For instance, consider the particular system associated to the $t$-dependent vector field
$$
X_t=d(t) X_1+ e(t)X_2+X_3 ,\qquad X_1=x^2\frac{\partial}{\partial x}   ,\qquad X_2=xy\frac{\partial}{\partial x},\qquad X_3=y\frac{\partial}{\partial y},
$$
where $d(t)$ and $e(t)$ are non-constant and non-proportional functions. Notice that $V^X$ contains $X_1,X_2$ and their successive  Lie brackets, i.e. the vector fields
$$
\stackrel{n-{\rm times}}{\overbrace{[X_2,\ldots [X_2}},X_1]\ldots]= x^2 y^n \frac{\partial}{\partial x} \equiv Y_n .
$$
Hence, $[X_2,Y_n]=Y_{n+1}$ and  the family of vector fields $X_1,X_2,X_3,Y_1,Y_2,\dots$
  span an infinite-dimensional family of linearly independent vector fields over $\mathbb{R}$ so that $X$ is not a Lie system.

 Hereafter we analyse the cases of (\ref{SLV}) with $d(t)=e(t)=0$ which provides   predator--prey systems that are Lie systems. We call them {\it predator-prey Lie systems}; these are    related to the system of differential equations
  \cite{LlibreValls2}
\begin{equation} \label{SLV2}
\left\{
\begin{aligned}
\frac{\dd x}{\dd t}&=b(t)x+ c(t)y +f(t) y^{2},\\
\frac{\dd y}{\dd t}&=y.
\end{aligned}\right.
\end{equation}
Note that if a solution $(x(t),y(t))$ of the above system is such that $y(t_0)=0$ for a certain $t_0$, then $y(t)=0$ for all $t\in\mathbb{R}$ and the corresponding $x(t)$ can be then easily obtained. In view of this, we   focus on those particular solutions within $\mathbb R^2_{y\neq 0}$. The   system (\ref{SLV2})  is    associated with the $t$-dependent vector field on $\mathbb R^2_{y\neq 0}$  of the form   
$X_{t}=X_1+b(t)X_{2}+c(t) X_{3}+f(t) X_{4}$, where
\begin{equation*}
X_1=y\dfrac{\partial}{\partial y},\qquad X_2=x \dfrac{\partial}{\partial x},\qquad X_3=y\dfrac{\partial}{\partial x}, \qquad X_4=y^{2}\dfrac{\partial}{\partial x} ,
\end{equation*}
satisfy the commutation rules
$$
\begin{array}{lll}
[X_1,X_2]=0,  &\qquad [X_1,X_3]=X_3,&\qquad [X_1,X_4]=2 X_4,\\[2pt]
[X_2,X_3]=-X_3,&\qquad [X_2,X_4]=-X_4,&\qquad  [X_3,X_4]=0 .
\end{array}
$$
Note that $V\simeq V_1\ltimes V_2$ where $V_1=\langle X_1,X_2\rangle\simeq\mathbb{R}^2$ and $V_2=\langle X_3,X_4\rangle\simeq \mathbb{R}^2$. In addition, the distribution $\mathcal{D}$ spanned by $Y\equiv \partial_
x$ is invariant under the action of the above vector fields  so, $V$ is imprimitive. In view of Table \ref{table1}, we find that (\ref{SLV2}) is a Lie system corresponding to the imprimitive class I$_{15}$ with $V\simeq \mathbb{R}^2\ltimes\mathbb{R}^2$. By taking into account our classification given in Table 3, we know that this is
 not a Lie algebra of vector fields with respect to any symplectic structure.


\subsection{Predator-prey Lie--Hamilton systems}

We now   consider a subcase of (\ref{SLV2}) that provides a Lie--Hamilton system. In view of Table \ref{table3}, the Lie subalgebra $\mathbb{R}\ltimes \mathbb{R}^2$ of $V$ is a Lie algebra of Hamiltonian vector fields with respect to a symplectic structure, that is, ${\rm I}_{14}\subset {\rm I}_{15}$ as shown in Table 2. So, it is natural consider the restriction of (\ref{SLV2}) to
\begin{equation} \label{SLV3}
\left\{
\begin{aligned}
\frac{\dd x}{\dd t}&=b\, x+ c(t)y +f(t) y^{2},\\
\frac{\dd y}{\dd t}&=y,
\end{aligned}\right.
\end{equation}
where   $b\in \mathbb{R}\backslash\{1,2\}$ and $c(t)$, $f(t)$ are still   $t$-dependent functions.
The system
  (\ref{SLV3})  is    associated   to the $t$-dependent vector field
 $X_{t}=X_1+c(t) X_{2}+f(t) X_{3}$ on $\mathbb {R}^2_{y\ne 0}=\{(x,y)\in \mathbb{R}\mid y\neq 0\}$, where
\begin{equation}\label{vecsA}
X_1=b\,   x \dfrac{\partial}{\partial x}+y\dfrac{\partial}{\partial y},\qquad X_2=y\dfrac{\partial}{\partial x}, \qquad X_3=y^{2}\dfrac{\partial}{\partial x}
\end{equation}
satisfy the commutation relations
\begin{equation}\label{algG}
[\tX, X_{2}]=(1-b)X_{2}, \qquad  [\tX, X_{3}]=(2-b)X_{3}, \qquad [X_{2}, X_{3}]=0 .
\end{equation}
Therefore, the vector fields (\ref{vecsA})  generate a Lie algebra $V\simeq V_1\ltimes V_2$, where $V_1=\langle X_1 \rangle\simeq\mathbb{R}$ and $V_2=\langle X_2,X_3\rangle\simeq \mathbb{R}^2$. The domain of $V$ is $\mathbb {R}^2_{y\ne 0}$ and the rank of $\mathcal{D}^V$ is two. Moreover, the distribution $\mathcal{D}$ spanned by the vector field $Y\equiv \partial_x$ is stable under the action of the elements of $V$, which turns $V$ into an imprimitive Lie algebra. So, $V$ must be locally diffeomorphic to the imprimitive Lie algebra I$_{14}$ displayed in Table 1 for $r=2$. We already know that the class I$_{14}$ is a Lie algebra of Hamiltonian vector fields with respect to a   symplectic structure.

By imposing $\mathcal{L}_{X_i}\omega=0$ for  the vector fields (\ref{vecsA}) and  the  generic symplectic form (\ref{ww}), it can be shown that $\omega$ reads
\begin{equation*}
\omega=\dfrac{\dd x\wedge \dd y}{y^{b+1}} ,
\end{equation*}
which turns    (\ref{vecsA}) into Hamiltonian vector fields with Hamiltonian functions
$$
h_1=-\frac{x}{y^b},\qquad h_2=\frac  {   y^{1-b}}{  1-b},\qquad h_3=\frac  {   y^{2-b}}{  2-b}, \qquad b\in \mathbb{R}\backslash\{1,2\}.
$$
Note that all the above structures are properly defined on $\mathbb {R}^2_{y\ne 0}$.
The above Hamiltonian  functions span a three-dimensional Lie algebra with commutation relations
$$
\{h_{1},h_{2}\}_\omega=(b-1)h_{2}  , \qquad  \{h_{1}, h_{3}\}_\omega=(b-2)h_{3} ,\qquad  \{h_{2},h_{3}\}_\omega=0 .
$$
Consequently, $V$ is locally diffeomorphic  to the imprimitive Lie algebra I$_{14A}$ of Table 3 such that the Lie--Hamilton algebra is $\mathbb{R}\ltimes \mathbb{R}^{2}$ (also  $(\mathbb{R}\ltimes \mathbb{R}^{2})\oplus\mathbb{R}$).
The system (\ref{SLV3}) has a $t$-dependent
Hamiltonian
\begin{equation}\nonumber
h_t=b \,h_{1}+c(t) h_{2}+d(t)h_{3}=-b\dfrac{x}{y^b}+c(t) \frac  {   y^{1-b}}{  1-b} + d(t) \frac  {   y^{2-b}}{  2-b}.
\end{equation}

We point out that the cases of  (\ref{SLV3}) with either $b=1$ or $b=2$     also lead to Lie--Hamilton systems,  but  now belonging, both of them, to the   class I$_{14 B}$ of Table 3 as a central generator is  required. For instance if $b=1$, the commutation relations (\ref{algG}) reduce to
\begin{equation}\label{algG2}
[\tX, X_{2}]= 0 , \qquad  [\tX, X_{3}]=X_{3}, \qquad [X_{2}, X_{3}]=0 ,
\end{equation}
while    the symplectic form and the Hamiltonian functions are found to be
$$
\omega=\dfrac{\dd x\wedge \dd y}{y^{2}} ,\qquad h_1=-\frac{x}{y},\qquad h_2=\ln y ,\qquad h_3= y ,
$$
which together with $h_0=1$ close the   (centrally extended)  Lie--Hamilton algebra 
$\overline{\mathbb{R} \ltimes \mathbb{R}^2 }$, that is,
\begin{equation}\label{algG3}
\{h_{1},h_{2}\}_\omega=-h_0 , \qquad  \{h_{1}, h_{3}\}_\omega=-h_{3} ,\qquad  \{h_{2},h_{3}\}_\omega=0 ,\qquad  \{h_{0},\cdot\}_\omega=0 .
\end{equation}
A similar result can be found for $b=2$.


 \subsection{A primitive model of viral infection}

Finally, let us consider a simple viral infection model given by \cite{EK05}
\begin{equation}\left\{
\begin{aligned}
\dfrac{{\rm d}x}{{\rm d}t}&=(\alpha(t)-g(y))x,\\
\dfrac{{\rm d}y}{{\rm d}t}&=\beta(t) x y-\gamma(t) y,
\end{aligned}\label{VS1}\right.
\nonumber
\end{equation}
where $g(y)$ is a real positive function taking into account the power of the infection. Note that if a particular solution satisfies $x(t_0)=0$ or $y(t_0)=0$ for a $t_0\in\mathbb{R}$, then
$x(t)=0$ or $y(t)=0$, respectively, for all $t\in\mathbb{R}$. As these cases are trivial, we restrict
ourselves to studying particular solutions within $\mathbb R^2_{x,y\ne 0}=\{(x,y)\in \mathbb{R}^2\mid x\neq 0,  y\neq 0\}$.

The simplest possibility consists in setting  $g(y)=\delta$, where $\delta$ is a constant.
 Then, (\ref{VS1})    describes the integral curves of the $t$-dependent vector field
$X_t=(\alpha(t)-\delta)X_{1}+\gamma(t) X_{2} +\beta(t) X_{3}$, 
on $\mathbb R^2_{x,y\ne 0}$, where  the vector fields
\begin{equation*}
X_1=x\dfrac{\partial}{\partial x}, \qquad X_2=- y \dfrac{\partial}{\partial y}, \qquad X_3=x y \dfrac{\partial}{\partial y} ,
\end{equation*}
satisfy the   relations (\ref{algG2}).
 So, $X$ is a Lie system related to a Vessiot--Guldberg Lie algebra $V\simeq \mathbb{R}\ltimes\mathbb{R}^2$ where
  $ \langle X_1 \rangle\simeq\mathbb{R}$ and $ \langle X_2,X_3\rangle\simeq \mathbb{R}^2$. The distribution $\mathcal{D}^V$ has rank two on $\mathbb R^2_{x,y\ne 0}$. Moreover, $V$ is imprimitive, as the distribution $\mathcal{D}$ spanned by $Y\equiv \partial_y$ is invariant under the action of vector fields of $V$. Thus $V$ is locally diffeomorphic to the
     imprimitive Lie algebra I$_{14 B}$  for $r=2$ and, in view of Table \ref{table3}, the system $X$ is
       a Lie--Hamilton one.

 Next we   obtain that $V$ is a Lie algebra of Hamiltonian vector fields with respect to the symplectic form
 \begin{equation*}
\omega=\dfrac{\dd x \wedge \dd y}{x y}.
\end{equation*}
Then, the vector fields $X_1$, $X_2$ and $X_3$ have Hamiltonian functions
$
h_1=\ln y, h_2=\ln x, h_3=-x,
$
which along $h_0=1$ close the   relations (\ref{algG3})  defining the  Lie--Hamilton algebra 
$\overline{(\mathbb{R} \ltimes \mathbb{R}^2 ) }$. 
If we assume $V^X=V$, the $t$-dependent Hamiltonian $
h_t=(\alpha(t) -\delta)h_1+\gamma(t) h_2+\beta(t)h_3
$
gives rise to a Lie--Hamiltonian structure $(\mathbb R^2_{x,y\ne 0},\omega, h)$ for $X$.


\section{Discussion and open problems}
We have determined which Lie algebras of the GKO  classification correspond to Hamiltonian vector fields with respect to a Poisson structure around a generic point. We found that only eleven    of the initial 28 classes of finite-dimensional Lie algebras on the plane obtained by GKO are of this type. In turn, these classes give rise to {\em  twelve} families of Lie algebras of Hamiltonian vector fields. This led to classifing Lie--Hamilton systems on the plane.

To illustrate our results, we have studied some new Lie and  Lie--Hamilton systems of interest that belong to the classes P$_2$,  I$_2$,  I$_4$,  I$_5$,  I$_{14A}$, I$_{14B}$ and I$_{15}$. In particular, our classification has been used to show that Kummer--Schwarz, Milne--Pinney equations (both with $c>0$) and complex Riccati equations with $t$-dependent coefficients are related to
the same Lie--Hamilton algebra P$_2$, a fact which was used to explain the existence of a local diffeomorphism that maps each one of these systems into another. We also showed that
the  $t$-dependent harmonic oscillator, arising from Milne--Pinney equations when $c=0$,    corresponds to class   I$_5$ and   this can  only    be related through   diffeomorphisms  to the 
 Kummer--Schwarz equations with $c=0$, but not with complex Riccati equations.

The new Lie and  Lie--Hamilton systems analised in this work contribute to enlarge the applications of Lie systems. We still aim to identify other relevant models through Lie--Hamilton systems and we plan to derive superposition rules for all Lie--Hamilton systems on the plane by applying the algebraic method devised in \cite{BCHLS13Ham}, which makes use of   Casimir functions and Poisson coalgebra structures. Additionally, there are plenty of Lie systems on the plane with polynomial quadratic coefficients. We plan to study the existence and the maximal number of their cyclic limits so as to investigate the so-called second Hilbert's number ${\rm H}(2)$ for these systems \cite{Li03}. This could be a first step to analize the XVI Hilbert's problem through our Lie techniques. Work on these lines is currently in progress.

\section*{Acknowledgements}
J. de Lucas and C. Sard\'on acknowledge partial financial support by
research projects  MTM--2009--11154 (MEC) and the Polish National Science Centre grant
under the contract number DEC-2012/04/M/ST1/00523.
A.~Ballesteros, A.~Blasco and F.J.~Herranz acknowledge partial financial support from
  the Spanish MINECO  under grant    MTM2010-18556 (with EU-FEDER support). C. Sard\'on acknowledges a fellowship provided by the
University of Salamanca.


\end{document}